\def\llncs{0}
\def\fullpage{1}
\def\anonymous{0}
\def\authnote{0}
\def\notxfont{0}
\def\submission{0}
\def\cameraready{0}
\def\llncs{1}
\definecolor{darkblue}{rgb}{0,0,0.6}
\definecolor{darkgreen}{rgb}{0,0.5,0}
\definecolor{maroon}{rgb}{0.5,0.1,0.1}
\definecolor{dpurple}{rgb}{0.2,0,0.65}
\DeclareMathAlphabet{\mathpzc}{OT1}{pzc}{m}{it}
\newtheoremstyle{thicktheorem}%
{\topsep}
{\topsep}
{\itshape}{}%
{\bfseries}%
{.}
{ }%
{\thmname{#1}\thmnumber{ #2}%
		\thmnote{ (#3)}%
}
\newtheoremstyle{remark}
{\topsep}
{\topsep}
	{}
	{}
	{}
	{.}
	{ }
	{\textit{\thmname{#1}}\thmnumber{ #2}
			\thmnote{ (#3)}%
	}
	\theoremstyle{thicktheorem}
	\newtheorem{theorem}{Theorem}[section]
	\newtheorem{lemma}[theorem]{Lemma}
	\newtheorem{definition}[theorem]{Definition}
	\theoremstyle{remark}
	\newtheorem{remark}[theorem]{Remark}
\Crefname{MyClaim}{Claim}{Claims}
	\crefname{theorem}{Theorem}{Theorems}
	\crefname{assumption}{Assumption}{Assumptions}
	\crefname{construction}{Construction}{Constructions}
	\crefname{corollary}{Corollary}{Corollaries}
	\crefname{conjecture}{Conjecture}{Conjectures}
	\crefname{definition}{Definition}{Definitions}
	\crefname{exmaple}{Example}{Examples}
	\crefname{experiment}{Experiment}{Experiments}
	\crefname{counterexample}{Counterexample}{Counterexamples}
	\crefname{lemma}{Lemma}{Lemmata}
	\crefname{observation}{Observation}{Observations}
	\crefname{proposition}{Proposition}{Propositions}
	\crefname{remark}{Remark}{Remarks}
	\crefname{claim}{Claim}{Claims}
	\crefname{fact}{Fact}{Facts}
	\crefname{note}{Note}{Notes}
 \crefname{appendix}{App.}{Appendices}
 \crefname{section}{Sec.}{Sections}
\renewcommand*{\backref}[1]{}
	\renewcommand*{\backref}[1]{(Cited on page~#1.)}
\newcommand{\mor}[1]{}
\newcommand{\minki}[1]{}
\newcommand{\takashi}[1]{}
\newcommand{\mor}[1]{$\ll$\textsf{\color{red} Tomoyuki: { #1}}$\gg$}
\newcommand{\takashi}[1]{$\ll$\textsf{\color{orange} Takashi: { #1}}$\gg$}
\newcommand{\minki}[1]{$\ll$\textsf{\color{darkgreen} Minki: { #1}}$\gg$}
\newcommand{\Nseq}{N\text{-}\mathsf{seq}}
\newcommand{\primitive}{\mathtt{P}}
\newcommand{\good}{\mathsf{Good}}
\newcommand{\verygood}{\mathsf{VeryGood}}
\newcommand{\BQP}{\mathbf{BQP}}
\newcommand{\NP}{\mathbf{NP}}
\newcommand{\seteq}{\coloneqq}
\newcommand{\cA}{\mathcal{A}}
\newcommand{\cB}{\mathcal{B}}
\newcommand{\cC}{\mathcal{C}}
\newcommand{\cP}{\mathcal{P}}
\newcommand{\cR}{\mathcal{R}}
\newcommand{\cS}{\mathcal{S}}
\newcommand{\cV}{\mathcal{V}}
\newcommand{\cX}{\mathcal{X}}
\newcommand{\cY}{\mathcal{Y}}
\def\makeuppercase#1{
\expandafter\newcommand\csname tl#1\endcsname{\widetilde{#1}}
}
\def\makelowercase#1{
\expandafter\newcommand\csname tl#1\endcsname{\widetilde{#1}}
}
\newcommand{\regB}{\mathbf{B}}
\newcommand{\regA}{\mathbf{A}}
\newcommand{\secp}{\lambda}
\newcommand{\A}{\entity{A}}
\newcommand{\B}{\entity{B}}
\newcommand*{\entity}[1]{\mathcal{#1}}
\newenvironment{boxfig}[2]{\begin{figure}[#1]\fbox{\begin{minipage}{0.97\linewidth}
                        \vspace{0.2em}
                        \makebox[0.025\linewidth]{}
                        \begin{minipage}{0.95\linewidth}
            {{
                        #2 }}
                        \end{minipage}
                        \vspace{0.2em}
                        \end{minipage}}}{\end{figure}}
\newcommand{\bit}{\{0,1\}}
\newcommand{\negl}{{\mathsf{negl}}}
\newcommand{\poly}{{\mathrm{poly}}}
\DeclareRobustCommand
\title{Quantum Advantage from One-Way Functions}
\author{\empty}\institute{\empty}
\author{}
\author{
	Tomoyuki Morimae\inst{1} \and Takashi Yamakawa\inst{2,3,1}
}
\institute{
	Yukawa Institute for Theoretical Physics, Kyoto University, Kyoto, Japan \and NTT Social Informatics Laboratories, Tokyo, Japan
  \and NTT Research Center for Theoretical Quantum Information, Atsugi, Japan
}
\author[1]{Tomoyuki Morimae}
\author[2,3,1]{Takashi Yamakawa}
\affil[1]{{\small Yukawa Institute for Theoretical Physics, Kyoto University, Kyoto, Japan}\authorcr{\small tomoyuki.morimae@yukawa.kyoto-u.ac.jp}}
\affil[2]{{\small NTT Social Informatics Laboratories, Tokyo, Japan}\authorcr{\small takashi.yamakawa.ga@hco.ntt.co.jp}}
\affil[3]{{\small NTT Research Center for Theoretical Quantum Information, Atsugi, Japan}}
\date{}
\begin{document}

\maketitle

\begin{abstract}
Is quantum computing truly faster than classical computing?
Demonstrating unconditional quantum computational advantage lies beyond the reach of the current complexity theory, and therefore
we have to rely on some complexity assumptions.
While various results on quantum advantage have been obtained, all 
necessitate
relatively stronger or less standard assumptions
in complexity theory or classical cryptography.
In this paper, we show quantum advantage based on several fundamental assumptions, specifically relying solely on the existence of classically-secure one-way functions. 
Given the fact that one-way functions are necessary for almost all classical cryptographic primitives,
our findings yield a surprising implication:
{\it if there is no quantum advantage, then there is no classical cryptography!}
More precisely, we introduce {\it inefficient-verifier proofs of quantumness} (IV-PoQ), and construct it from statistically-hiding and computationally-binding classical bit commitments.
IV-PoQ is an interactive protocol between a verifier and a quantum polynomial-time prover consisting of two phases.
In the first phase, the verifier is classical probabilistic polynomial-time, and it interacts 
with the quantum polynomial-time prover over a classical channel.
In the second phase, the verifier becomes inefficient, and makes its 
decision based on the transcript of the first phase. If the quantum prover is honest, the inefficient verifier accepts with high probability, but any classical probabilistic polynomial-time malicious prover only has a small probability of being accepted by the inefficient verifier.
In our construction, the inefficient verifier can be a classical deterministic polynomial-time algorithm that queries an 
$\mathbf{NP}$ oracle.
Our construction demonstrates the following results based on the known constructions of statistically-hiding and computationally-binding commitments from one-way functions or distributional collision-resistant hash functions:
\begin{itemize}
\item
If one-way functions exist, then IV-PoQ exist.
\item
If distributional collision-resistant hash functions exist (which exist if hard-on-average problems in $\mathbf{SZK}$ exist),
then constant-round IV-PoQ exist. 
\end{itemize}
We also demonstrate quantum advantage based on worst-case-hard assumptions.
We define {\it auxiliary-input IV-PoQ} (AI-IV-PoQ) that only require 
that for any malicious prover, there exist infinitely many auxiliary inputs under which the prover cannot cheat.   
We construct AI-IV-PoQ from an auxiliary-input version of commitments in a similar way, showing that
\begin{itemize}
\item
If auxiliary-input one-way functions exist (which exist if $\mathbf{CZK}\not\subseteq\mathbf{BPP}$), then AI-IV-PoQ exist.
\item 
If auxiliary-input collision-resistant hash functions exist (which is equivalent to $\mathbf{PWPP}\nsubseteq \mathbf{FBPP}$) 
or $\mathbf{SZK}\nsubseteq \mathbf{BPP}$,
then constant-round AI-IV-PoQ exist.
\end{itemize}
Finally, we also show that some variants of PoQ can be constructed from quantum-evaluation one-way functions (QE-OWFs), which are similar to classically-secure classical one-way functions except that the evaluation algorithm is not classical but quantum.
QE-OWFs appear to be weaker than classically-secure classical one-way functions, and therefore it demonstrates quantum advantage based on assumptions even weaker than one-way functions.
\end{abstract}

\ifnum\submission=1
\else
\clearpage
\newpage
\setcounter{tocdepth}{2}
\tableofcontents
\fi

\newpage

\section{Introduction}
\label{sec:introduction}
Is quantum computing truly faster than classical computing? Demonstrating unconditional quantum computational advantage lies beyond the reach of the current complexity theory,
and therefore we have to rely on some complexity assumptions. While various results on quantum
advantage have been obtained, all necessitate relatively stronger or less standard assumptions in
complexity theory or classical cryptography.\footnote{Another important goal of quantum computational advantage is experimental implementation. 
In such a scenario,
priority would be given to the experimental feasibility (such as simplicity of quantum circuits, etc.) rather than the reliability of underlying complexity assumptions. 
Our interest in this paper is purely theoretical one, and our objective is to identify the weakest complexity assumption for quantum advantage assuming any polynomial-time quantum computing is possible.}
Let us first summarize previous approaches. (See also \cref{table:QA}.)

\paragraph{\bf Approach 1: Sampling.}
One approach to demonstrate quantum advantage is
the sampling-based one.
In the sampling-based quantum advantage, quantum polynomial-time (QPT) algorithms can
sample from certain probability distributions but no classical probabilistic polynomial-time (PPT) algorithm can.
A great merit of the approach is that
relatively simple quantum computing models are enough, such as the Boson Sampling model~\cite{STOC:AarArk11}, 
the IQP model~\cite{BreJozShe10}, 
the random circuit model~\cite{NatPhys:BFNV19},
and the one-clean-qubit model~\cite{FKMNTT18}.\footnote{The Boson Sampling model is a quantum computing model that uses non-interacting
bosons, such as photons. The IQP (Instantaneous Quantum Polytime) model is a quantum computing model
where only commuting quantum gates are used.
The random circuit model is a quantum computing model where each gate is randomly chosen.
The one-clean-qubit model is a quantum computing model where the input is
$|0\rangle\langle0|\otimes\frac{I^{\otimes m}}{2^m}$.}
Output probability distributions of these restricted quantum computing models
cannot be sampled by any PPT algorithm within a constant multiplicative error\footnote{We say that the output probability distribution of a quantum algorithm is sampled by a classical
algorithm within a constant multiplicative error $\epsilon$ if
$|q_z-p_z|\le\epsilon p_z$ is satisfied for all $z$, where $q_z$ is the probability that
the quantum algorithm outputs the bit string $z$, and $p_z$ is the probability that
the classical algorithm outputs the bit string $z$.}
unless the polynomial-time
hierarchy collapses to the third~\cite{STOC:AarArk11,BreJozShe10} or the second level~\cite{FKMNTT18}.\footnote{\cite{TD04} previously showed that output probability distributions
of constant-depth quantum circuits
cannot be sampled classically unless $\mathbf{BQP}\subseteq\mathbf{AM}$. Their assumption can be easily improved to the assumption that the polynomial-time hierarchy
does not collapse to the second level.}
The assumption that the polynomial-time hierarchy does not collapse is 
a widely-believed assumption in classical complexity theory, but one disadvantage of these results is that
the multiplicative-error sampling is unrealistic.
The requirement of the multiplicative-error sampling can be relaxed to that of the constant additive-error 
sampling~\cite{STOC:AarArk11,BreMonShe16,Morimae17,NatPhys:BFNV19},\footnote{We say that the output probability distribution of a quantum algorithm is sampled by a classical
algorithm within a constant additive error $\epsilon$ if
$\sum_z|q_z-p_z|\le\epsilon$ is satisfied, where $q_z$ is the probability that
the quantum algorithm outputs the bit string $z$, and $p_z$ is the probability that
the classical algorithm outputs the bit string $z$.}
but the trade-off is that
the underlying classical complexity assumptions become less standard:
some new assumptions about average-case $\verb|#|\mathbf{P}$-hardness of some problems,
which were not studied before, 
have to be introduced.

Another disadvantage of the sampling-based approach is that it is not known to be verifiable.
For the multiplicative-error case, we do not know how to verify quantum advantage even with
a computationally-unbounded verifier.
Also for the additive-error case, we do not know how to verify the quantum advantage efficiently.
(For example, there is a negative result that suggets that exponentially-many samples are necessary to verify
the correctness of the sampling~\cite{HKEG19}.)
At least, we can say that
if there exists a sampling-based quantum advantage in the additive-error case,
there exists an inefficiently-verifiable
quantum advantage for a certain search problem~\cite{Aar14}.\footnote{\cite[Theorem 21]{Aar14} 
showed that if 
there exists an additive-error sampling problem that is quantumly easy but classically hard,
then there exists a search problem that is quantumly easy but classically hard.
The relation of the search problem is verified inefficiently.
Note that the search problem depends on the time-complexity of the classical adversary, and therefore 
it is incomparable to our (AI-)IV-PoQ.}
\if0
footnote{Note that this is true only for the additive-error sampling. Quantum advantage in the multiplicative-error setting does not imply $\mathbf{SampBQP}\neq \mathbf{SampBPP}$ (because these classes are defined w.r.t. additive errors). In particular, it is not known that $\mathbf{SampBQP}\neq \mathbf{SampBPP}$ follows from non-collapsing of PH (and there is a negative evidence for that \cite{CCC:AarChe17}: there is an oracle relative to which $\mathbf{SampBQP}=\mathbf{SampBPP}$ and PH is infinite).}
\fi

\paragraph{\bf Approach 2: Search problems.}
Some inefficiently-verifiable search problems that exhibit quantum advantage have been introduced.
For example, 
for the random circuit model,
\cite{CCC:AarChe17,AarGun19} introduced so-called Heavy Output Generation (HOG)
and Linear Cross-Entropy Heavy Output Generation (XHOG)
where given a quantum circuit $C$
it is required to output bit strings 
that satisfy certain relations about $C$.
The relations can be verified inefficiently.  
The classical hardnesses of these problems
are, however, based on new assumptions introduced by the authors.
\cite{STOC:Aaronson10} constructed an inefficiently-verifiable search problem (Fourier Fishing),
but its quantum advantage is relative to random oracles. 
\cite{ACCGSW22} constructed another inefficiently-verifiable
search problem (Collision Hashing), but its quantum advantage is also relative to random oracles.

\paragraph{\bf Approach 3: Proofs of quantumness.}
There is another approach of demonstrating quantum advantage where the verification is efficient, namely, proofs of quantumness (PoQ) \cite{JACM:BCMVV21}.
In PoQ, we have a QPT prover and a PPT verifier.
They interact over a classical channel, and the verifier finally makes the decision.
If the QPT prover behaves honestly, the verifier accepts with high probability,
but for any malicious PPT prover, the verifier accepts with only small probability.
The simplest way of realizing PoQ is to let the prover solve an $\mathbf{NP}$ problem
that is quantumly easy but classically hard, such as 
factoring~\cite{FOCS:Shor94}. 
Such a simplest way is, however, based on specific assumptions that certain specific problems are hard for PPT algorithms.

The first construction of PoQ based on a general assumption was given in \cite{JACM:BCMVV21}
where (noisy) trapdoor claw-free functions with the adaptive-hardcore-bit property\footnote{The adaptive-hardcore-bit property very roughly means that
it is hard to find $x_b$ ($b\in\bit$) and $d\neq \mathbf{0}$ such that $f_0(x_0)=f_1(x_1)$
and $d\cdot(x_0\oplus x_1)=0$, given a claw-free pair $(f_0,f_1)$.}
is assumed.
Such functions can be instantiated with the LWE assumption, for example \cite{JACM:BCMVV21}. 
The adaptive-hardcore-bit property was removed in \cite{NatPhys:KMCVY22}, where
only trapdoor 2-to-1 collision-resistant hash functions are assumed.
In \cite{ITCS:MorYam23},
PoQ was constructed from (full-domain) trapdoor permutations. 
PoQ can also be constructed from
quantum homomorphic encryptions (QHE) \cite{cryptoeprint:2022/400} for a
certain class of quantum operations (such as controlled-Hadamard gates), which can be instantiated with the LWE assumption~\cite{FOCS:Mahadev18b}. 
These contructions are interactive, i.e., the verifier and the prover have to exchange
many rounds of messages.
Recently, a 
non-interactive PoQ has been realized with only random oracles \cite{FOCS:YamZha22}.
This result demonstrates efficiently-verifiable quantum advantage with an ``unstructured'' problem for the first time. 
However, it is known that hardness relative to a random oracle does not necessarily imply hardness in the unrelativized world where the random oracle is replaced with a real-world hash function~\cite{CGH04}.  Thus,  \cite{FOCS:YamZha22} does not give quantum advantage under a standard assumption in the unrelativized world. 

\paragraph{\bf Our question.}
In summary, several types of quantum advantage have been shown based on various assumptions in classical complexity theory or classical cryptography.
However, all previous results are based on assumptions that are specific, relatively stronger, less standard, or newly
introduced ones. Is it possible to show quantum advantage based on general, weak, and standard assumptions?
It is widely recognized that the existence of classical cryptography is nearly equivalent to the existence of classically-secure
one-way functions (OWFs), because almost all classical cryptographic primitives 
imply the existence of OWFs, 
and many useful cryptographic applications can be constructed from OWFs,
such as pseudorandom generators, pseudorandom functions, commitments, secret-key encryption,
digital signatures, zero-knowledge, and more~\cite{DBLP:books/cu/Goldreich2001,DBLP:books/cu/Goldreich2004}.
Consequently, we face the following open problem.
\begin{center}
{\it Can we show quantum advantage based on classically-secure OWFs?}    
\end{center}
Ideally, such quantum advantage should be efficiently verifiable one, such as PoQ, but
unfortunately, we do not know how to construct PoQ from OWFs.
We emphasize that this open problem is highly non-trivial even if quantum advantage is inefficiently-verifiable one, given the fact that
all previous constructions of inefficiently-verifiable (or even non-verifiable) quantum advantage used assumptions much stronger than or less standard than OWFs.
In this paper, we therefore focus on the following open problem.
\begin{center}
{\it Can we show inefficiently-verifiable quantum advantage based on classically-secure OWFs?}    
\end{center}

\subsection{Our Results}
In this paper, we answer the above open problem affirmatively.
We demonstrate inefficiently-verifiable quantum advantage based on several fundamental assumptions,
specifically relying solely on the existence of OWFs. 

Because OWFs are necessary for almost all classical cryptographic primitives,
our findings yield the following surprising implication.
\begin{center}
{\it If there is no quantum advantage, then there is no classical cryptography!}
\end{center}

Let us explain our results more precisely. We construct what we call {\it inefficient-verifier proofs of quantumness} (IV-PoQ) from 
statistically-hiding and computationally-binding classical bit commitments.
IV-PoQ is an interactive protocol between
a verifier and a QPT prover,
which is divided into two phases.
In the first phase, the verifier is PPT, and it interacts with the QPT prover over a classical channel.
In the second phase, the verifier becomes inefficient, and makes the decision based on the transcript of the first phase.\footnote{The inefficient
verifier could also take the efficient verifier's secret information as input in addition to the transcript. However, without loss of generality,
we can assume that the inefficient verifier takes only the transcript as input, because we can always modify the protocol of the first phase
in such a way that the efficient verifier sends its secret information to the prover at the end of the first phase.}
If the QPT prover is honest, the inefficient verifier accepts with high probability,
but for any PPT malicious prover, the inefficient verifier accepts with only small probability.
The new notion of IV-PoQ captures both the standard PoQ and inefficiently-verifiable quantum advantage (including inefficiently-verifiable search problems).

Our main result is the following:
\begin{theorem}\label{thm:PoQ_from_Com}
$(k+6)$-round IV-PoQ exist if statistically-hiding and computationally-binding classical bit commitments with $k$-round commit phase exist. 
\end{theorem}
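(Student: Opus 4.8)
The plan is to extract from a statistically‑hiding, computationally‑binding commitment a ``claw generator without a trapdoor'', and to let the \emph{inefficient} verifier supply the missing trapdoor through its $\NP$ oracle. In the first phase the QPT prover acts as the committer and commits to a \emph{uniformly superposed} bit: it prepares a register $\regB$ in $\sum_b\ket b$ together with a uniform superposition over the committer's coins, and runs the $k$‑round commit phase coherently, measuring only the classical messages it must send over the classical channel. Measuring those messages is equivalent to the honest receiver measuring them, so the transcript $\tau$ comes out distributed exactly as in an honest execution, and the prover is left holding (the normalization of) $\sqrt{p_0(\tau)}\ket 0\ket{\phi_0^\tau}+\sqrt{p_1(\tau)}\ket 1\ket{\phi_1^\tau}$, where $\ket{\phi_b^\tau}$ is the superposition over decommitments consistent with $(\tau,b)$ and $p_b(\tau)$ is the probability of producing $\tau$ when committing to $b$. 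Thus $\tau$ plays the role of a hash image whose two ``preimages'' are a decommitment to $0$ and a decommitment to $1$. On top of this I would run, in the remaining $6$ rounds, a constant‑round proof of quantumness of the adaptive‑hardcore‑bit‑free kind --- e.g.\ a computational Bell test in the style of~\cite{NatPhys:KMCVY22} --- with the single change that, lacking a trapdoor, the inefficient verifier first recovers from $\tau$ whichever decommitments it needs by $\NP$ queries and only then performs its checks.

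For completeness, statistical hiding gives $\sum_\tau\abs{p_0(\tau)-p_1(\tau)}=\negl$ (otherwise an unbounded receiver could guess the committed bit), so for all but a negligible fraction of transcripts the prover's post‑commit state is close to the balanced superposition $\tfrac1{\sqrt2}(\ket 0\ket{\phi_0^\tau}+\ket 1\ket{\phi_1^\tau})$, and correctness of the commitment ensures that each $\ket{\phi_b^\tau}$ is supported on valid decommitments. Starting from this state, the completeness analysis of the underlying Bell‑test protocol carries over, so the inefficient verifier accepts the honest prover with overwhelming probability.

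For soundness, fix a PPT cheating prover $\cP^*$. Since $\cP^*$ is classical, its configuration just before each challenge of the second‑phase sub‑protocol is a classical string, which a reduction can copy and continue along several independent challenge branches. Running the soundness analysis of the Bell test on these branches, any $\cP^*$ that convinces the inefficient verifier with probability noticeably above the classical value of the game is turned into a ``claw'': a decommitment $d_0$ to $0$ and a decommitment $d_1$ to $1$ consistent with the same $\tau$ --- i.e.\ a break of computational binding. Checking that the inefficient verifier can be realized as a deterministic polynomial‑time machine with an $\NP$ oracle is then easy: the ``test''‑type checks are already efficient, and the rest are of the form ``does there exist a valid decommitment to $b$ consistent with $\tau$ satisfying $\dots$'', i.e.\ a bounded number of $\NP$ queries from whose witnesses the needed decommitments are read off.

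The hard part will be making the reduction to binding go through for a \emph{generic} commitment rather than an idealized $2$‑to‑$1$ one: decommitments need not be unique, so the Bell test's ``$2$‑to‑$1$'' structure holds only approximately, $\ket{\phi_b^\tau}$ carries a genuine superposition over many decommitments instead of a single $d_b$, and $d_0\oplus d_1$ is not even well defined. The crux is therefore to re‑derive the completeness and soundness bounds of the Bell‑test layer in this many‑preimage regime --- ensuring that the honest prover still passes noticeably above the classical threshold despite Fourier cancellations, and that what a cheating prover is forced to output really is two openings to \emph{different} bits. By comparison, justifying the coherent‑commitment step despite the prover having to send classical messages during the commit phase, and the exact $k+6$ round accounting, are routine.
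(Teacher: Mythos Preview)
Your high-level architecture matches the paper: run the commit phase coherently so the prover holds a superposition over openings to $0$ and to $1$, then layer a KMCVY-style Bell test on top and let the inefficient verifier recover the needed openings via $\NP$ queries. You also correctly flag the many-preimage issue as the crux. The gap is that you stop at flagging it and plan to ``re-derive the completeness and soundness bounds of the Bell-test layer in this many-preimage regime''. That is not an analysis problem you can push through; it requires changing the protocol.

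Concretely: the KMCVY soundness argument hinges on there being a \emph{single} pair $(x_0,x_1)$ so that a cheating prover answering both $v_2=0$ and $v_2=1$ correctly computes $\xi\cdot(x_0\oplus x_1)$, after which Goldreich--Levin extracts $x_0\oplus x_1$ and, combined with the $v_1=0$ branch, yields both openings. If $X_{b,t}$ is large there is no well-defined target for Goldreich--Levin and no way to pin the $v_1=0$ opening and the $v_1=1$ ``hardcore bit'' to the \emph{same} pair. On the completeness side, after the Hadamard measurement the honest prover's qubit is $\sum_{b}\sum_{x\in X_{b,t}}(-1)^{d\cdot x}\ket{b\oplus(\xi\cdot x)}$, an arbitrary single-qubit state; there is no reason its $\pi/8$-basis statistics beat the classical value, so your claim of ``overwhelming'' acceptance is unsupported. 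The paper resolves this with an extra hashing round: the verifier sends pairwise-independent $h_0,h_1$ with a randomly guessed range size $k\approx 2|X_{b,t}|$ (statistical hiding guarantees $|X_{0,t}|\approx|X_{1,t}|$, and the right $k$ is hit with probability $1/\poly$), the prover measures $h_b(x)$ to collapse each $X_{b,t}\cap h_b^{-1}(y)$ to a singleton with constant probability, and only then is KMCVY run. Even so, completeness is only $7/8+1/\poly$, not overwhelming; and to prevent a cheating prover from gaming the event ``collapse failed'', the inefficient verifier must accept with probability exactly $7/8$ in that case. The two extra messages for the hashing are also what account for the $+6$ rather than $+4$ rounds.
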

A proof of \cref{thm:PoQ_from_Com} is given in \cref{sec:construction}.
Note that we actually need the statistical hiding property only for the honest receiver,
because the receiver corresponds to the verifier.
Moreover, note that in our construction, the inefficient verifier
in the second phase is enough to be a classical deterministic polynomial-time algorithm that queries the $\mathbf{NP}$ oracle.
(See \cref{sec:powerofV2}.)

Because statistically-hiding and computationally-binding classical bit commitments can be constructed from
OWFs~\cite{SIAM:HNORV09},
we have the following result.
\begin{theorem}\label{thm:PoQ_from_OWFs}
IV-PoQ exist if OWFs exist. 
\end{theorem}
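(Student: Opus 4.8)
The plan is to obtain \cref{thm:PoQ_from_OWFs} by simply instantiating the commitment primitive demanded by \cref{thm:PoQ_from_Com} with one built from one-way functions. Concretely, I would invoke the classical result of Haitner, Nguyen, Ong, Reingold, and Vadhan~\cite{SIAM:HNORV09}, which shows that the existence of OWFs implies the existence of statistically-hiding and computationally-binding classical bit commitments whose commit phase consists of $k=\poly(\secp)$ rounds. Feeding such a scheme into \cref{thm:PoQ_from_Com} yields $(k+6)$-round, hence $\poly(\secp)$-round, IV-PoQ, which is in particular an IV-PoQ; this establishes the claim.

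A few points deserve checking while carrying this out. First, \cref{thm:PoQ_from_Com} only requires the hiding property to hold against the honest receiver (as noted immediately after its statement), whereas the construction of~\cite{SIAM:HNORV09} provides full statistical hiding against an arbitrary, even unbounded, malicious receiver; thus the hypothesis is satisfied with room to spare. Second, one should confirm that classically-secure computational binding is enough: in IV-PoQ the malicious prover is a classical PPT machine, so the reduction underlying \cref{thm:PoQ_from_Com} only ever needs to break binding with a PPT adversary, and the classical binding guarantee of~\cite{SIAM:HNORV09} suffices. The honest QPT prover never relies on binding—only on correctness and on the (statistical, hence automatically quantum-proof) hiding—so no quantum strengthening of the commitment scheme is needed.

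There is essentially no hard step here: all the substance lives in \cref{thm:PoQ_from_Com} and in the cited classical construction, and the present theorem is just their composition. The only thing ``lost'' in the reduction is round efficiency—statistically-hiding commitments from general OWFs inherently use polynomially many rounds—which is precisely why the statement does not claim constant-round IV-PoQ; obtaining constantly many rounds instead requires a constant-round statistically-hiding commitment, which is exactly what the distributional collision-resistant hash function route supplies in the companion result.
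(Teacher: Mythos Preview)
Your proposal is correct and matches the paper's approach exactly: the paper derives \cref{thm:PoQ_from_OWFs} as an immediate corollary of \cref{thm:PoQ_from_Com} by instantiating the required statistically-hiding and computationally-binding commitment with the construction of~\cite{SIAM:HNORV09} from OWFs. Your additional sanity checks (honest-receiver hiding suffices, classical binding suffices since the malicious prover is PPT) are accurate and align with remarks the paper makes elsewhere.
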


Moreover, it is known that constant-round statistically-hiding and computationally-binding bit commitments can be 
constructed from distributional collision resistant hash functions~\cite{EC:BHKY19}\footnote{A distributional collision-resistant hash function~\cite{STOC:DubIsh06} is a weaker variant of a collision-resistant hash function that requires the hardness of sampling a collision $(x,y)$ where $x$ is uniformly random and $y$ is uniformly random conditioned on colliding with $x$.},
which exist if 
there is an hard-on-average problem in $\mathbf{SZK}$~\cite{C:KomYog18}.  
Therefore we also have the following result.\footnote{It is also known that constant-round statistically-hiding and computationally-binding
commitments can be constructed from multi-collision resistant hash functions~\cite{EC:BDRV18,EC:KomNaoYog18}, and therefore we have constant-round
IV-PoQ from multi-collision resistant hash functions as well.}
\begin{theorem}\label{thm:PoQ_from_SZK}
Constant-round IV-PoQ exist if there exist distributional collision-resistant hash functions,
which exist if there is an hard-on-average problem in $\mathbf{SZK}$.  
\end{theorem}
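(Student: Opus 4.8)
The plan is to derive \cref{thm:PoQ_from_SZK} as a corollary of \cref{thm:PoQ_from_Com}: it suffices to exhibit a \emph{constant-round} statistically-hiding and computationally-binding classical bit commitment scheme under the stated hypotheses, and then feed it into \cref{thm:PoQ_from_Com}, which turns a $k$-round commit phase into a $(k+6)$-round IV-PoQ. So the whole argument is an assembly of known reductions around the one genuinely new theorem of this section.

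First I would reduce the two hypotheses to a single one. By the result of Komargodski and Yogev~\cite{C:KomYog18}, the existence of an hard-on-average problem in $\mathbf{SZK}$ implies the existence of distributional collision-resistant hash functions (dCRH)~\cite{STOC:DubIsh06}. Hence it is enough to prove the theorem assuming dCRH exist; the $\mathbf{SZK}$ part of the statement follows automatically.

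Next I would invoke the construction of Bitansky, Haitner, Komargodski, and Yogev~\cite{EC:BHKY19}, which builds a constant-round statistically-hiding and computationally-binding bit commitment scheme from any dCRH. Writing $k=O(1)$ for the number of rounds in its commit phase, \cref{thm:PoQ_from_Com} then yields a $(k+6)$-round, hence constant-round, IV-PoQ, which is exactly the claim. (The same argument, with multi-collision-resistant hash functions in place of dCRH and the commitment constructions of~\cite{EC:BDRV18,EC:KomNaoYog18}, gives the variant mentioned in the footnote.)

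I do not expect a real obstacle here, since all the technical content lives in the proof of \cref{thm:PoQ_from_Com}. The one point that needs care is checking that the commitment notion produced by~\cite{EC:BHKY19} matches the hypothesis of \cref{thm:PoQ_from_Com}; in particular, as remarked right after \cref{thm:PoQ_from_Com}, statistical hiding is needed only against the \emph{honest} receiver, so a scheme satisfying the standard (full) statistical-hiding notion a fortiori qualifies, and the computational-binding notions coincide. With that compatibility verified, the chain of implications closes and the theorem follows.
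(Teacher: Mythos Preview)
Your proposal is correct and matches the paper's own derivation: the paper obtains \cref{thm:PoQ_from_SZK} exactly by chaining \cite{C:KomYog18} (hard-on-average $\mathbf{SZK}\Rightarrow$ dCRH) and \cite{EC:BHKY19} (dCRH $\Rightarrow$ constant-round statistically-hiding computationally-binding commitments), and then invoking \cref{thm:PoQ_from_Com}. Your check that the commitment notion from \cite{EC:BHKY19} meets the hypothesis of \cref{thm:PoQ_from_Com} (honest-receiver hiding suffices) is appropriate and the paper makes the same remark.
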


The assumptions in \cref{thm:PoQ_from_OWFs,thm:PoQ_from_SZK} are average-case-hard assumptions.
We can further weaken the assumptions to worst-case-hard ones if we require only worst-case soundness for IV-PoQ. 
Namely, we define {\it auxiliary-input IV-PoQ} (AI-IV-PoQ) that only requires 
that for any malicious prover, there exist infinitely many auxiliary inputs under which the prover cannot cheat.   
We can show the following:
\begin{theorem}\label{thm:AIPoQ_from_Com}
$(k+6)$-round AI-IV-PoQ exist if auxiliary-input statistically-hiding and computationally-binding classical bit commitments with $k$-round commit phase exist. 
\end{theorem}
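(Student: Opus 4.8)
The plan is to rerun the construction and analysis behind \cref{thm:PoQ_from_Com}, with the statistically-hiding computationally-binding commitment replaced by its auxiliary-input analogue, and to check that completeness survives on every auxiliary input while the soundness reduction is auxiliary-input preserving.

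First I would recall the $(k+6)$-round IV-PoQ construction of \cref{sec:construction}, which uses a $k$-round statistically-hiding and computationally-binding bit commitment $\com$ only as a black box. Define the candidate AI-IV-PoQ to be the very same protocol, except that every invocation of $\com$ is replaced by an invocation of the auxiliary-input commitment, and the common auxiliary input $z$ is handed to both the (efficient) verifier and the prover, who simply forward it to the commitment algorithms. The round complexity is unchanged, so the protocol is $(k+6)$-round. For completeness, note that for each fixed $z$ the auxiliary-input commitment is statistically hiding for the honest receiver (which is all that is used, as remarked after \cref{thm:PoQ_from_Com}) and the honest prover is the same QPT machine, so the completeness computation of \cref{thm:PoQ_from_Com} goes through pointwise in $z$: the honest prover makes the inefficient verifier accept except with negligible probability, for every $z$.

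The substantive step is soundness. Suppose toward a contradiction that some PPT malicious prover $P^*$ breaks AI-soundness; unfolding the definition, this means $P^*$ makes the inefficient verifier accept with non-negligible probability on all but finitely many auxiliary inputs. I would then invoke the soundness reduction of \cref{thm:PoQ_from_Com}: it is a PPT algorithm $\cB$ that, interacting with $P^*$ while playing the honest receiver of the commit phase and then rewinding/measuring $P^*$ on a successful transcript, outputs a commitment together with two openings to distinct messages, i.e., a binding violation. The key observation is that $z$ enters $P^*$ only through the commitment sub-protocol, so $\cB$ can be run against the binding challenger on the \emph{same} auxiliary input $z$ that $P^*$ receives; no transformation of the auxiliary input is needed. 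Hence $\cB$ breaks computational binding of the auxiliary-input commitment on all but finitely many auxiliary inputs, contradicting the defining guarantee that for every PPT adversary there are infinitely many auxiliary inputs on which binding holds. This proves AI-soundness.

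The one place where more than routine checking is needed is verifying that the soundness reduction of \cref{thm:PoQ_from_Com} is genuinely ``auxiliary-input friendly'': it must be a uniform PPT procedure that works for an \emph{arbitrary} fixed commitment instance, using nothing about how the receiver's messages are distributed beyond what the honest commit phase and the binding game already provide, and it must not appeal to any average-case sampling of keys. Inspecting that reduction confirms this, since it only (i) executes the commit phase honestly as the receiver and (ii) extracts two openings from $P^*$ by rewinding. One also has to align the indexing of auxiliary inputs on the two sides (both ranging over $\bit^*$, say) so that ``cofinitely many bad auxiliary inputs for $P^*$'' translates to ``cofinitely many bad auxiliary inputs for $\cB$''; this bookkeeping is straightforward once the conventions are fixed. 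Finally, as in \cref{sec:powerofV2}, the inefficient verifier here can again be taken to be a deterministic polynomial-time machine with an $\NP$ oracle, since the second-phase decision procedure is unchanged.
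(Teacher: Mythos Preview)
Your proposal is correct and matches the paper's approach, which simply omits the proof as being similar to that of \cref{thm:PoQ_from_Com}. One cosmetic correction: the base protocol's completeness is only $\frac{7}{8}+\frac{1}{\poly(|\sigma|)}$, not $1-\negl$; this already gives the required inverse-polynomial gap over the soundness bound $\frac{7}{8}$ (amplification to $1$-completeness needs the strong-soundness variant of the reduction, which also carries over verbatim to the auxiliary-input setting).
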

Its proof is omitted because it is similar to that of \cref{thm:PoQ_from_Com}.
Although AI-IV-PoQ is weaker than IV-PoQ, we believe that it still demonstrates a meaningful notion of quantum advantage, because
it shows ``worst-case quantum advantage'' in the sense that no PPT algorithm can simulate the QPT honest prover on all auxiliary inputs.   

Auxiliary-input OWFs\footnote{Roughly speaking, auxiliary-input OWFs are keyed functions 
such that for each adversary there exist infinitely many keys on which the adversary fails to invert the function.} 
exist if
$\mathbf{CZK}\not\subseteq\mathbf{BPP}$~\cite{OW93}.\footnote{$\mathbf{CZK}$ is the class of promise problems that have computational zero-knowledge proofs.
By abuse of notation, we write $\mathbf{BPP}$ to mean the class of promise problems (instead of languages) that are decidable in PPT.
}
Moreover, the construction of statistically-hiding and computationally-binding commitments from OWFs in \cite{SIAM:HNORV09} can be
modified for the auxiliary-input setting.
We therefore have the following result.
\begin{theorem}\label{thm:AXPoQ_from_OWFs}
AI-IV-PoQ exist if there exist auxiliary-input OWFs, which exist if $\mathbf{CZK}\not\subseteq\mathbf{BPP}$.
\end{theorem}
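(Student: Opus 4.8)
The plan is to prove the two implications in \cref{thm:AXPoQ_from_OWFs} separately. The second one---that auxiliary-input OWFs exist whenever $\mathbf{CZK}\not\subseteq\mathbf{BPP}$---is exactly the classical result of Ostrovsky and Wigderson~\cite{OW93}, so nothing new is needed there; one only has to check that their notion of auxiliary-input one-way function coincides with (or is trivially implied by) the one we use. The real work is the first implication: auxiliary-input OWFs imply AI-IV-PoQ. By \cref{thm:AIPoQ_from_Com} it suffices to build, from any auxiliary-input OWF, an auxiliary-input statistically-hiding and computationally-binding classical bit commitment with a (polynomial-round) commit phase; feeding such a commitment with $k$-round commit phase into \cref{thm:AIPoQ_from_Com} then yields $(k+6)$-round AI-IV-PoQ, and hence AI-IV-PoQ.

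For that core step, I would revisit the Haitner--Nguyen--Ong--Reingold--Vadhan construction of statistically-hiding commitments from arbitrary OWFs~\cite{SIAM:HNORV09} and run the entire pipeline with the index (``auxiliary input'') $z$ of the one-way function $f(z,\cdot)$ handed to both the committer and the receiver as a common input. Concretely, each intermediate object in that construction---the passage from a (distributional) one-way function to one with an approximately known preimage-size profile, the entropy-equalization via pairwise-independent hashing, the interactive-hashing step, and the final commitment---is simply parameterized by $z$, and statistical hiding continues to hold \emph{for every} $z$ (it is unconditional, and only the honest receiver's hiding is needed, as noted after \cref{thm:PoQ_from_Com}). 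For binding, every security reduction in~\cite{SIAM:HNORV09} is black-box and ``$z$-local'': a PPT adversary that breaks binding of the commitment on auxiliary input $z$ is transformed into a PPT adversary that inverts $f(z,\cdot)$ (or $f(z',\cdot)$ for some $z'$ efficiently computable from $z$, which can be folded into the auxiliary input). Since auxiliary-input one-wayness guarantees infinitely many indices $z$ on which no fixed PPT inverter succeeds, the contrapositive gives infinitely many $z$ on which no fixed PPT adversary breaks binding---exactly auxiliary-input computational binding.

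The main obstacle I anticipate is bookkeeping the ``infinitely often'' quantifier across the rather long chain of reductions in~\cite{SIAM:HNORV09}: one must verify that no step silently strengthens the requirement on $f$ from ``hard on infinitely many indices'' to, say, ``hard on a noticeable fraction of indices for all large security parameters,'' and that the handful of steps that look non-black-box (e.g.\ the internal use of Rompel-style universal one-way hashing, or the Markov-type averaging over preimage-size buckets) can indeed be carried out with $z$ held fixed throughout. I expect this to go through because all those arguments operate at a fixed input length with $z$ treated as a constant, so the relevant success probabilities are taken only over the protocol's internal randomness; hence infinitely-often hardness of $f$ propagates to infinitely-often binding of the commitment. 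Once the auxiliary-input commitment is in hand, \cref{thm:AIPoQ_from_Com} closes the argument, and combining with~\cite{OW93} gives the stated consequence under $\mathbf{CZK}\not\subseteq\mathbf{BPP}$.
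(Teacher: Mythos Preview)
Your proposal is correct and follows essentially the same approach as the paper: reduce to \cref{thm:AIPoQ_from_Com} by observing that the Haitner--Nguyen--Ong--Reingold--Vadhan construction \cite{SIAM:HNORV09} carries over to the auxiliary-input setting, and cite \cite{OW93} for the $\mathbf{CZK}\not\subseteq\mathbf{BPP}$ implication. In fact the paper offers less detail than you do---it simply asserts that the construction ``can be modified for the auxiliary-input setting''---so your discussion of the infinitely-often bookkeeping through the HNORV pipeline is, if anything, more thorough than what the paper provides.
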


Furthermore, relying on the known constructions of constant-round (auxiliary-input) statistically-hiding commitments~\cite{C:HalMic96,TCC:OngVad08}, 
we obtain the following result.
\begin{theorem}\label{thm:AXPoQ_from_SZK}
Constant-round AI-IV-PoQ exist if auxiliary-input collision-resistant hash functions 
exist (which is equivalent to $\mathbf{PWPP}\nsubseteq \mathbf{FBPP}$)\footnote{
\ifnum\cameraready=1
See the full version for the definitions of $\mathbf{PWPP}$ and $\mathbf{FBPP}$.
\else
See \cref{sec:AXCRH} for the definitions of $\mathbf{PWPP}$ and $\mathbf{FBPP}$.
\fi
} or $\mathbf{SZK}\nsubseteq \mathbf{BPP}$.  
\end{theorem}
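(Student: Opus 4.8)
The plan is to reduce, via \cref{thm:AIPoQ_from_Com}, to the task of building \emph{constant-round auxiliary-input statistically-hiding and computationally-binding classical bit commitments} under each of the two hypotheses, and then to observe that the two classical constructions of constant-round statistically-hiding commitments cited in the excerpt both go through in the auxiliary-input setting. Concretely, once we have such commitments with $k = O(1)$, plugging them into \cref{thm:AIPoQ_from_Com} yields $(k+6)$-round $=$ constant-round AI-IV-PoQ. So the work is entirely on the commitment side, and it is essentially a matter of checking that "collision resistance / hardness holds for infinitely many keys" is the only property used to establish computational binding, while statistical hiding stays unconditional.

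First, from auxiliary-input collision-resistant hash functions (AI-CRH). I would use the Halevi--Micali construction~\cite{C:HalMic96}: from a CRH family one gets a constant-round commitment whose statistical hiding (against the honest receiver) is unconditional via the leftover hash lemma, and whose computational binding reduces to finding a collision in the sampled hash key. Placing the hash key inside the auxiliary input $z$, AI-CRH security says that for every PPT sender $S^*$ there are infinitely many $z$ on which $S^*$ cannot produce a collision, hence infinitely many $z$ on which $S^*$ cannot break binding; hiding holds for every $z$. This is exactly an auxiliary-input constant-round statistically-hiding computationally-binding bit commitment. (That AI-CRH exist iff $\mathbf{PWPP}\nsubseteq\mathbf{FBPP}$ is recalled in \cref{sec:AXCRH}.)

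Second, from $\mathbf{SZK}\nsubseteq\mathbf{BPP}$. Here I would invoke the instance-dependent commitments of Ong--Vadhan~\cite{TCC:OngVad08}: every $\Pi\in\mathbf{SZK}$ has a constant-round instance-dependent commitment that is statistically hiding on one side of the promise and statistically binding on the other, and by the standard direct combination one gets, for each instance $x$, a commitment that is always statistically hiding and is computationally binding in the sense that breaking binding on $x$ lets one decide membership of $x$ in $\Pi$. Fixing $\Pi\in\mathbf{SZK}\setminus\mathbf{BPP}$, for every PPT $S^*$ there must be infinitely many instances $x$ on which the induced binding-breaker fails (otherwise $\Pi\in\mathbf{BPP}$). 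Taking $x$ as the auxiliary input and hard-wiring it into the instance-dependent commitment again produces an auxiliary-input constant-round statistically-hiding computationally-binding bit commitment, which feeds into \cref{thm:AIPoQ_from_Com}.

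The routine parts are the (unconditional) hiding analyses and the round bookkeeping through \cref{thm:AIPoQ_from_Com}. The main obstacle I anticipate is the bookkeeping of which auxiliary inputs are "good": one must verify that for \emph{every} adversary the set of auxiliary inputs on which binding holds is infinite, and that this is precisely what the AI-IV-PoQ soundness definition and the reduction inside \cref{thm:AIPoQ_from_Com} consume — i.e., that the reduction from an AI-IV-PoQ cheating prover to a commitment binding-breaker transports "infinitely many good auxiliary inputs" rather than just "some good auxiliary input." In the $\mathbf{SZK}$ case there is the further subtlety of the promise structure: one has to argue worst-case hardness on a single side of the promise and make sure the combined commitment's binding failure really reduces to deciding $\Pi$, which is the least mechanical point of the argument.
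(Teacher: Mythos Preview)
Your high-level plan matches the paper's: build constant-round auxiliary-input statistically-hiding computationally-binding commitments under each hypothesis, then invoke \cref{thm:AIPoQ_from_Com}. For the AI-CRH case you and the paper do the same thing (Halevi--Micali, with the hash key as the auxiliary input).

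For the $\mathbf{SZK}\nsubseteq\mathbf{BPP}$ case, the paper takes a simpler route than your ``direct combination.'' It fixes an $\mathbf{SZK}$-complete promise problem $(L_{\mathsf{yes}},L_{\mathsf{no}})$, uses the Ong--Vadhan constant-round instance-dependent commitment \emph{as is}, and simply sets the auxiliary-input set to $\Sigma\coloneqq L_{\mathsf{yes}}$. Statistical hiding then holds for every $\sigma\in\Sigma$ by definition of instance-dependent commitments. For binding: if some PPT $\cS^*$ breaks binding on all but finitely many $\sigma\in L_{\mathsf{yes}}$, extend it by brute force on the finitely many exceptions; since on $L_{\mathsf{no}}$ the commitment is \emph{statistically} binding, running $\cS^*$ and checking whether it produced valid double-openings distinguishes $L_{\mathsf{yes}}$ from $L_{\mathsf{no}}$ with inverse-polynomial advantage, contradicting $\mathbf{SZK}\nsubseteq\mathbf{BPP}$. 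This avoids any need to combine commitments to get hiding ``for all $x$'' and sidesteps the promise-structure worry you raised: you never need binding to fail on $L_{\mathsf{no}}$, you only need it to \emph{hold} there statistically. The price is that $\Sigma\neq\bit^*$ in this one construction, which the paper explicitly notes the definition of AI-IV-PoQ permits. Your combination idea could also be made to work, but it is both vaguer and more effort than necessary.
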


Finally, 
we can also define another variant of IV-PoQ that we call
{\it infinitely-often IV-PoQ} (IO-IV-PoQ) where the soundness is satisfied for infinitely many
values of the security parameter. We note that IO-IV-PoQ lie between IV-PoQ and AI-IV-PoQ. 
It is known that 
infinitely-often OWFs exist if
$\mathbf{SRE}\not\subseteq\mathbf{BPP}$~\cite{C:AppRay16}.\footnote{$\mathbf{SRE}$ is the class of problems that admit statistically-private randomized encoding with polynomial-time client and computationally-unbounded server.} 
Therefore we also have the following result.
\begin{theorem}\label{thm:ioPoQ_from_SRE}
IO-IV-PoQ exist if 
infinitely-often OWFs exist, which exist if
$\mathbf{SRE}\not\subseteq\mathbf{BPP}$. 
\end{theorem}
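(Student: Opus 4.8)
The plan is to follow the proof of \cref{thm:PoQ_from_OWFs} essentially verbatim, carrying the ``infinitely-often'' qualifier along the reduction chain. The second claim of \cref{thm:ioPoQ_from_SRE} --- that infinitely-often OWFs follow from $\mathbf{SRE}\not\subseteq\mathbf{BPP}$ --- is exactly \cite{C:AppRay16}, so there is nothing new to prove there; the work is in showing that infinitely-often OWFs yield IO-IV-PoQ. Since IV-PoQ trivially imply IO-IV-PoQ, the only point is that the \emph{weaker} starting assumption (hardness only for infinitely many security parameters) still suffices for the construction of \cref{thm:PoQ_from_Com}.

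First I would introduce the infinitely-often analogue of the primitive used in \cref{thm:PoQ_from_Com}, namely an \emph{infinitely-often statistically-hiding and computationally-binding} classical bit commitment: a scheme whose hiding property is statistical (hence holds for \emph{every} security parameter, being information-theoretic and built into the construction rather than relying on the OWF) and whose computational binding holds for infinitely many values of the security parameter $\secp$. The key step is to check that the construction of statistically-hiding and computationally-binding commitments from OWFs of \cite{SIAM:HNORV09} lifts to this setting: each reduction in that construction, and in the intermediate primitives it passes through, maps a security parameter on which the OWF is hard to invert to a security parameter on which the next primitive is secure, with only a fixed polynomial change in the parameter. Consequently, starting from an infinitely-often OWF that is hard to invert on an infinite set $S\subseteq\N$ of security parameters, one obtains a commitment scheme that is binding on an infinite set $S'\subseteq\N$ derived from $S$. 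I would spell out explicitly that statistical hiding is unaffected because it does not invoke the OWF at all.

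Next I would plug this infinitely-often commitment into the IV-PoQ construction of \cref{thm:PoQ_from_Com}. Completeness is unconditional and holds for every $\secp$, exactly as in \cref{thm:PoQ_from_Com}. For soundness, recall that in the proof of \cref{thm:PoQ_from_Com} a PPT prover that makes the inefficient verifier accept with non-negligible probability is converted, by a PPT reduction, into an adversary breaking the binding property at (up to a polynomial factor) the same security parameter. Running this argument on the infinitely-often commitment shows that no PPT prover can cheat for $\secp\in S'$, which is precisely the definition of IO-IV-PoQ: soundness holds for infinitely many security parameters. The round complexity is inherited from \cref{thm:PoQ_from_Com}. Finally, combining with \cite{C:AppRay16} gives the consequence from $\mathbf{SRE}\not\subseteq\mathbf{BPP}$, and the position of IO-IV-PoQ between IV-PoQ and AI-IV-PoQ is immediate from the definitions.

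The main obstacle is the bookkeeping in the second step: one must verify that the multi-stage construction of \cite{SIAM:HNORV09}, together with the primitives it relies on, is ``security-parameter-faithful'' in the sense that an infinite set of good parameters for the OWF is not thinned out to a finite set --- for instance, a reduction that is only meaningful when two independently-sampled parameters simultaneously land in their respective good sets would be problematic. For the standard sequential, parameter-preserving reductions this is not an issue, but it is the one place where the argument genuinely departs from a line-by-line repetition of \cref{thm:PoQ_from_OWFs}, and it is the same observation (applied to \cite{C:HalMic96,EC:BHKY19}) that would yield infinitely-often analogues of \cref{thm:PoQ_from_SZK}, which we do not state. The argument parallels that of \cref{thm:AXPoQ_from_OWFs} for the auxiliary-input setting.
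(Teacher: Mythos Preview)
Your proposal is correct and matches the paper's approach. The paper does not give an explicit proof of \cref{thm:ioPoQ_from_SRE}; it is stated as an immediate consequence of the main construction (\cref{thm:PoQ_from_Com}) together with \cite{C:AppRay16}, with the implicit understanding that the reduction chain through \cite{SIAM:HNORV09} preserves an infinite set of good security parameters --- exactly the reasoning you have spelled out, and analogous to the auxiliary-input case of \cref{thm:AXPoQ_from_OWFs} whose proof the paper likewise omits.
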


\if0
In this paper, we answer the second question affirmatively.
We demonstrate quantum advantage based only on the existence of one-way functions.
More precisely, we construct what we call {\it inefficient-verifier proofs of quantumness} (IV-PoQ) from one-way functions.
IV-PoQ is an interactive protocol between
a verifier and a QPT prover,
which is divided into two phases.
In the first phase, the verifier is PPT, and it interacts with the QPT prover over a classical channel.
In the second phase, the verifier becomes inefficient, and makes the decision based on the transcript of the first phase.\footnote{The inefficient
verifier could also take the efficient verifier's secret information as input in addition to the transcript. However, without loss of generality,
we can assume that the inefficient verifier takes only the transcript as input, because we can always modify the protocol of the first phase
in such a way that the efficient verifier sends its secret information to the prover at the end of the first phase.}
If the QPT prover is honest, the inefficient verifier accepts with high probability,
but for any PPT malicious prover, the inefficient verifier accepts with only small probability.

In short, our first result is the following:
\begin{theorem}\label{thm:PoQ_from_OWFs}
IV-PoQ exist if one-way 
functions exist. 
\end{theorem}
To our knowledge, this is the first time that quantum advantage is shown based only on one-way functions. The definition of IV-PoQ is given in \cref{def:PoQinefficient}.
A proof of \cref{thm:PoQ_from_OWFs} is given in \cref{sec:construction}.

The existence of one-way functions is an average-case hardness assumption.
We can further weaken the assumption to a worst-case one if we require only worst-case soundness for IV-PoQ. 
Namely, we define {\it auxiliary-input IV-PoQ} (AI-IV-PoQ) that only requires 
that for any malicious prover, there exist infinitely many auxiliary inputs under which the prover cannot cheat.   
We observe that the construction of IV-PoQ can be used to construct AI-IV-PoQ if we replace one-way functions with auxiliary-input one-way functions,\footnote{Roughly speaking, auxiliary-input one-way functions are keyed functions 
such that for each adversary there exist infinitely many keys on which the adversary fails to invert the function.} which exist under the worst-case hardness assumption that
$\mathbf{CZK}\not\subseteq\mathbf{BPP}$~\cite{OW93}.\footnote{$\mathbf{CZK}$ is the class of languages that have computational zero-knowledge proofs.}
Our second result is therefore the following.\footnote{Note that infinitely-often one-way functions exist if
$\mathbf{SRE}\not\subseteq\mathbf{BPP}$~\cite{C:AppRay16}. Therefore, we also obtain the result that
``Infinitely-often IV-PoQ'' exist if $\mathbf{SRE}\not\subseteq\mathbf{BPP}$.}

\begin{theorem}\label{thm:AXPoQ_from_OWFs}
AI-IV-PoQ exist if there exist auxiliary-input one-way functions, which exist if $\mathbf{CZK}\not\subseteq\mathbf{BPP}$.
\end{theorem}
The definition of AI-IV-PoQ is given in \cref{def:AXPoQinefficient}.
A proof of \cref{thm:AXPoQ_from_OWFs} is omitted because it is almost the same as that of \cref{thm:PoQ_from_OWFs}.
Although AI-IV-PoQ is weaker than IV-PoQ, we believe that it still demonstrates a meaningful notion of quantum advantage, because
it shows ``worst-case quantum advantage'' in the sense that no PPT algorithm can simulate the QPT honest prover on all auxiliary inputs.

In fact, what we prove are more general than the above: We show that any (resp. auxiliary-input) statistically-hiding classical bit commitments\footnote{We simply write statistically-hiding commitments to mean statistically-hiding and computationally-binding commitments in the introduction.} can be used to construct (resp. AI-) IV-PoQ.\footnote{We actually need the statistical hiding property only for the honest receiver.}
\begin{theorem}\label{thm:PoQ_from_Com}
$(k+6)$-round (resp. AI-) IV-PoQ exist if (resp. auxiliary-input) statistically-hiding classical bit commitments with $k$-round commit phase exist. 
\end{theorem}
A proof of \cref{thm:PoQ_from_Com} is given in \cref{sec:construction}.
\cref{thm:PoQ_from_OWFs,thm:AXPoQ_from_OWFs} follow by instantiating (resp. auxiliary-input) statistically-hiding commitments of \cite{SIAM:HNORV09} based on (resp. auxiliary-input) one-way functions.\footnote{\cite{SIAM:HNORV09} constructed statistically-hiding commitments from one-way functions and we observe that the same construction and proof work in the auxiliary-input setting.} Relying on known constructions of constant-round (auxiliary-input) statistically-hiding commitments~\cite{C:KomYog18,EC:BHKY19,C:HalMic96,TCC:OngVad08}, we also obtain the following theorems.
\begin{theorem}\label{thm:PoQ_from_SZK}
Constant-round IV-PoQ exist if there exist distributional collision-resistant hash functions,\footnote{A distributional collision-resistant hash function~\cite{STOC:DubIsh06} is a weaker variant of a collision-resistant hash function that requires the hardness of sampling a collision $(x,y)$ where $x$ is uniformly random and $y$ is uniformly random conditioned on colliding with $x$.} which exist if there is an hard-on-average problem in $\mathbf{SZK}$.  
\end{theorem}
\begin{theorem}\label{thm:AXPoQ_from_SZK}
Constant-round AI-IV-PoQ exist if auxiliary-input collision-resistant hash functions against PPT adversaries exist (which is equivalent to $\mathbf{PWPP}\nsubseteq \mathbf{FBPP}$) or $\mathbf{SZK}\nsubseteq \mathbf{BPP}$.  
\end{theorem}

In all our constructions (\cref{thm:PoQ_from_OWFs,thm:AXPoQ_from_OWFs,thm:PoQ_from_Com,thm:PoQ_from_SZK,thm:AXPoQ_from_SZK}), the inefficient verifier
in the second phase is enough to be a classical deterministic polynomial-time algorithm that queries the $\mathbf{NP}$ oracle.
\fi

A comparison table among existing and our results on quantum advantage can be found in 
\cref{table:QA}. 

\ifnum\llncs=0
\begin{table}[H]
 \caption{Comparison among results on quantum advantage. 
In column ``Verification'', ``No'' means that the verification is not known to be possible. (Actually, it seems to be impossible.)
In column ``Assumption'', 
PH stands for the polynomial-time hierarchy, 
seOWFs stands for subexponentially secure one-way functions,
2-1 TDCRHFs stands for 2-to-1 trapdoor collision-resistant hash functions, 
QHE stands for quantum homomorphic encryption, 
fdTDPs stands for full-domain trapdoor permutations, 
OWFs stands for one-way functions, 
dCRHFs stands for distributional collision-resistant hash functions, 
and CRHFs stands for collision-resistant hash functions.
  In column ``Misc'', 
 Mult.err. and Add.err. stand for multiplicative and additive errors, respectively.  
 In the row of \cite{FOCS:Shor94}, the number of rounds is two, because the verifier sends a composite number to the prover, and the prover returns 
its factorization. 
It can be considered as a non-interactive if the composite number is given as an auxiliary input.\\
 }
 \label{table:QA}
 \centering
  \begin{tabular}{lcccc}
   \hline
   Ref.& Verification& \#Rounds &Assumption&Misc\\
   \hline \hline
\cite{TD04,STOC:AarArk11,BreJozShe10,FKMNTT18}  & No & $1$ & PH does not collapse& Mult.err. sampling\\
\hline
\cite{STOC:AarArk11,BreMonShe16,NatPhys:BFNV19,Morimae17}& No & $1$ & Ad hoc& Add.err. sampling\\
\hline
\cite{STOC:AarAmb15}& No & $1$ & Random oracle& Fourier Sampling\\
\hline
\cite{CCC:AarChe17}& No & $1$ & seOWFs+$\mathbf{P}/\mathbf{poly}$-oracle& Fourier Sampling\\
\hline
   \cite{CCC:AarChe17,AarGun19} & Inefficient & $1$ & Ad hoc &HOG, XHOG\\
   \hline
   \cite{STOC:Aaronson10}& Inefficient  & $1$ &Random oracle& Fourier Fishing\\
   \hline
   \cite{CCC:AarChe17}& Inefficient  & $1$ &seOWFs+$\mathbf{P}/\mathbf{poly}$-oracle& Fourier Fishing\\
   \hline
   \cite{ACCGSW22} & Inefficient  & $1$ & Random oracle &Collision Hashing\\
   \hline
   \cite{FOCS:Shor94}& Efficient  &$2$ &Factoring/Discrete-log&\\
   \hline
       \cite{FOCS:YamZha22} &Efficient & $1$ & Random oracle &\\ 
       \hline
       \cite{JACM:BCMVV21,NatPhys:KMCVY22}& Efficient  & $O(1)$ &  (Noisy) 2-1 TDCRHFs&\\
       \hline
       \cite{cryptoeprint:2022/400}& Efficient  & $O(1)$ &  QHE &\\
       \hline
\cite{ITCS:MorYam23}& Efficient  & $\poly(\secp)$ & fdTDPs &\\
\hline
   \cref{thm:PoQ_from_OWFs}  & Inefficient& $\poly(\secp)$ &OWFs &\\ 
       \hline
      \cref{thm:PoQ_from_SZK}  & Inefficient& $O(1)$ &dCRHFs&\\ 
 
   \hline
    \cref{thm:AXPoQ_from_OWFs}  & Inefficient &$\poly(\secp)$ &   \begin{tabular}{c} Auxiliary-input OWFs~/~\\
    $\mathbf{CZK}\not\subseteq\mathbf{BPP}$
    \end{tabular}&AI-IV-PoQ\\ 
     \hline
    \cref{thm:AXPoQ_from_SZK}  & Inefficient& $O(1)$ &
   \begin{tabular}{c}
    Auxiliary-input CRHFs~/~\\
    $\mathbf{SZK}\not\subseteq\mathbf{BPP}$
    \end{tabular}&AI-IV-PoQ
    \\
   \hline
\cref{thm:ioPoQ_from_SRE}
     & Inefficient& $\poly(\secp)$ &
   \begin{tabular}{c}
    Infinitely-often OWFs~/~\\
    $\mathbf{SRE}\not\subseteq\mathbf{BPP}$
    \end{tabular}&IO-IV-PoQ
    \\
   \hline
  \end{tabular}
\end{table} 
\fi

\paragraph{\bf Remarks on completeness-soundness gap.}  
We remark that the above theorems consider (AI-/IO-) IV-PoQ that only have an inverse-polynomial completeness-soundness gap, i.e., 
the honest QPT prover passes verification with probability at least $c$
and any PPT cheating prover passes verification with probability at most $s$ where $c-s\ge 1/\poly(\secp)$ for the security parameter $\secp$.  
Due to the inefficiency of verification, it is unclear if we can generically amplify the gap {\it even by sequential repetition}.\footnote{If we assume soundness against \emph{non-uniform} PPT adversaries, then it is easy to show that sequential repetition generically amplifies the gap. However, we consider the uniform model of adversaries in this paper since otherwise we would need non-uniform assumptions like one-way functions against non-uniform adversaries which is stronger than the mere existence of one-way functions against uniform adversaries.}
Fortunately, we find a stronger definition of soundness called strong soundness which our constructions satisfy and enables us to amplify the gap by sequential repetition. Roughly speaking, strong soundness requires that soundness holds for almost all fixed cheating prover's randomness rather than on average. See \cref{def:strong_soundness} 
for the formal definition. 
This enables us to amplify the completeness-soundness gap to be optimal for any of our constructions. 
However, we remark that this increases the round complexity and in particular, the schemes of \cref{thm:PoQ_from_SZK,thm:AXPoQ_from_SZK} are no longer constant-round if we amplify the completeness-soundness gap. 
This issue could be resolved if we could prove that parallel repetition amplifies the gap, but we do not know how to prove this. Remark that we cannot use existing parallel repetition theorems for interactive arguments because verification is inefficient. 
Indeed, it is observed in \cite{TCC:CanHalSte05} that parallel repetition may not amplify the gap when verification is inefficient even for two-round arguments. Thus, we believe that it is very challenging or even impossible to prove a general parallel repetition theorem for (AI-/IO-)IV-PoQ. Nonetheless, it may be still possible to prove a parallel repetition theorem for our particular constructions, which we leave as an interesting open problem. 

\ifnum\llncs=1
\fi

\paragraph{\bf Implausibility of two-round AI-IV-PoQ.} 
It is natural to ask how many rounds of interaction are needed. 
As already mentioned, it is trivial to construct two-round PoQ if we assume the existence of classically-hard and quantumly-easy problems such as factoring. 
We show evidence that it is inevitable to rely on such an assumption for constructing two-round (AI-/IO-)IV-PoQ. 
In the following, 
we state theorems for AI-IV-PoQ, but they immediately imply similar results for IV-PoQ and IO-IV-PoQ because they are stronger than AI-IV-PoQ.

First, we prove that there is no classical black-box reduction from security of two-round AI-IV-PoQ to standard cryptographic assumptions unless the assumptions do not hold against QPT adversaries. 
\begin{theorem}[Informal]\label{thm:impossible_reduction}
For a two-round AI-IV-PoQ, if its soundness can be reduced to a game-based assumption by a classical black-box reduction, then the assumption does not hold against QPT adversaries.  
\end{theorem}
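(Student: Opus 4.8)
The plan is to give a direct ``win--win'' (meta-reduction-style) argument: the honest QPT prover is itself a legitimate oracle for any black-box soundness reduction, so feeding it to the reduction turns the reduction into a QPT adversary against the assumption.

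First I would unwind the definition of a classical black-box reduction $\cR$ from (auxiliary-input) soundness of a two-round AI-IV-PoQ $\langle\cP,\verifier\rangle$ to a game-based assumption $G$. Such an $\cR$ is a fixed PPT oracle algorithm with the guarantee that for \emph{every} cheating prover $\cP^{*}$ violating AI-soundness---i.e., for all but finitely many auxiliary inputs $z$ the interaction with $\verifier$ makes the inefficient verifier accept with probability more than $s$---the composed algorithm $\cR^{\cP^{*}}$ wins $G$ with non-negligible advantage. The crucial feature is that ``black-box'' is used in the strong sense: $\cR$ uses $\cP^{*}$ only as an oracle that maps a first-round verifier message to a second-round prover message, and the guarantee must hold for every such oracle, regardless of how (or whether efficiently) it is implemented.

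Next I would observe that the honest prover $\cP$ qualifies as such a $\cP^{*}$: by completeness, for \emph{all} auxiliary inputs $z$ the inefficient verifier accepts $\langle\cP(z),\verifier\rangle$ with probability at least $c>s$ (and if one worries that $c$ is ``too large'' for $\cR$'s analysis---a non-issue for standard reductions, where more cheating only helps---one first dilutes $\cP$, e.g.\ by making it abort with a suitable probability, so that it cheats with probability just above $s$ while remaining QPT). Hence the postulated guarantee gives that $\cR^{\cP}$ wins $G$ with non-negligible advantage. It then remains to argue that $\cR^{\cP}$ is a QPT algorithm, and this is exactly where two-roundness is essential: in a two-round protocol $\cP$ is stateless across the polynomially many oracle calls of $\cR$, each call being an independent one-shot execution that reads a classical first-round message and outputs a classical second-round message. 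So $\cR^{\cP}$ is simulated by a QPT machine that runs the classical code of $\cR$ and spins up a fresh copy of $\cP$ for each query; no rewinding of a mid-protocol (non-clonable) quantum state is ever needed. Combining the two observations, $\cR^{\cP}$ is a QPT adversary winning $G$, so $G$ does not hold against QPT adversaries.

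I expect the main obstacle to be not the argument itself but pinning down the formal definitions so that the quantifiers line up: one must define ``game-based assumption'' with an efficiently decidable winning event (so that the QPT composition genuinely refutes $G$) and ``classical black-box reduction'' so that $\cR$ is required to succeed against \emph{all} soundness-violating oracles, including inefficient or quantum ones. Two points make this go through smoothly: (i) $\cP$ cheats on \emph{every} auxiliary input, which is stronger than what a soundness-breaker must do, so any quantifier mismatch is in our favor; and (ii) a valid reduction from soundness may only rely on the cheating prover succeeding on honestly distributed first-round messages, and $\cP$ promises exactly that, so a reduction that secretly used off-distribution behavior of the oracle would not have been a valid soundness reduction in the first place. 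Finally I would remark that the same argument provably fails for three or more rounds, since there $\cR$ may need to continue and rewind an interaction with $\cP$ starting from an intermediate quantum state that cannot be copied---which is the formal reason the theorem is stated only for two-round protocols.
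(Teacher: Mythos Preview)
Your argument is correct and matches the paper's: feed the honest prover to the reduction and observe that the composite is QPT. The paper's formalization is slightly more careful on one point: it only requires the reduction to succeed against \emph{classical} (possibly unbounded, stateless, randomized) oracles, so it first replaces $\cP$ by an unbounded classical machine $\widetilde{\cP}^*$ with identical input--output distribution, invokes the hypothesis on $\widetilde{\cP}^*$, and then uses that a classical $\cR$ making classical queries with fresh randomness per call cannot distinguish $\widetilde{\cP}^*$ from $\cP$---this yields a marginally stronger statement than your version, which assumes the reduction must succeed against arbitrary (including quantum) oracles.
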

\ifnum\cameraready=1
The formal version of the theorem is given in the full version. 
\else
The formal version of the theorem is given in \cref{thm:impossible_reduction_formal}. 
\fi
Here, game-based assumptions are those formalized as a game between the adversary and challenger that include
(but not limited to) general assumptions such as security of OWFs, public key encryption, digital signatures, oblivious transfers, indistinguishability obfuscation, succinct arguments etc. as well as concrete assumptions such as the hardness of factoring, discrete-logarithm, LWE etc.\footnote{This is similar to falsifiable assumptions \cite{C:Naor03,STOC:GenWic11} but there is an important difference that we do not restrict the challenger to be efficient.} 
\ifnum\cameraready=1
See the full version for a formal definition.
\else
See \cref{def:game_based_assumption} for a formal definition.
\fi
In particular, since we believe that quantumly-secure 
OWFs exist, the above theorem can be interpreted as a negative result on constructing two-round AI-IV-PoQ from general 
OWFs.  

The proof idea is quite simple: Suppose that there is a classical black-box reduction algorithm $R$ that is given a malicious prover as an oracle and breaks an assumption. 
Intuitively, the reduction should still work even if it is given the honest quantum prover $\mathcal{P}$ as an oracle. By considering the combination of $R$ and  $\mathcal{P}$ as a single quantum adversary, the assumption is broken. 
We remark this can be seen as an extension of an informal argument in \cite{TQC:BKVV20} where they argue that it is unlikely that a two-round PoQ can be constructed from the hardness of the LWE problem.\footnote{They use one-round PoQ to mean what we call two-round PoQ by counting interaction from the verifier to prover and from the prover to verifier as a single round.} 

Note that \cref{thm:impossible_reduction} only rules out classical reductions. 
One may think that the above argument extends to rule out quantum reductions, but there is some technical difficulty.
Roughly speaking, the problem is that a coherent execution of the honest quantum prover may generate an entanglement between its message register and internal register unlike a coherent execution of a classical cheating prover 
\ifnum\cameraready=1
(see the full version for more explanations).
\else
(see \cref{rem:why_not_quantum} for more explanations).
\fi
\footnote{This observation is due to Mark Zhandry.}
To complement this, we prove another negative result that also captures some class of quantum reductions. 
\begin{theorem}[Informal]\label{thm:oracle_separation_two-round_OWF}
If a cryptographic primitive $\primitive$ has a quantumly-secure construction (possibly relative to a classical oracle), 
then there is a randomized classical oracle relative to which two-round AI-IV-PoQ do not exist but a quantumly-secure construction of $\primitive$ exists.   
\end{theorem}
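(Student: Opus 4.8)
The plan is to exhibit the oracle as a pair $O' = (O, \mathcal{B})$. Here $O$ is the classical oracle relative to which $\primitive$ has a quantumly-secure construction (with $O$ taken to be empty if the construction is unconditional), and $\mathcal{B}$ is a fresh randomized classical oracle whose sole purpose is to ``dequantize'' polynomial-size quantum computations that use $O$: on a classical query consisting of the description of a poly-size oracle quantum circuit $C$ (querying only $O$) together with an input $x$, $\mathcal{B}$ runs $C^{O}(x)$ and returns a sample from its output distribution, using independent fresh coins for each query and treating queries classically (the input register is measured before $\mathcal{B}$ responds). The guiding intuition is that $\mathcal{B}$ is \emph{useless to QPT adversaries}, who could run $C^{O}$ unassisted, yet \emph{extremely useful to classical PPT adversaries}, who gain the ability to sample the output distribution of an arbitrary QPT computation relative to $O$---in particular that of the honest prover of any candidate two-round AI-IV-PoQ.

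First I would show that $\primitive$ stays quantumly secure relative to $O'$: it suffices that the $O$-only construction of $\primitive$ remains secure against QPT adversaries with access to $(O,\mathcal{B})$. Given such an adversary $\mathcal{A}^{O,\mathcal{B}}$, build a QPT adversary $\tilde{\mathcal{A}}^{O}$ that forwards $O$-queries unchanged and answers each (classical) $\mathcal{B}$-query $(C,x)$ internally by running $C^{O}(x)$ and measuring. Since the challenger of $\primitive$ uses only $O$, the views of $\mathcal{A}^{O,\mathcal{B}}$ and $\tilde{\mathcal{A}}^{O}$ are identically distributed, so $\mathcal{A}$'s advantage equals $\tilde{\mathcal{A}}$'s, which is negligible by hypothesis.

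Next I would show that no two-round AI-IV-PoQ exists relative to $O'$. Fix a candidate with PPT verifier $(V_1,V_2)$ and honest QPT prover $P$, all with access to $O'$; on auxiliary input $z$, $V_1$ sends the single message $q$, $P$ replies with $a$, and the inefficient $V_2$ decides from $(z,q,a)$, accepting with probability at least $c$ when $a \leftarrow P^{O'}(z,q)$. I would build a PPT cheating prover $P^{*,O'}$ that on $(z,q)$ makes one $\mathcal{B}$-query and outputs its answer. The queried circuit is a poly-size $O$-only circuit $\hat{P}$ obtained from $P$ by \emph{unfolding}: $P$ itself makes at most polynomially many classical $\mathcal{B}$-queries, each to an $O$-only circuit of size at most $P$'s running time, so replacing each such query by an internal universal-quantum-circuit simulation (with one simulation slot per potential query) yields a single poly-size $O$-only circuit $\hat P$ with $\hat{P}^{O}(z,q)$ distributed exactly as $P^{O'}(z,q)$. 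Then $\mathcal{B}(\hat P, (z,q))$ is a fresh sample from $P^{O'}(z,q)$, so $V_2$ accepts with probability at least $c$, and this holds for every (sufficiently long) $z$---contradicting AI-soundness, which demands infinitely many $z$ on which every PPT prover is accepted with probability at most $s<c$. Two-round-ness is essential: the honest prover's whole strategy is ``sample $a$ from the distribution $q\mapsto P^{O'}(z,q)$'', exactly what one $\mathcal{B}$-query reproduces; with three or more messages the prover must carry internal quantum state between rounds, and a stateless classical $\mathcal{B}$ that can only re-simulate the prover from scratch on a partial transcript cannot faithfully continue such an interaction.

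I expect the main obstacle to be calibrating $\mathcal{B}$ so that it is simultaneously strong enough to dequantize every honest prover and weak enough to leave $\primitive$ intact: a naive oracle returning \emph{any} response $a$ that $V_2$ accepts would be far too powerful (it would solve $O$-relative search problems and, e.g., invert one-way functions built from $O$), so $\mathcal{B}$ must be confined to the quantum-achievable response distribution, i.e., to an honest simulation of $P^{O}$. The remaining care is technical: (i) the self-reference, since the honest prover itself uses $\mathcal{B}$, is dissolved by the unfolding argument rather than by a self-referential definition of $\mathcal{B}$; (ii) fixing $\mathcal{B}$'s interface to answer classical queries with fresh coins is what makes both the dequantization of $P$ and the simulation $\tilde{\mathcal{A}}^{O}$ exact; and (iii) because $\mathcal{B}$ is modeled as a process with fresh coins per query, ``no two-round AI-IV-PoQ relative to $O'$'' is immediate once the cheating prover's acceptance probability---taken over the protocol's randomness and $\mathcal{B}$'s coins---already exceeds the soundness threshold $s$.
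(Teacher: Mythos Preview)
Your proposal is correct and is essentially the paper's own proof: the paper likewise augments $O$ with a randomized classical oracle $Q^O$ that, on input a description of an $O$-oracle quantum circuit $C$ and a string $x$, returns a fresh sample from $C^O(x)$; it then argues that $\primitive$ remains quantumly secure because any QPT adversary can simulate $Q^O$ itself (formalized via the ``semantical primitive'' notion rather than your challenger phrasing, but with the same content), and that any two-round AI-IV-PoQ is broken by a classical prover who queries $Q^O$ on the honest prover's circuit, after first unfolding the prover's own $Q^O$-calls into an $O$-only circuit---exactly your construction.
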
 
\ifnum\cameraready=1
The formal version of the theorem is given in the full version. 
\else
The formal version of the theorem is given in \cref{thm:separation_formal}. 
\fi
The above theorem can be interpreted as a negative evidence on constructing two-round IV-PoQ from a cryptographic primitive for which we believe that quantumly-secure constructions exist (e.g., OWFs, public key encryption, indistinguishability obfuscation etc.)
In particular, the above theorem rules out any constructions that work relative to randomized classical oracles.\footnote{Note that reductions that work relative to \emph{deterministic} classical oracles do not necessarily work relative to \emph{randomized} classical oracles \cite[Section 5]{Aaronson08}.}
\cref{thm:oracle_separation_two-round_OWF} is incomparable to \cref{thm:impossible_reduction} since \cref{thm:oracle_separation_two-round_OWF} does not require the reduction to be classical unlike \cref{thm:impossible_reduction}, but requires that the construction and reduction work relative to randomized classical oracles. 

Again, the proof idea is simple. Suppose that a quantumly-secure construction $f$ of a primitive $\primitive$ exists relative to an oracle $O$. Then we introduce an additional oracle $Q^O$ that takes a description of a quantum circuit $C^O$ with $O$-gates and its input $x$ as input and outputs a classical string according to the distribution of $C^O(x)$. Relative to oracles $(O,Q^O)$, there do not exist two-round AI-IV-PoQ since a classical malicious prover can query the description of the honest quantum prover to $Q^O$ to get a response that passes the verification with high probability. On the other hand, $f$ is quantumly-secure relative to $(O,Q^O)$ since we assume that it is quantumly-secure relative to $O$ and the additional oracle $Q^O$ is useless for quantum adversaries since they can simulate it by themselves.   

We remark that the above theorems do not completely rule out black-box constructions of two-round AI-IV-PoQ from quantumly-hard assumptions. 
For example, consider a quantum black-box reduction that queries a cheating prover with a fixed randomness multiple times. 
Such a reduction is not captured by \cref{thm:impossible_reduction} because it is quantum. Moreover, it is not captured by \cref{thm:oracle_separation_two-round_OWF} because it does not work relative to randomized classical oracles since we cannot fix the randomness of the randomized classical oracle. It is a very interesting open problem to study if such a reduction is possible.

\if0
Proofs of quantumness with inefficient verification relative to random oracles were
constructed implicitly in \cite{STOC:Aaronson10} (FourierFishing) and explicitly in \cite{ACCGSW22} (CollisionHashing). 
In particular, the idea of \cite{ACCGSW22}, namely, constructing a superposition of two computational-basis states
by querying random oracles is somewhat similar to ours.
\fi


\paragraph{\bf Quantum advantage based on quantum primitives weaker than OWFs.}
The existence of OWFs is the most fundamental assumption in classical cryptography.
Interestingly, it has been realized recently that it is not necessarily 
the case in quantum  cryptography~\cite{C:JiLiuSon18,Kre21,C:MorYam22,C:AnaQiaYue22,ITCS:BraCanQia23,TCC:Luowen,EPRINT:CaoXue22b,EPRINT:MorYam22c,KreQiaSinTal22}.
Many quantum cryptographic tasks can be realized with new quantum primitives, which seem to be weaker than OWFs,
such as pseudorandom states generators~\cite{C:JiLiuSon18}, one-way states generators~\cite{C:MorYam22},
and EFI~\cite{ITCS:BraCanQia23}.
Can we construct PoQ (or its variants) from quantum primitives that seem to be weaker than OWFs?
We show that variants of PoQ can be constructed from (classically-secure) quantum-evaluation OWFs (QE-OWFs).
QE-OWFs is the same as the standard classically-secure classical OWFs except that the function evaluation algorithm
is not deterministic classical polynomial-time but quantum polynomial-time. 
\ifnum\cameraready=1
(For its precise definition, see the full version.)
\else
(For its precise definition, see \cref{sec:QEOWFs}.)
\fi
QE-OWFs seem to be weaker than classically-secure classical OWFs. (For example, consider the function $f$ that on input $(x,y)$ outputs $\Pi_L(x)\|g(y)$,
where $L$ is any language in $\mathbf{BQP}\setminus \mathbf{BPP}$,
$\Pi_L$ is a function such that $\Pi_L(x)=1$ if $x\in L$ and $\Pi_L(x)=0$ if $x\notin L$,
and $g$ is any classically-secure classical OWF. 
$f$ is a QE-OWF, and 
$f$ cannot be evaluated in classical polynomial-time if $\mathbf{BQP}\neq\mathbf{BPP}$.
\ifnum\cameraready=1
For details, see the full version.) 
\else
For details, see \cref{sec:QEOWFs}.) 
\fi
We show the following result.
\begin{theorem}\label{thm:QE_OWFs}
If QE-OWFs exist, then quantum-verifier PoQ (QV-PoQ) exist or infinitely-often classically-secure classical OWFs exist.   
\end{theorem}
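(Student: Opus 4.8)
The plan is to study the single most natural candidate protocol built from the QE-OWF and branch according to which property of the function actually holds. Write $f=\{f_\secp\}$ for the QE-OWF and $\mathsf{Eval}$ for its QPT evaluation algorithm, so that $\mathsf{Eval}(1^\secp,x)$ outputs $f_\secp(x)$ with probability $1-\negl(\secp)$. The candidate is the two-message protocol $\Pi_{\mathrm{inv}}$ in which the verifier samples $x\leftarrow\{0,1\}^{n(\secp)}$, computes $y\leftarrow\mathsf{Eval}(1^\secp,x)$ using its quantum power, and sends $y$; the prover replies with a string $x'$; and the verifier reruns $\mathsf{Eval}(1^\secp,x')$ and accepts iff the outcome equals $y$. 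The verifier is QPT rather than PPT only because evaluating $f$ is a quantum operation, which is exactly why $\Pi_{\mathrm{inv}}$ is a quantum-\emph{verifier} PoQ; apart from that it is the textbook ``invert the one-way function'' game.

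The key easy point is that soundness of $\Pi_{\mathrm{inv}}$ is free and holds at \emph{every} security parameter: a PPT cheating prover accepted with probability $\eps(\secp)$ is, up to the negligible evaluation error of $\mathsf{Eval}$, a PPT inverter of $f_\secp$ with success probability $\eps(\secp)-\negl(\secp)$, so classical one-wayness of $f$ forces $\eps$ to be negligible. Because this reduction handles the cheating prover's randomness input-by-input, it essentially gives the strong-soundness variant of \cref{def:strong_soundness}, which is what would let one amplify the completeness-soundness gap afterwards. Hence the entire burden is on completeness --- on whether an honest QPT prover can invert $f$.

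Now dichotomize. If there are a QPT algorithm $\cA$ and a polynomial $p$ with $\Pr_x[\cA(1^\secp,f_\secp(x))\in f_\secp^{-1}(f_\secp(x))]\ge 1/p(\secp)$ for all large $\secp$, then, since a quantum prover can recognize a correct preimage by rerunning $\mathsf{Eval}$, it can repeat $\cA$ about $\secp\cdot p(\secp)$ times and invert with probability $1-2^{-\Omega(\secp)}$; instantiating $\Pi_{\mathrm{inv}}$ with this hard-wired honest prover yields a QV-PoQ with overwhelming completeness and negligible soundness, which is the first disjunct. In the complementary case, no QPT algorithm inverts $f$ with inverse-polynomial probability at all large $\secp$, so $f$ is one-way even against QPT adversaries (at least infinitely often), and a fortiori it is an infinitely-often classically-secure one-way function; the only defect preventing it from being a classical OWF is that its evaluator is quantum. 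So branch once more on whether $f$ is classically computable on average. If some PPT $\cB$ satisfies $\Pr_x[\cB(1^\secp,x)=f_\secp(x)]\ge 1-\negl(\secp)$ at all large $\secp$, then after standard per-input error reduction and absorbing the randomness of $\cB$ into its input, the deterministic polynomial-time map $(x,r)\mapsto\cB(1^\secp,x;r)$ is (infinitely often) hard to invert by the same reduction to classical one-wayness of $f$, and hence is an infinitely-often classically-secure classical OWF --- the second disjunct. Otherwise $f$ is not PPT-computable on average, and then the companion protocol $\Pi_{\mathrm{eval}}$ --- the verifier sends a uniform $x$, the prover returns a claimed value of $f_\secp(x)$, and the verifier checks it by running $\mathsf{Eval}$ --- has overwhelming completeness while a classical prover's acceptance probability is exactly its average-case ability to compute $f$, which is bounded away from $1$ infinitely often; after the same repetition-style boosting of the gap this is again a QV-PoQ.

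The step I expect to be the real obstacle is the classicalization of the evaluator: the PPT algorithm $\cB$ is correct only on \emph{average} over inputs, so the deterministic function derived from it disagrees with $f_\secp$ on a negligible-density set of ``bad'' inputs, and one must rule out a classical inverter steering toward those bad points --- which is presumably why the classical-OWF conclusion comes out only in the infinitely-often form and is the most delicate part. Two further points need care: the honest prover of $\Pi_{\mathrm{inv}}$ in the first case must be a single fixed machine (obtained by hard-wiring the $\cA$ and $p$ supplied there), and the assorted ``infinitely often'' quantifiers must be threaded so that the final primitive is produced along one fixed infinite set of security parameters rather than a moving one.
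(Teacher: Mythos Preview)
The paper takes a simpler route that avoids your $\Pi_{\mathrm{inv}}$ entirely: it uses only $\Pi_{\mathrm{eval}}$ (verifier sends uniform $x$, prover returns a claimed $f(x)$, verifier re-evaluates with $\mathsf{QEval}$ and compares), where $1$-completeness is automatic, and branches directly on whether that single protocol has soundness. If soundness fails, the paper extracts a PPT cheating prover $P^*$ whose acceptance is $\ge 1-1/\poly(\secp)$ for every polynomial on infinitely many $\secp$, sets $g(x,r)\coloneqq P^*(x;r)$, and shows $g$ is an infinitely-often classical OWF. Your detour through $\Pi_{\mathrm{inv}}$ and the Case~A/B split on \emph{quantum} invertibility of $f$ buys nothing: you never actually use the Case~B hypothesis in sub-case~B1, and B1 is essentially the paper's ``soundness fails'' branch with your $\cB$ playing the role of $P^*$ (under the extra, unneeded assumption that $\cB$ succeeds at all large $\secp$ rather than only infinitely often).

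Sub-case~B2 has a genuine gap. The negation of ``some PPT $\cB$ computes $f$ with probability $1-\negl$ at all large $\secp$'' only says that for each PPT $\cB$ there is a polynomial $q_\cB$ and infinitely many $\secp$ with $\Pr_x[\cB(x)=f(x)]\le 1-1/q_\cB(\secp)$. This is a per-adversary, infinitely-often bound with an adversary-dependent gap; it does not produce a single $s$ with $s$-soundness, and repetition cannot change at which $\secp$ the bound holds. Nothing in B2 prevents a PPT $\cB$ with acceptance $>3/4$ on infinitely many $\secp$ (just not on all large $\secp$), so $\Pi_{\mathrm{eval}}$ need not be a QV-PoQ there. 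The fix is to run the paper's own dichotomy inside B2 --- at which point the whole A/B/B1/B2 scaffolding collapses to the paper's two cases.

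On the technical step you flagged: your worry that an inverter of $g$ could steer its output into the small ``bad'' set where $g\ne f$ is exactly the issue, and the paper resolves it by passing through \emph{distributional} one-wayness. One shows that a near-uniform preimage sampler for $g$ yields a near-uniform preimage sampler for $f$ up to an additive statistical-distance loss of order the density of bad $(x,r)$; since one-wayness of $f$ implies distributional one-wayness of $f$ this is a contradiction, and distributional OWFs imply OWFs by Impagliazzo--Luby. Incidentally, the ``infinitely often'' in the conclusion is not caused by the bad-points issue as you suggest; it comes from the fact that the soundness-breaking $P^*$ is only guaranteed to approximate $f$ on infinitely many $\secp$.
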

\ifnum\cameraready=1
A proof of the theorem is given in the full version.
\else
A proof of the theorem is given in \cref{sec:QEOWFs}.
\fi
QV-PoQ is the same as PoQ except that the verifier is a QPT algorithm.
Such a new notion of PoQ will be useful, for example, when many local quantum computers are connected
over classical internet: A quantum local machine may want to check whether it is interacting with a quantum computer or not over a classical channel.

The proof idea of \cref{thm:QE_OWFs} is as follows.
Let $f$ be a QE-OWF. We construct QV-PoQ as follows:
The verifier first chooses $x\gets\bit^n$ and sends it to the prover.
The prover then returns $y$. The verifier finally evaluates $f(x)$ by himself, and accepts if it is equal to $y$.
If the soundness holds, we have QV-PoQ.
On the other hand, if the soundness does not hold, then it means that 
$f$ can be evaluated in PPT, which means that $f$ is a classical OWF.
It is an interesting open problem whether PoQ or its variants can be constructed from pseudorandom quantum states generators,
one-way states generators, or EFI.

Infinitely-often classically-secure OWFs imply IO-IV-PoQ (\cref{thm:ioPoQ_from_SRE}), and therefore \cref{thm:QE_OWFs}
shows that the existence of QE-OWFs anyway implies quantum advantage (i.e., QV-PoQ or IO-IV-PoQ). 
Moreover, QV-PoQ in \cref{thm:QE_OWFs} implies IV-PoQ (and therefore IO-IV-PoQ).
(In general, QV-PoQ does not necessarily imply IV-PoQ, but in our case, it does because
our construction of QV-PoQ is a two-round protocol with 
the verifier's first message being a uniformly-randomly-chosen classical bit string.)
Hence we have the result that QE-OWFs implies IO-IV-PoQ in either case.

\if0
\paragraph{\bf NISQ implementations.}
One important motivation underlying many previous results on quantum advantage for sampling~\cite{STOC:AarArk11,BreJozShe10,NatPhys:BFNV19,FKMNTT18,BreMonShe16,Morimae17} 
and inefficiently-verifiable searching~\cite{CCC:AarChe17,AarGun19,STOC:Aaronson10} is experimental implementation on near-term quantum computers, such as random circuits.
It is an interesting open problem whether the prover's operation in our PoQ can be simplified so that it can be demonstrated on a NISQ device.
\fi

\subsection{Technical Overview}\label{sec:overview}
In this subsection, we provide technical overview of our main result, \cref{thm:PoQ_from_Com}, namely,
the construction of IV-PoQ from statistically-hiding commitments.
(The construction of AI-IV-PoQ is similar.)
Our construction is based on
PoQ of \cite{NatPhys:KMCVY22}.
Let us first review their protocol.
Their protocol can be divided into two phases.
In the first phase,
the verifier first generates a pair of a trapdoor and a
trapdoor 2-to-1 collision resistant hash function $F$.
The verifier sends $F$ to the prover.
The prover generates the quantum state $\sum_{x\in\bit^\ell}\ket{x}\ket{F(x)}$, and measures the second register
in the computational basis
to obtain the measurement result $y$. The post-measurement state is
$\ket{x_0}+\ket{x_1}$, where $F(x_0)=F(x_1)=y$.
This is the end of the first phase.

In the second phase,
the verifier chooses a challenge bit $c\in\bit$ uniformly at random.
If $c=0$, the verifier asks the prover to measure the state in the computational basis.
The verifier accepts and halts if the prover's measurement result is $x_0$ or $x_1$. (The verifier can compute
$x_0$ and $x_1$ from $y$, because it has the trapdoor.)
The verifier rejects and halts if the prover's measurement result is not correct.
If $c=1$, the verifier sends the prover a bit string $\xi\in\bit^\ell$ which is chosen uniformly at random.
The prover changes the state $\ket{x_0}+\ket{x_1}$ into the state
$\ket{\xi\cdot x_0}\ket{x_0}+\ket{\xi\cdot x_1}\ket{x_1}$, and
measures the second register in the Hadamard basis.
If the measurement result is $d\in\bit^\ell$, the post-measurement state is
$\ket{\xi\cdot x_0}+(-1)^{d\cdot(x_0\oplus x_1)}\ket{\xi\cdot x_1}$, 
which is one of the BB84 states $\{\ket{0},\ket{1},\ket{+},\ket{-}\}$.
The verifier then asks the prover to measure this single-qubit state in a certain basis,
and accepts if the measurement result is appropriate. This is the end of the second phase.
Intuitively, the soundness comes from the collision resistance of $F$:
If a malicious PPT prover is accepted by the verifier with some high probability for both challenges, $c=0$ and $c=1$,
we can construct a PPT adversary that can find both $x_0$ and $x_1$ with non-negligible probability, which contradicts the collision resistance.

Therefore, once we can construct an interactive protocol where a verifier can let a prover generate $\ket{x_0}+\ket{x_1}$ in such a way that
no malicious PPT prover can learn both $x_0$ and $x_1$, we can construct PoQ
by running the second phase of \cite{NatPhys:KMCVY22} on it.
Can we do that with only OWFs? 
Our key idea is to {\it coherently} execute statistically-hiding classical bit commitments, which can be constructed
from OWFs~\cite{SIAM:HNORV09}. 
(A similar idea was also used in \cite{ITCS:MorYam23}.)
The prover plays the role of the sender of the commitment scheme,
and the verifier plays the role of the receiver of the commitment scheme.
The prover first generates the state
$\sum_{b\in\bit}\sum_{x\in\bit^\ell}\ket{b}\ket{x}$,
which is the superposition of the bit $b\in\bit$ to commit
and sender's random seed $x\in\bit^\ell$.
The prover and the verifier then run the interactive commitment phase.
When the prover computes its message, it coherently computes the message on its state, and measures a register to obtain
the measurement result.\footnote{For example, in the prover's $j$th round, if the prover possesses a state $\sum_{b\in\bit}\sum_{x\in X_b}\ket{b}\ket{x}$, where $X_b$ is a certain set,
it changes the state into
$\sum_{b\in\bit}\sum_{x\in X_b}\ket{b}\ket{x}\ket{f_j(b,x,t_j)}$, and measures the third register to obtain the measurement result $\alpha_j$,
where $f_j$ is the function that computes sender's $j$th message, and $t_j$ is the transcript obtained before the $j$th round.
The prover sends $\alpha_j$ to the verifier as the sender's $j$th message.}
The prover sends the measurement result as the sender's message to the verifier.
The verifier runs classical receiver's algorithm, and sends classical message to the prover.
At the end of the commit phase,
the honest prover possesses
the state
\begin{equation}
   \ket{0}\sum_{x\in X_{0,t}}\ket{x}+\ket{1}\sum_{x\in X_{1,t}}\ket{x}, 
   \label{Pstate}
\end{equation}
where $X_{b,t}$ is the set of sender's random seeds that are consistent with the committed bit $b$
and the transcript $t$,
which is the sequence of all classical messages exchanged between the prover and the verifier.

If $|X_{0,t}|=|X_{1,t}|=1$, \cref{Pstate}
is $\ket{0}\ket{x_0}+\ket{1}\ket{x_1}$, where $x_b$ is the unique element of $X_{b,t}$ for
each $b\in\bit$.
In that case, we can run the second phase of \cite{NatPhys:KMCVY22} on it.\footnote{Strictly speaking, 
$\ket{0}\ket{x_0}+\ket{1}\ket{x_1}$ is not equal to
$\ket{x_0}+\ket{x_1}$, but the protocol can be easily modified.
Given $\xi$, the prover has only to change
$\ket{0}\ket{x_0}+\ket{1}\ket{x_1}$ 
to
$\ket{\xi\cdot x_0}\ket{x_0}+\ket{1\oplus (\xi\cdot x_1)}\ket{x_1}$.
}
However, in general,
$|X_{0,t}|=|X_{1,t}|=1$ is not always satisfied, 
and if it is not satisfied, we do not know how to realize PoQ
from the state of \cref{Pstate}. This is our first problem.
Moreover, even if
$|X_{0,t}|=|X_{1,t}|=1$ is satisfied, we have the second problem:
The efficient verifier cannot compute $(x_0,x_1)$, because there is no trapdoor. 
The efficient verifier therefore cannot check whether the prover passes the tests or not.\footnote{
In \cite{ITCS:MorYam23}, they resolve the first problem by using a specific commitment scheme of \cite{C:NOVY92} and resolve the second problem by simply assuming the existence of a trapdoor. However, since the commitment scheme of   \cite{C:NOVY92} relies on one-way {\it permutations}, their idea does not work based on OWFs even if we give up efficient verification. 
}

Unfortunately, we do not know how to solve the second problem, and therefore we have to give up the efficient verification.
On the other hand,
we can solve the first problem by introducing a new hashing technique, which is similar to \cite{VV86}.
First, we notice that
$|X_{0,t}|\simeq|X_{1,t}|$ with overwhelming probability,
because otherwise the statistical-hiding of the classical bit commitment
scheme is broken.
Next, let $\mathcal{H}\coloneqq\{h:\cX\to\cY\}$
be a pairwise-independent hash family with $\cX=\bit^\ell$.
The verifier chooses $h_0,h_1\in\mathcal{H}$ uniformly at random, and sends $(h_0,h_1)$ to the prover.
The prover changes the state of \cref{Pstate} into
   $
   \ket{0}\sum_{x\in X_{0,t}}\ket{x}\ket{h_0(x)}+\ket{1}\sum_{x\in X_{1,t}}\ket{x}\ket{h_1(x)}, 
   $
and measures the third register in the computational-basis to obtain the measurement result $y$. 
We show that if $|\cY|$ is chosen so that $|\cY|\simeq 2|X_{b,t}|$,
the state collapses by the measurement
to
$\ket{0}\ket{x_0}+\ket{1}\ket{x_1}$
with constant probability,
where $x_b\in X_{b,t}\cap h_b^{-1}(y)$ for $b\in\bit$.
The remaining problem is that the efficient verifier cannot compute $|X_{b,t}|$,
and therefore it cannot find the appropriate $|\cY|$.
This problem is solved by noticing that even if the verifier chooses $|\cY|$ randomly,  
it is $\simeq2|X_{b,t}|$ with non-negligible probability. 
More precisely, let $m$ be an integer such that $(1+\epsilon)^m\ge 2^{\ell+1}$,
where $0<\epsilon<1$ is a small constant (which we take $\epsilon=1/100$, for example).
Then, we show that there exists a $j^*\in\{0,1,...,m-1\}$ such that $\lceil (1+\epsilon)^{j^*}\rceil\simeq 2|X_{b,t}|$.
Therefore, if the efficient verifier chooses $j\in\{0,1,...,m-1\}$ uniformly at random, 
and sets $\cY\coloneqq [\lceil (1+\epsilon)^{j}\rceil]$,
then $|\cY|\simeq 2|X_{b,t}|$ is satisfied with probability $1/m=1/\poly(\secp)$.

In summary, the efficient verifier can let the honest prover generate
$\ket{0}\ket{x_0}+\ket{1}\ket{x_1}$ with non-negligible probability.
Fortunately, the second phase of \cite{NatPhys:KMCVY22} is a public coin one, which means that
all messages from the verifier are uniformly-chosen random bit strings,
and therefore our efficient verifier can send all its messages without
doing any inefficient computation (such as finding an element of $X_{b,t}\cap h_b^{-1}(y)$, etc.).
All verifications are later done by the inefficient verifier.

The soundness of our construction is shown from
the computational-binding of the classical bit commitment scheme.
In the soundness proof of \cite{NatPhys:KMCVY22}, they use the fact that no PPT malicious prover
can find both $x_0$ and $x_1$, which comes from the collision resistance.
In our case, we have that property from the computational-binding of the classical bit commitment scheme.
In a similar way as the soundness proof of \cite{NatPhys:KMCVY22},
we can construct
a PPT adversary $\cA$ that 
can find both $x_0$ and $x_1$ from a PPT malicious prover
that passes both challenges with some high probability.
We can then construct a PPT adversary $\cB$ that
breaks computational-binding of the classical bit commitment scheme
from $\cA$.

There is, however, a large difference in our case from that of \cite{NatPhys:KMCVY22}. 
In the protocol of \cite{NatPhys:KMCVY22}, the honest prover's state is always $\ket{x_0}+\ket{x_1}$, but in our case
$|X_{0,t}\cap h_0^{-1}(y)|=|X_{1,t}\cap h_1^{-1}(y)|=1$ is not always satisfied.
In order to keep the $1/\poly$ completeness-soundness gap in our protocol, we need a trick
for the algorithm of the inefficient verifier.
The inefficient verifier first checks whether
$|X_{0,t}\cap h_0^{-1}(y)|=|X_{1,t}\cap h_1^{-1}(y)|=1$ is satisfied or not.
If it is satisfied, the inefficient verifier computes the unique element
$x_b\in X_{b,t}\cap h_b^{-1}(y)$ for each $b\in\bit$, and checks whether the transcript passes the second phase of the protocol of \cite{NatPhys:KMCVY22} or not.
On the other hand, if $|X_{0,t}\cap h_0^{-1}(y)|=|X_{1,t}\cap h_1^{-1}(y)|=1$ is not satisfied, we need some trick.
A naive attempt would be to always accept in such a case.
Intuitively, this would give a $1/\poly$ completeness-soundness gap because we have a constant completeness-soundness gap conditioned on $|X_{0,t}\cap h_0^{-1}(y)|=|X_{1,t}\cap h_1^{-1}(y)|=1$ by \cite{NatPhys:KMCVY22}  
and such an event occurs with probability $1/\poly$ as explained above.  
However, there is a flaw in the argument because a malicious prover may change the probability that $|X_{0,t}\cap h_0^{-1}(y)|=|X_{1,t}\cap h_1^{-1}(y)|=1$ holds. 
For example, if it can control the probability to be $1$, then it passes the verification with probability $1$, which is even higher than the honest quantum prover's success probability!
Due to a similar reason, an attempt to let the inefficient verifier always reject when $|X_{0,t}\cap h_0^{-1}(y)|=|X_{1,t}\cap h_1^{-1}(y)|=1$ is not satisfied also does not work.
Our idea is to take the middle of the two attempts:
If 
$|X_{0,t}\cap h_0^{-1}(y)|=|X_{1,t}\cap h_1^{-1}(y)|=1$ is not satisfied,
the inefficient verifier accepts with probability $s$ and rejects with probability $1-s$,
where $s$ is the soundness parameter of the PoQ protocol of \cite{NatPhys:KMCVY22},
i.e., for any malicious prover, the verifier accepts with probability at most $s+\negl(\secp)$.
Let $p_{\mathbf{good}}$ be the probability that
$|X_{0,t}\cap h_0^{-1}(y)|=|X_{1,t}\cap h_1^{-1}(y)|=1$ is satisfied in the interaction between the honest prover and the verifier.
Then, the probability that the inefficient verifier accepts the honest prover is at least
$p_{\mathbf{good}}c+(1-p_{\mathbf{good}})s$,
where $c$ is the completeness parameter of the PoQ protocol of \cite{NatPhys:KMCVY22},
i.e., the verifier accepts the honest prover with probability at least $c$.
On the other hand, we show that the soundness parameter of our protocol is also $s$. 
(Intuitively, this is because if $|X_{0,t}\cap h_0^{-1}(y)|=|X_{1,t}\cap h_1^{-1}(y)|=1$ is satisfied, then a malicious prover can pass the verification 
with probability at most $s+\negl(\secp)$ by the soundness of the PoQ protocol of \cite{NatPhys:KMCVY22}, 
and 
if $|X_{0,t}\cap h_0^{-1}(y)|=|X_{1,t}\cap h_1^{-1}(y)|=1$ is not satisfied, 
the verifier accepts with probability $s$ regardless of the prover's behavior.)
Therefore, we have
$p_{\mathbf{good}}c+(1-p_{\mathbf{good}})s-s=p_{\mathbf{good}}(c-s)\ge1/\poly$,
because $p_{\mathbf{good}}\ge1/\poly$ as we have explained.
In this way, we can achieve the $1/\poly$ completeness-soundness gap.

Finally, in our construction, the inefficient verifier
is enough to be a classical deterministic polynomial-time algorithm
that queries the $\mathbf{NP}$ oracle, because as we have explained above, inefficient computations that
the inefficient verifier has to do are verifying
$|X_{0,t}\cap h_0^{-1}(y)|=|X_{1,t}\cap h_1^{-1}(y)|=1$
and finding the single element $x_b\in X_{b,t}\cap h_b^{-1}(y)$ for each $b\in\bit$.

\subsection{Related Works}
IV-PoQ from random oracles was
constructed in \cite{ACCGSW22}, which they call Collision Hashing. 
Their construction is based on the observation that if the state $\sum_x\ket{x}\ket{g(x)}$ is generated, where $g$ is a random oracle,
and the second register is measured in the computational basis, the post-measurement state
$\sum_{x\in g^{-1}(y)}\ket{x}$ corresponding to the measurement result $y$ is a superposition 
of two computational-basis states with some probability on which the second phase of \cite{NatPhys:KMCVY22} can be run.
(Actually, because they assume random oracles, the non-interactive protocol of \cite{TQC:BKVV20} can be run instead of \cite{NatPhys:KMCVY22}.)
This idea seems to be somehow related to our idea. 

\cite{STOC:AarAmb15} studied a sampling problem, Fourier Sampling, where
given an oracle $f:\bit^n\to\{+1,-1\}$, it is required to sample from
the distribution $\{p_y\}_y$, where 
$p_y\coloneqq 2^{-n}\hat{f}(y)^2=\left(\frac{1}{2^n}\sum_{x\in\bit^n}f(x)(-1)^{x\cdot y}\right)^2$,
within an additive error.  It needs exponentially-many queries to classically solve it
relative to a random oracle.
\cite{STOC:Aaronson10} also introduced a search problem, Fourier Fishing, 
where given an oracle
$f:\bit^n\to\{+1,-1\}$,
find $z\in\bit^n$ such that $|\hat{f}(z)|\ge1$. 
It needs exponentially-many queries to classically solve it relative to a random oracle.
The verification of Fourier Fishing can be done inefficiently.
\cite{STOC:Aaronson10} also introduced a decision problem, Fourier Checking, and show that
it requires exponentially-many queries to solve it classically
relative to a certain oracle.
Whether $\mathbf{BQP}\neq\mathbf{BPP}$ relative to a random oracle is an open problem,
and given the Aaronson-Ambainis conjecture~\cite{TC:AarAmb14}, showing it seems to be difficult.

\cite{CCC:AarChe17}
showed that if 
OWFs exist, then there are
oracles $A\in\mathbf{P}/\mathbf{poly}$ such that $\mathbf{BPP}^A \neq \mathbf{BQP}^A$ 
(and even $\mathbf{BQP}^A\not\subset \mathbf{SZK}^A$).
The paper also showed that
if there exist subexponentially-secure OWFs,
then Fourier Sampling and Fourier Fishing are classically hard relative to oracles in $\mathbf{P}/\mathbf{poly}$.
Regarding
the possibility of removing the oracles, the authors
say that
``{\it ... in the unrelativized world, there seems to be no hope at present of proving $\mathbf{BPP} \neq \BQP$ under any hypothesis nearly as weak as the existence of one-way functions}'',
which suggests the difficulty of demonstrating quantum advantage based only on one-way functions.
We bypass the difficulty by considering interactive protocols.

It was pointed out in \cite{ITCS:LiuLiuQian22} that
the complexity assumption of 
$\mathbf{PP}\neq \mathbf{BPP}$ is necessary for the existence of PoQ.
A similar idea can be applied to show that
$\mathbf{PP}\neq \mathbf{BPP}$ is necessary for the existence of
(AI-/IO-)IV-PoQ. 
\ifnum\cameraready=1
(For the convenience of readers, we provide a proof in the full version.)
\else
(For the convenience of readers, we provide a proof in \cref{sec:necessity_assumption}.)
\fi
We remark that the proof holds even if we allow the honest prover to perform post-selection.
Moreover, it holds even if the verifier in the first phase is unbounded-time.  

Unconditional quantum advantage over restricted classical computing
was also studied~\cite{BraGosKon18,BraGosKonTom20,STOC:WKST19,STOC:GriSch20}.
Unconditional separations between quantum and classical computing are appealing, 
but in this paper we do not focus on the setups of restricting classical computing.
Note that showing unconditional quantum advantage without restricting classical computing is at least as hard as proving 
$\mathbf{PP}\neq \mathbf{BPP}$ 
\ifnum\cameraready=1
(\cite{ITCS:LiuLiuQian22}), 
\else
(\cite{ITCS:LiuLiuQian22} and \cref{sec:necessity_assumption}), 
\fi
which is a major open problem in complexity theory.

The idea of coherently running statistically-hiding commitments was first introduced in \cite{ITCS:MorYam23}.
However, they could apply the idea only to the specific commitment scheme of \cite{C:NOVY92} whereas we can apply it to \emph{any} statistically-hiding commitments. This is made possible by introducing the hashing technique as explained in \cref{sec:overview}.

\if0
\mor{Aaronson Chen showed that $\mathbf{NP}\not\subseteq\mathbf{BPP}$ is necessary for 
$SampBQP^A\neq SampBPP^A$ for an oracle A in P/poly.}
\fi

\if0
\begin{table}[h]
 \caption{Some results on quantum advantage}
 \label{table:QA}
 \centering
  \begin{tabular}{clll}
   \hline
   Type & Verification & Non-interactive? & Assumption \\
   \hline \hline
   Multiplicative-error sampling & Maybe impossible & Yes & PH does not collapse \\
   Additive-error sampling & Maybe impossible & Yes & Ad hoc~/~random oracle \\
   HOG and XHOG & Inefficient (public) & Yes & Ad hoc \\
   FOURIER FISHING & Inefficient (public) & Yes & random oracle \\
   Specific problems in $\mathbf{NP}$ (such as factoring) & Efficient (public) & Yes & Specific assumptions \\
   Proofs of quantumness & Efficient (private) & No & TDCRHF~/~TDP~/~QHE \\
   Yamakawa-Zhandry & Efficient (public) & Yes & Random oracle \\
   $\mathbf{BQP}^A\neq\mathbf{BPP}^A$~\cite{CCC:AarChe17}& Inefficient (QPT) & Yes & OWFs+ oracles $A\in\mathbf{P}/\mathbf{poly}$\\ 
   FOURIER SAMPLING~\cite{CCC:AarChe17}& May be impossible & Yes & subexp-OWFs+ oracles $A\in\mathbf{P}/\mathbf{poly}$\\ 
   FOURIER FISHING~\cite{CCC:AarChe17}& Inefficient(public) & Yes & subexp-OWFs+ oracles $A\in\mathbf{P}/\mathbf{poly}$\\ 
   Our results & Inefficient (public) & No & OWFs~/~$\mathbf{CZK}\not\subseteq\mathbf{BPP}$ \\ 
   \hline
  \end{tabular}
\end{table}
\fi

\section{Preliminaries}\label{sec:preliminaries}

\subsection{Basic Notations}
\label{sec:basic_notations}

We use the standard notations of quantum computing and cryptography.
We use $\secp$ as the security parameter.
$[n]$ means the set $\{1,2,...,n\}$.
For any set $S$, $x\gets S$ means that an element $x$ is sampled uniformly at random from the set $S$.
For a set $S$, $|S|$ means the cardinality of $S$.
We write $\negl$ to mean a negligible function and $\poly$ to mean a polynomial. 
PPT stands for (classical) probabilistic polynomial-time and QPT stands for quantum polynomial-time.
For an algorithm $A$, $y\gets A(x)$ means that the algorithm $A$ outputs $y$ on input $x$.
For two bit strings $x$ and $y$, $x\|y$ means the concatenation of them.
For simplicity, we sometimes omit the normalization factor of a quantum state.
(For example, we write $\frac{1}{\sqrt{2}}(|x_0\rangle+|x_1\rangle)$ just as
$|x_0\rangle+|x_1\rangle$.)
$I\coloneqq|0\rangle\langle0|+|1\rangle\langle 1|$ is the two-dimensional identity operator.
For the notational simplicity, we sometimes write $I^{\otimes n}$ just as $I$ when
the dimension is clear from the context.

\if0
$\|X\|_1\coloneqq\mbox{Tr}\sqrt{X^\dagger X}$ is the trace norm.
$\mbox{Tr}_\regA(\rho_{\regA,\regB})$ means that the subsystem (register) $\regA$ of the state $\rho_{\regA,\regB}$ on
two subsystems (registers) $\regA$ and $\regB$ is traced out.
For simplicity, we sometimes write $\mbox{Tr}_{\regA,\regB}(|\psi\rangle_{\regA,\regB})$ to mean
$\mbox{Tr}_{\regA,\regB}(|\psi\rangle\langle\psi|_{\regA,\regB})$.
$I$ is the two-dimensional identity operator. For simplicity, we sometimes write $I^{\otimes n}$ as $I$ 
if the dimension is clear from the context.
For the notational simplicity, we sometimes write $|0...0\rangle$ just as $|0\rangle$,
when the number of zeros is clear from the context.
For two pure states $|\psi\rangle$ and $|\phi\rangle$,
we sometimes write $\||\psi\rangle\langle\psi|-|\phi\rangle\langle\phi|\|_1$
as
$\||\psi\rangle-|\phi\rangle\|_1$
to simplify the notation.
$F(\rho,\sigma)\coloneqq\|\sqrt{\rho}\sqrt{\sigma}\|_1^2$
is the fidelity between $\rho$ and $\sigma$.
We often use the well-known relation between the trace distance and the fidelity:
$1-\sqrt{F(\rho,\sigma)}\le\frac{1}{2}\|\rho-\sigma\|_1\le\sqrt{1-F(\rho,\sigma)}$.
\fi

\ifnum\llncs=0
\subsection{Pairwise-Independent Hash Family}
\begin{definition}
A family of hash functions $\mathcal{H}\coloneqq \{h:\cX\to\cY\}$ 
is pairwise-independent
if 
for any two $x\neq x'\in \cX$ and any two $y,y'\in\cY$,
$
\Pr_{h\gets\mathcal{H}}[h(x)=y\wedge h(x')=y']=\frac{1}{|\cY|^2}.
$
\end{definition}

\ifnum\llncs=0
We use the following lemma about pairwise independent hashes.

\begin{lemma}
\label{lem:hash2}
Let $\mathcal{H}\coloneqq\{h:\cX\to\cY\}$
be a pairwise-independent hash family such that $|\cX|\ge2$.
Let $S\subseteq\cX$ be a subset of $\cX$.
For any $y\in\cY$,
\begin{align}
\Pr_{h\gets\mathcal{H}}[|S\cap h^{-1}(y)|=1]\ge
\frac{|S|}{|\cY|}
-\frac{|S|^2}{|\cY|^2}.
\label{ge2}
\end{align}
\end{lemma} 
A similar lemma appeared as an intermediate step for proving Valiant-Vazirani theorem~\cite{VV86} (see also \cite[Lemma 17.19]{AroraBarak09}).  
We give a proof of \Cref{lem:hash2} in \Cref{sec:proof_hashing} for completeness.
\fi

\subsection{OWFs}
\begin{definition}[OWFs]
\label{def:OWFs}
A function $f:\bit^*\to\bit^*$ is a (classically-secure) OWF if
it is computable in classical deterministic polynomial-time, and
for any PPT adversary $\cA$, there exists a negligible function $\negl$ such that
for any $\secp$,
\begin{equation}
\Pr[f(x')=f(x):x'\gets\cA(1^\secp,f(x)),x\gets\bit^\secp]
\le\negl(\secp).
\end{equation}
\end{definition}

\if0
\begin{remark}
In this paper, what we need is {\it classical-secure} one-way functions, namely, one-way functions secure against {\it PPT} adversaries.
No quantum-secure one-way functions (that is a one-way function secure against QPT adversaries) is necessary.
\end{remark}
\fi

\begin{definition}[Infinitely-often OWFs]
\label{def:IOOWFs}
A function $f:\bit^*\to\bit^*$ is a (classically-secure) infinitely-often OWF if
it is computable in classical deterministic polynomial-time, and there exists an infinite set $\Lambda\subseteq \mathbb{N}$ such that 
for any PPT adversary $\cA$, 
\begin{equation}
\Pr[f(x')=f(x):x'\gets\cA(1^\secp,f(x)),x\gets\bit^\secp]
\le\negl(\secp)
\end{equation}
for all $\secp\in \Lambda$. 
\end{definition}

\if0
\begin{definition}[Infinitely-often classically-secure OWFs]
\label{def:IOOWFs_alt}
A function $f:\bit^*\to\bit^*$ is an infinitely-often classically-secure OWF if
it is computable in classical deterministic polynomial-time, and
for any PPT adversary $\cA$ and polynomial $\poly$, 
\begin{equation}
\Pr[f(x')=f(x):x'\gets\cA(1^\secp,f(x)),x\gets\bit^\secp]
\le\frac{1}{\poly(\secp)}
\end{equation}
for infinitely many $\secp$. 
\takashi{I added this definition.}
\end{definition}
\fi

\begin{definition}[Auxiliary-input function ensemble]
An auxiliary-input function ensemble is a collection of functions
$\mathcal{F}\coloneqq \{f_\sigma:\bit^{p(|\sigma|)}\to \bit^{q(|\sigma|)}\}_{\sigma\in\bit^*}$, 
where $p$ and $q$ are polynomials. We
call $\mathcal{F}$ polynomial-time computable if there is a classical deterministic
polynomial-time algorithm $F$ such that for every $\sigma\in\bit^*$ and $x\in\bit^{p(|\sigma|)}$, we
have $F(\sigma,x)=f_\sigma(x)$.
\end{definition}

\begin{definition}[Auxiliary-input OWFs]
A (classically-secure) auxiliary-input OWF is a polynomial-time computable
auxiliary-input function ensemble $\mathcal{F}\coloneqq \{f_\sigma:\bit^{p(|\sigma|)}\to \bit^{q(|\sigma|)}\}_{\sigma\in\bit^*}$
such that for every
uniform PPT adversary $\cA$ and a polynomial $\poly(\secp)$, there exists an infinite set $\Lambda\subseteq\bit^*$ such that,
\begin{equation}
\Pr[f_\sigma(x')=f_\sigma(x): x'\gets\cA(\sigma, f_\sigma(x)),x\gets\bit^{p(|\sigma|)}]
\le\frac{1}{\poly(|\sigma|)}
\end{equation}
for all $\sigma\in \Lambda$. 
\end{definition}
\begin{remark}
It is easy to see that OWFs imply infinitely-often OWFs, and infinitely-often OWFs imply auxiliary-input OWFs.
\end{remark}

\begin{theorem}[\cite{OW93}]
\label{thm:AOWFs}
Auxiliary-input OWFs exist if $\mathbf{CZK}\not\subseteq \mathbf{BPP}$.    
\end{theorem}

\begin{remark}
As is pointed out in \cite{SIAM:Vadhan06},
auxiliary-input OWFs secure against non-uniform
PPT adversaries exist if
$\mathbf{CZK}\not\subseteq \mathbf{P}/\mathbf{poly}$.    
\end{remark}

\ifnum\llncs=1
\begin{definition}[Distributionally OWFs~\cite{FOCS:ImpLub89}]\label{def:dOWFs}
A polynomial-time-computable function $f:\bit^*\to\bit^*$ is distributionally one-way
if there exists a polynomial $p$ such that for any PPT
algorithm $\cA$ and all sufficiently large $\secp$, the statistical difference between 
$\{(x,f(x))\}_{x\gets\bit^\secp}$ and $\{(\cA(1^\secp,f(x)),f(x))\}_{x\gets\bit^\secp}$ is greater than $1/p(\secp)$.
\end{definition}

The following result is known.
\begin{lemma}[\cite{FOCS:ImpLub89}]
\label{lem:dOWFs}
If distributionally OWFs exist then OWFs exist.
\end{lemma}
\fi

\subsection{Commitments}
\begin{definition}[Statistically-hiding and computationally-binding classical bit commitments]
\label{def:commitments}
A statistically-hiding and computationally-binding classical bit commitment scheme is an interactive protocol $\langle \cS,\cR\rangle$
between two PPT algorithms $\cS$ (the sender) and $\cR$ (the receiver)
such that
\begin{itemize}
    \item 
    In the commit phase, 
    $\cS$ takes $b\in\bit$ and $1^\secp$ as input and $\cR$ takes $1^\secp$ as input. 
    $\cS$ and $\cR$ exchange classical messages.
    The transcript $t$, i.e., the sequence of all classical messages exchanged between $\cS$ and $\cR$, is called a commitment. At the end of the commit phase, $\cS$ privately outputs a decommitment $\mathsf{decom}$.  
   \item
   In the open phase, $\cS$ sends $(b,\mathsf{decom})$ to $\cR$. 
   $\cR$ on input $(t,b,\mathsf{decom})$ outputs $\top$ or $\bot$.
\end{itemize}
We require the following three properties.

\paragraph{\bf Perfect Correctness:}
For all $\secp\in \mathbb{N}$ and $b\in \bit$, if $\cS(1^\secp,b)$ and $\cR(1^\secp)$ behave honestly, $\Pr[\top\gets \cR]=1$. 

\paragraph{\bf Statistical Hiding:}
Let us consider the following security game between the honest sender $\cS$ and
a malicious receiver $\cR^*$:
\begin{enumerate}
    \item 
$\cS(b,1^\secp)$ and $\cR^*(1^\secp)$ run the commit phase.
    \item 
$\cR^*$ outputs $b'\in\bit$.
\end{enumerate}
We say that the scheme is statistically hiding if for any computationally unbounded adversary $\cR^*$,
$
|\Pr[0\gets\cR^*|b=0] 
-\Pr[0\gets\cR^*|b=1] 
|\le\negl(\secp).
$

\paragraph{\bf Computational Binding:}
Let us consider the following security game between a malicious sender $\cS^*$
and the honest receiver $\cR$:
\begin{enumerate}
    \item 
    $\cS^*(1^\secp)$ and $\cR(1^\secp)$ run the commit phase to generate a commitment $t$.
    \item
   $\cS^*$ sends $(0,\mathsf{decom}_0)$ and $(1,\mathsf{decom}_1)$ to $\cR$. 
\end{enumerate}
We say that the scheme is computationally binding if for any PPT malicious $\cS^*$,
$
    \Pr[\top\gets\cR(0,\mathsf{decom}_0)\wedge\top\gets\cR(1,\mathsf{decom}_1)]\le\negl(\secp).
    $
\end{definition}


\if0
\begin{remark}
As the correctness, we adopt the perfect correctness,
because it is satisfied by OWFs-based constructions, and
the proofs of our results become simpler.
It could be possible to extend our results to non-perfect correct cases. \takashi{I don't think this remark is needed since perfect correctness is "standard" for commitments.}
\end{remark}
\fi

Statistically-hiding and computationally-binding bit commitments
can be constructed from OWFs.
\begin{theorem}[\cite{SIAM:HNORV09}]
\label{thm:commitment_from_OWFs}
If OWFs exist, then
statistically-hiding and computationally-binding bit commitments exist.
\end{theorem}

Moreover, constant-round schemes are known from 
collision-resistant hash functions~\cite{C:HalMic96}. 
The assumption is further weakened to the existence of distributional collision-resistant hash functions, which exist if there is an hard-on-average problem in $\mathbf{SZK}$. 
\begin{theorem}[\cite{C:KomYog18,EC:BHKY19}]\label{thm:const_commitment_from_dCRH}
If distributional collision-resistant hash functions exist, which exist if there is an hard-on-average problem in $\mathbf{SZK}$, then
 constant-round statistically-hiding and computationally-binding bit commitments exist.
\end{theorem}

We define an infinitely-often variant of statistically-hiding and computationally-binding commitments as follows.
\begin{definition}[Infinitely-often statistically-hiding and computationally-binding commitments]
\label{def:IOCom}
Infinitely-often statistically-hiding and computationally-binding commitments are defined similarly to \cref{def:commitments} except that we require the existence of an infinite set $\Lambda\subseteq \mathbb{N}$ such that
statistical hiding and computational binding hold for all $\secp\in \Lambda$ instead of for all $\secp\in \mathbb{N}$.  
\end{definition}
By using infinitely-often OWFs instead of OWFs in the commitment scheme of \cite{SIAM:HNORV09}, we obtain the following theorem. Since the construction and proof are almost identical to those of \cite{SIAM:HNORV09}, we omit the details.
\begin{theorem}[Infinitely-often variant of \cite{SIAM:HNORV09}]
\label{thm:IOcommitment_from_IOOWFs}
If infinitely-often OWFs exist, then
infinitely-often statistically-hiding and computationally-binding bit commitments exist.
\end{theorem}

We also define an
auxiliary-input variant of statistically-hiding and computationally-binding commitments. Intuitively, it is a family of commitment schemes indexed by an auxiliary input where correctness and statistical hiding hold for all auxiliary inputs and an ``auxiliary-input'' version of computational binding holds, i.e., 
for any PPT cheating sender $\cS^*$, there is an infinite set of auxiliary inputs under which computational binding holds. 
\begin{definition}[Auxiliary-input statistically-hiding and computationally-binding classical bit commitments]
\label{def:AXcommitments}
An auxiliary-input statistically-hiding and computationally-binding classical bit commitment scheme is an interactive protocol $\langle \cS,\cR\rangle$
between two PPT algorithms $\cS$ (the sender) and $\cR$ (the receiver) associated with an infinite subset $\Sigma\subseteq \bit^*$such that
\begin{itemize}
   \item 
    In the commit phase, 
    $\cS$ takes $b\in\bit$ and the auxiliary input $\sigma\in \Sigma$ as input and $\cR$ takes the auxiliary input $\sigma$ as input. 
    $\cS$ and $\cR$ exchange classical messages.
    The transcript $t$, i.e., the sequence of all classical messages exchanged between $\cS$ and $\cR$, is called a commitment. At the end of the commit phase, $\cS$ privately outputs a decommitment $\mathsf{decom}$.  
   \item
   In the open phase, $\cS$ sends $(b,\mathsf{decom})$ to $\cR$. 
   $\cR$ on input $(t,b,\mathsf{decom})$ outputs $\top$ or $\bot$.
\end{itemize}
We require the following properties:

\paragraph{\bf Perfect Correctness:}
For all $\sigma\in \Sigma$ and $b\in \bit$, 
if $\cS(b,\sigma)$ and $\cR(\sigma)$ behave honestly, $\Pr[\top\gets\cR]=1$.\\

\paragraph{\bf Statistical Hiding:}
Let us consider the following security game between the honest sender $\cS$ and
a malicious receiver $\cR^*$:
\begin{enumerate}
    \item 
$\cS(b,\sigma)$ and $\cR^*(\sigma)$ run the commit phase.
    \item 
$\cR^*$ outputs $b'\in\bit$.
\end{enumerate}
We say that the scheme is statistically hiding if for all $\sigma \in \Sigma$ and
any computationally unbounded adversary $\cR^*$,
$
|\Pr[0\gets\cR^*|b=0] 
-\Pr[0\gets\cR^*|b=1] 
|\le\negl(|\sigma|).
$

\paragraph{\bf Computational Binding:}
Let us consider the following security game between a malicious sender $\cS^*$
and the honest receiver $\cR$:
\begin{enumerate}
    \item 
    $\cS^*(\sigma)$ and $\cR(\sigma)$ run the commit phase to generate a commitment $t$.
    \item
   $\cS^*$ sends $(0,\mathsf{decom}_0)$ and $(1,\mathsf{decom}_1)$ to $\cR$. 
\end{enumerate}
We say that the scheme is computationally binding if for any PPT malicious sender $\cS^*$ and a polynomial $\poly$, there exists an infinite subset $\Lambda\subseteq \Sigma$ such that 
for any $\sigma \in \Lambda$, 
$
    \Pr[\top\gets\cR(t,0,\mathsf{decom}_0)\wedge\top\gets\cR(t,1,\mathsf{decom}_1)]\le\frac{1}{\poly(|\sigma|)}.
    $
\end{definition}

By using auxiliary-input OWFs instead of OWFs in the commitment scheme of \cite{SIAM:HNORV09}, we obtain the following theorem. Since the construction and proof are almost identical to those of \cite{SIAM:HNORV09}, we omit the details.
\begin{theorem}[Auxiliary-input variant of \cite{SIAM:HNORV09}]
\label{thm:AXcommitment_from_AXOWFs}
If auxiliary-input OWFs exist, then
auxiliary-input statistically-hiding and computationally-binding bit commitments exist.
\end{theorem}

Similarly, by using auxiliary-input collision-resistant hash functions instead of collision-resistant hash functions in the commitment scheme of \cite{C:HalMic96}, we obtain 2-round auxiliary-input statistically-hiding and computationally-binding bit commitments. 
As shown in \cref{sec:AXCRH}, auxiliary-input collision-resistant hash functions exist if and only if $\mathbf{PWPP}\nsubseteq \mathbf{FBPP}$. Thus, we obtain the following theorem.
\begin{theorem}[Auxiliary-input variant of \cite{C:HalMic96}]
\label{thm:const_AXcommitment_from_AICRH}
If auxiliary-input collision-resistant hash functions exist, which exist if and only if $\mathbf{PWPP}\nsubseteq \mathbf{FBPP}$, then
2-round auxiliary-input statistically-hiding and computationally-binding bit commitments exist.
\end{theorem}
In addition, we observe in  \cref{sec:commitment_from_worst_case_SZK} that the instance-dependent commitments for $\mathbf{SZK}$ of \cite{TCC:OngVad08} directly gives constant-round auxiliary-input statistically-hiding and computationally-binding bit commitments under the assumption that $\mathbf{SZK}\nsubseteq \mathbf{BPP}$.
\begin{theorem}[Auxiliary-input variant of \cite{TCC:OngVad08}]
\label{thm:const_AXcommitment_from_SZK}
If $\mathbf{SZK}\nsubseteq \mathbf{BPP}$, then
constant-round auxiliary-input statistically-hiding and computationally-binding bit commitments exist.
\end{theorem}
\begin{remark}\label{rem:Sigma_AI_com}
In the constructions for \cref{thm:AXcommitment_from_AXOWFs,thm:const_AXcommitment_from_AICRH}, we can set $\Sigma\seteq \bit^*$.  
However, we do not know if this is possible for the construction for \cref{thm:const_AXcommitment_from_SZK} given in \cref{sec:commitment_from_worst_case_SZK}. This is why we introduce the subset $\Sigma$ in \cref{def:AXcommitments}. 
\end{remark}
\else

\ifnum\cameraready=1
Definitions of basic cryptographic tools including pair-wise independent hashes, one-way functions, commitments, and the infinitely-often and auxiliary-input variants of one-way functions and commitments are given in the full version. 
\else
Definitions of basic cryptographic tools including pair-wise independent hashes, one-way functions, commitments, and the infinitely-often and auxiliary-input variants of one-way functions and commitments are given in \Cref{sec:omitted_pre}. 
\fi

We use the following lemma about pairwise independent hashes.

\begin{lemma}
\label{lem:hash2}
Let $\mathcal{H}\coloneqq\{h:\cX\to\cY\}$
be a pairwise-independent hash family such that $|\cX|\ge2$.
Let $S\subseteq\cX$ be a subset of $\cX$.
For any $y\in\cY$,
\begin{align}
\Pr_{h\gets\mathcal{H}}[|S\cap h^{-1}(y)|=1]\ge
\frac{|S|}{|\cY|}
-\frac{|S|^2}{|\cY|^2}.
\label{ge2}
\end{align}
\end{lemma} 
A similar lemma appeared as an intermediate step for proving Valiant-Vazirani theorem~\cite{VV86} (see also \cite[Lemma 17.19]{AroraBarak09}).  
\ifnum\cameraready=1
We give a proof of \Cref{lem:hash2} in the full version.
\else
We give a proof of \Cref{lem:hash2} in \Cref{sec:proof_hashing} for completeness.
\fi
\fi

\section{Inefficient-Verifier Proofs of Quantumness}
\label{sec:interactive_supremacy}
In this section, we define inefficient-verifier proofs of quantumness (IV-PoQ) and its variants.
Then we show that sequential repetition amplifies the completeness-soundness gap assuming a special
property of soundness, which we call a strong soundness, for the base scheme. 

\subsection{Definitions}
We define IV-PoQ. It is identical to the definition of PoQ, which are implicitly defined in \cite{JACM:BCMVV21}, except that we allow the verifier to be unbounded-time after completing interaction with the prover. 
\begin{definition}[Inefficient-verifier proofs of quantumness (IV-PoQ)]
\label{def:PoQinefficient}
An inefficient-verifier proof of quantumness (IV-PoQ)
is an
interactive protocol $(\cP,\cV)$ between a QPT algorithm $\cP$ (the prover)
and an algorithm $\cV=(\cV_1,\cV_2)$ (the verifier) where $\cV_1$ is PPT and $\cV_2$ is unbounded-time.
The protocol is divided into two phases.
In the first phase, 
$\cP$ and $\cV_1$ take the security parameter $1^\secp$ as input and interact with each other over a classical channel.
Let $I$ be the transcript, i.e., the sequence of all classical messages exchanged between $\cP$ and $\cV_1$.
In the second phase, $\cV_2$ takes $I$ as input and outputs $\top$ or $\bot$.
We require the following two properties for some functions $c$ and $s$ such that $c(\secp)-s(\secp)\ge1/\poly(\secp)$.

\paragraph{\bf $c$-completeness:}
\begin{align}
\Pr[\top\gets\cV_2(I):I\gets\langle \cP(1^\secp),\cV_1(1^\secp)\rangle]\ge c(\secp)-\negl(\secp).
\end{align}

\paragraph{\bf $s$-soundness:}
For any PPT malicious prover $\cP^*$,
\begin{align}
\Pr[\top\gets\cV_2(I):I\gets\langle \cP^*(1^\secp),\cV_1(1^\secp)\rangle]\le s(\secp)+\negl(\secp).
\end{align}
\end{definition}

\begin{remark}
$\cV_2$ could take $\cV_1$'s secret information as input in addition to $I$, but without loss of generality,
we can assume that $\cV_2$ takes only $I$, because we can always modify the protocol of the first phase
in such a way that $\cV_1$ sends its secret information to $\cP$ at the end of the first phase.
\end{remark}

\begin{remark}
In our constructions, $\cV_2$ is actually enough to be a classical deterministic polynomial-time
algorithm that queries the $\mathbf{NP}$ oracle.    
\ifnum\cameraready=1
(See the full version.)
\else
(See \cref{sec:powerofV2}.)
\fi
\end{remark}

\begin{remark}
In the definition of soundness, we treat the malicious prover as a \emph{uniform} PPT machine. 
However, most of our results can be easily extended to the \emph{non-uniform} adversarial setting if we analogously strengthen the assumption (e.g., OWFs) to be secure against non-uniform PPT adversaries. 
\ifnum\cameraready=1
\else
The only place where the difference between uniform and non-uniform adversaries matters is \Cref{sec:QEOWFs} where we construct variants of IV-PoQ from quantum-evaluation OWFs. (See \Cref{rem:uniform_non-uniform} for details.)
\fi
\end{remark}

\ifnum\llncs=0
\ifnum\llncs=1
\section{Definitions of Infinitely-Often and Auxiliary-Input IV-PoQ}\label{sec:def_IO_AI}
\fi
We define an infinitely-often version of IV-PoQ as follows.
\begin{definition}[Infinitely-often inefficient-verifier proofs of quantumness (IO-IV-PoQ)]
\label{def:IOPoQinefficient}
An infinitely-often inefficient-verifier proofs of quantumness (IO-IV-PoQ) is defined similarly to IV-PoQ (\cref{def:PoQinefficient}) except that we require the existence of an infinite set $\Lambda\subseteq \mathbb{N}$ such that
$c$-completeness and $s$-soundness hold for all $\secp\in \Lambda$ instead of for all $\secp\in \mathbb{N}$.  
\end{definition}
 
We also define an auxiliary-input variant of IV-PoQ as follows. It is defined similarly to IV-PoQ except that the prover and verifier take an auxiliary input instead of the security parameter and completeness should hold for all auxiliary inputs wheres soundness is replaced to auxiliary-input soundness, i.e., for any PPT cheating prover $\cP^*$, there exists an infinite set of auxiliary inputs under which soundness holds. 
\begin{definition}[Auxiliary-input inefficient-verifier proofs of quantumness (AI-IV-PoQ)]
\label{def:AXPoQinefficient}
An auxiliary-input inefficient-verifier proof of quantumness (AI-IV-PoQ)
is an
interactive protocol $(\cP,\cV)$ between a QPT algorithm $\cP$ (the prover)
and an algorithm $\cV=(\cV_1,\cV_2)$ (the verifier) where $\cV_1$ is PPT and $\cV_2$ is unbounded-time, associated with an infinite set $\Sigma\subseteq \bit^*$.
The protocol is divided into two phases.
In the first phase, 
$\cP$ and $\cV_1$ take an auxiliary input $\sigma \in \Sigma$ as input and interact with each other over a classical channel.
Let $I$ be the transcript, i.e., the sequence of all classical messages exchanged between $\cP$ and $\cV_1$.
In the second phase, $\cV_2$ takes $I$ as input and outputs $\top$ or $\bot$.
We require the following two properties for some functions 
$c$ and $s$ such that $c(|\sigma|)-s(|\sigma|)\ge1/\poly(|\sigma|)$.

\paragraph{\bf $c$-completeness:}
For any $\sigma\in\Sigma$, 
\begin{align}
\Pr[\top\gets\cV_2(I):I\gets\langle \cP(\sigma),\cV_1(\sigma)\rangle]\ge c(|\sigma|)-\negl(|\sigma|).
\end{align}

\paragraph{\bf $s$-soundness:}
For any PPT malicious prover $\cP^*$ and polynomial $p$, there exists an infinite set $\Lambda\subseteq\Sigma$ 
such that
\begin{align}
\Pr[\top\gets\cV_2(I):I\gets\langle\cP^*(\sigma), \cV_1(\sigma)\rangle]\le s(|\sigma|)+\frac{1}{p(|\sigma|)}
\end{align}
for all $\sigma\in\Lambda$. 
\end{definition}
\begin{remark}
We can set $\Sigma\seteq \bit^*$ for all our constructions of AI-IV-PoQ except for the one based on $\mathbf{SZK}\neq \mathbf{BPP}$. See also \cref{rem:Sigma_AI_com}.
\end{remark}
\begin{remark}
It is easy to see that IV-PoQ imply IO-IV-PoQ, and IO-IV-PoQ imply AI-IV-PoQ.
\end{remark}
Even though AI-IV-PoQ is weaker than IV-PoQ, we believe that it still demonstrates a meaningful notion of quantum advantage, because
it shows ``worst-case quantum advantage'' in the sense that no PPT algorithm can simulate the QPT honest prover on all auxiliary inputs $\sigma\in \Sigma$.   
\else
\ifnum\cameraready=1
We define infinitely-often and auxiliary-input variants of IV-PoQ in the full version.
\else
We define infinitely-often and auxiliary-input variants of IV-PoQ in \Cref{sec:def_IO_AI}
\fi
\fi

\subsection{Strong Soundness}
Unfortunately, we do not know if parallel or even sequential repetition amplifies the completeness-soundness gap for general (AI-/IO-)IV-PoQ. 
Here, we define a stronger notion of soundness which we call strong soundness. 
\ifnum\cameraready=1
In the full version, we show that sequential repetition amplifies the completeness-soundness gap if the base scheme satisfies strong soundness. 
\else
In \cref{sec:gap_amplification}, we show that sequential repetition amplifies the completeness-soundness gap if the base scheme satisfies strong soundness. 
\fi
In \cref{sec:construction}, we show that our (AI-/IO-)IV-PoQ satisfies strong soundness. Thus, gap amplification by sequential repetition works for our particular constructions of (AI-/IO-)IV-PoQ.  

Roughly, the $s$-strong-soundness requires that a PPT cheating prover can pass verification with probability at most $\approx s$ for {\it almost all fixed randomness}. The formal definition is given below.
\begin{definition}[Strong soundness for IV-PoQ]\label{def:strong_soundness}
We say that an IV-PoQ $(\cP,\cV=(\cV_1,\cV_2))$ satisfies $s$-strong-soundness if the following holds:
\paragraph{\bf $s$-strong-soundness:}
For any PPT malicious prover $\cP^*$ and any polynomial $p$, 
\begin{align}
\Pr_{r\gets \mathcal{R}}\left[\Pr[\top\gets\cV_2(I):I\gets\langle \cP^*_r(1^\secp),\cV_1(1^\secp)\rangle]\ge s(\secp)+\frac{1}{p(\secp)}\right]\le \frac{1}{p(\secp)}
\end{align}
for all sufficiently large $\secp$ 
where $\mathcal{R}$ is the randomness space for $\cP^*$ and $\cP^*_r$ is $\cP^*$ with the fixed randomness $r$. 
\end{definition}
It is defined similarly for (AI-/IO-)IV-PoQ. 
\if0
\begin{definition}[Strong soundness for AI-IV-PoQ]\label{def:AXstrong_soundness}
We say that an AI-IV-PoQ $(\cP,\cV=(\cV_1,\cV_2))$ satisfies $s$-strong-soundness if the following holds:
\paragraph{\bf $s$-strong-soundness:}
For any PPT malicious prover $\cP^*$  and polynomial $p$, there exists an infinite subset $\Lambda\subseteq \Sigma$ such that  
\begin{eqnarray*}
\Pr_{r\gets \mathcal{R}}\left[\Pr[\top\gets\cV_2(I):I\gets\langle \cP^*_r(\sigma),\cV_1(\sigma)\rangle]\ge s(|\sigma|)+\frac{1}{p(|\sigma|)}\right]\le \frac{1}{p(|\sigma|)}
\end{eqnarray*}
for all $\sigma\in \Lambda$
where $\mathcal{R}$ is the randomness space for $\cP^*$ and $\cP^*_r$ is $\cP^*$ with the fixed randomness $r$. 
\end{definition}
\fi

It is easy to see that $s$-strong-soundness implies $s$-soundness. 
\begin{lemma}\label{lem:strong_to_soundness}
For any $s$, $s$-strong-soundness implies $s$-soundness for (AI-/IO-)IV-PoQ. 
\end{lemma}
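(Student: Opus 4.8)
The plan is a routine averaging (Markov-type) argument that upgrades the per-randomness guarantee of strong soundness to a bound on the average acceptance probability. Fix an arbitrary PPT malicious prover $\cP^*$ with (finite) randomness space $\mathcal{R}$, and for each $r\in\mathcal{R}$ write $q(r)\coloneqq\Pr[\top\gets\cV_2(I):I\gets\langle\cP^*_r(1^\secp),\cV_1(1^\secp)\rangle]$ for the acceptance probability of $\cP^*$ once its randomness is fixed to $r$. The quantity appearing on the left-hand side of the $s$-soundness condition is exactly $\Exp_{r\gets\mathcal{R}}[q(r)]$, so it suffices to bound this average.

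First I would fix an arbitrary polynomial $p$ and let $G\coloneqq\{r\in\mathcal{R}:q(r)<s(\secp)+1/p(\secp)\}$ be the set of ``good'' randomness. Applying $s$-strong-soundness to $\cP^*$ with this polynomial $p$ yields $\Pr_{r\gets\mathcal{R}}[r\notin G]\le 1/p(\secp)$ for all sufficiently large $\secp$. Splitting the expectation over $G$ and its complement, using $q(r)<s(\secp)+1/p(\secp)$ on $G$ and $q(r)\le 1$ everywhere, gives
\[
\Exp_{r\gets\mathcal{R}}[q(r)]\;\le\;\Bigl(s(\secp)+\frac{1}{p(\secp)}\Bigr)\cdot\Pr_{r}[r\in G]\;+\;1\cdot\Pr_{r}[r\notin G]\;\le\;s(\secp)+\frac{2}{p(\secp)}
\]
for all sufficiently large $\secp$.

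Since $p$ was arbitrary, the nonnegative function $\nu(\secp)\coloneqq\max\{0,\ \Exp_{r\gets\mathcal{R}}[q(r)]-s(\secp)\}$ is bounded above by $2/p(\secp)$ for all large $\secp$, for every polynomial $p$; hence $\nu$ is negligible and $\Exp_{r\gets\mathcal{R}}[q(r)]\le s(\secp)+\nu(\secp)\le s(\secp)+\negl(\secp)$, which is precisely $s$-soundness. The IO-IV-PoQ case is essentially identical, with every ``for all sufficiently large $\secp$'' replaced by ``for all $\secp$ in the infinite set $\Lambda$''; the AI-IV-PoQ case is the same argument after invoking strong soundness with $2p$ in place of $p$, so that for each polynomial $p$ one obtains the infinite index set $\Lambda\subseteq\Sigma$ witnessing the soundness bound $s(|\sigma|)+1/p(|\sigma|)$ for all $\sigma\in\Lambda$. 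There is no real obstacle here; the only points requiring care are the ordering of the quantifiers (the ``sufficiently large'', respectively the choice of infinite index set in the IO/AI variants) and the harmless factor-of-two slack between the polynomial fed into strong soundness and the one obtained in the conclusion.
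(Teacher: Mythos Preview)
Your proof is correct and takes essentially the same averaging approach as the paper: both split the expectation of $q(r)$ according to whether $q(r)$ exceeds the threshold $s+1/p$, and use the strong-soundness bound on the mass of the bad set. The only cosmetic difference is that the paper presents this contrapositively (assuming $\Exp_r[q(r)]\ge s+3/p$ for infinitely many $\secp$ and deriving a contradiction with strong soundness via the same split), whereas you argue directly; the arithmetic and the quantifier handling are the same.
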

\begin{proof}
We focus on the case of IV-PoQ since the cases of (AI-/IO-)IV-PoQ are similar. 
If there is a PPT malicious prover $\cP^*$ that breaks $s$-soundness of IV-PoQ, then there exists a polynomial $p$ such that  
\begin{align}
\Pr[\top\gets\cV_2(I):I\gets\langle \cP^*(1^\secp),\cV_1(1^\secp)\rangle]\ge s(\secp)+\frac{3}{p(\secp)}
\end{align}
for infinitely many $\secp$. By a standard averaging argument, this implies 
\begin{align}
\Pr_{r\gets \mathcal{R}}\left[\Pr[\top\gets\cV_2(I):I\gets\langle \cP^*_r(1^\secp),\cV_1(1^\secp)\rangle]\ge s(\secp)+\frac{1}{p(\secp)}\right]\ge \frac{2}{p(\secp)}
\end{align}
for infinitely many $\secp$.
This contradicts $s$-strong-soundness. Thus, \cref{lem:strong_to_soundness} holds.
\end{proof}
We remark that the other direction does not seem to hold. 
For example, suppose that a PPT malicious prover $\cP^*$ passes the verification with probability $1$ for $0.99$-fraction of randomness and with probability $0$ for the rest of randomness. 
In this case, $\cP^*$ satisfies $0.99$-soundness. On the other hand, it breaks $s$-strong-soundness for any constant $s<1$.

\subsection{Gap Amplification}\label{sec:gap_amplification}

\begin{figure}[!t]
\begin{algorithm}[H]
\caption{$N$-sequential-repetition version $\Pi^{\Nseq}$ of $\Pi$}
\label{protocol:sequential_PoQ}
\begin{enumerate}
\item[\bf The first phase:] 
The QPT prover $\cP^{\Nseq}$ and PPT verifier $\cV_1^{\Nseq}$ 
run the first phase of $\Pi$ sequentially $N$ times (i.e., after they finish $i$-th execution, they start $(i+1)$-st execution) 
where $\cP^{\Nseq}$ plays the role of $\cP$ and $\cV_1^{\Nseq}$ plays the role of $\cV_1$. 
Let $I_i$ be the transcript of $i$-th execution of $\Pi$ for $i\in [N]$.
\item[\bf The second phase:]
The unbounded-time $\cV_2^{\Nseq}$ takes the transcript $\{I_i\}_{i\in [N]}$ of the first phase as input. 
For $i\in [N]$, 
it runs $\cV_2$ on $I_i$ and sets $X_i\seteq 1$ if it accepts and otherwise sets $X_i\seteq 0$. 
If $\frac{\sum_{i\in [N]}X_i}{N}\ge \frac{c(\secp)+s(\secp)}{2}$, 
it outputs $\top$ and otherwise outputs $\bot$. 
\end{enumerate}
\end{algorithm}
\end{figure}

We prove that sequential repetition amplifies the completeness-soundness gap if the base scheme satisfies strong soundness.
\begin{theorem}[Gap amplification theorem]\label{thm:gap_amplification_sequential}
Let $\Pi=(\cP,\cV=(\cV_1,\cV_2))$ be an (AI-/IO-)IV-PoQ that satisfies $c$-completeness and $s$-strong-soundness where $c(\secp)-s(\secp)\ge 1/\poly(\secp)$ and $c$ and $s$ are computable in polynomial-time. 
Let $\Pi^{\Nseq}=(\cP^{\Nseq},\cV^{\Nseq}=(\cV_1^{\Nseq},\cV_2^{\Nseq}))$ be its $N$-sequential-repetition version as described in \cref{protocol:sequential_PoQ}. 
If $N\geq \frac{\secp}{(c(\secp)-s(\secp))^{2}}$, then $\Pi^{\Nseq}$ satisfies $1$-completeness and $0$-soundness. 
\end{theorem}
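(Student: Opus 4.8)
The plan is to analyze the $N$-fold sequential repetition via a Chernoff/Hoeffding argument applied to the indicator variables $X_i$ of acceptance in each round, treating the completeness and soundness cases separately. First I would handle completeness: by $c$-completeness of $\Pi$, each execution is accepted by $\cV_2$ with probability at least $c(\secp)-\negl(\secp)$, and these executions are run sequentially by the honest prover, so the $X_i$ are independent (the honest prover uses fresh randomness and a fresh quantum register in each round, and $\cV_1$'s messages in round $i+1$ do not depend on previous transcripts). A Hoeffding bound then shows $\Pr[\frac{1}{N}\sum_i X_i < \frac{c(\secp)+s(\secp)}{2}]$ is exponentially small in $N\cdot(c-s)^2$, hence $\le \negl(\secp)$ once $N\ge \secp/(c-s)^2$; this gives $1$-completeness (in the sense of the definition, $c$-completeness with $c=1$, which only requires passing with probability $\ge 1-\negl$).

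The soundness case is the interesting one and is where strong soundness is essential. Given any PPT cheating prover $\cP^{*}$ against $\Pi^{\Nseq}$, I would first invoke $s$-strong-soundness of $\Pi$ with a suitable polynomial $p$ (concretely $p$ a large enough multiple of $1/(c-s)$, say so that $1/p(\secp)\le (c-s)/4$): this says that except for a $1/p(\secp)$-fraction of the randomness $r$ of any single-shot PPT cheating prover, the per-round acceptance probability is at most $s(\secp)+1/p(\secp)$. The subtlety is that in $\Pi^{\Nseq}$ the cheating prover is adaptive across rounds, so the randomness used in round $i$ is not simply "fresh" --- but because $\cV_1$ runs each copy of $\Pi$ with independent fresh randomness, I can define, for each fixed choice of $\cP^{*}$'s overall randomness and each fixed prefix transcript $I_1,\dots,I_{i-1}$, a residual single-shot PPT cheating prover $\cP^{*}_i$ for the $i$-th copy of $\Pi$; strong soundness applies to each such residual prover. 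A union bound over the (polynomially many) rounds shows that, with probability $\ge 1-N/p(\secp)$ over all randomness, in \emph{every} round the conditional acceptance probability is $\le s(\secp)+1/p(\secp)$. Conditioned on this good event, the $X_i$ are dominated by independent Bernoulli variables with parameter $s(\secp)+1/p(\secp)\le s(\secp)+(c-s)/4$, so by Hoeffding, $\Pr[\frac{1}{N}\sum_i X_i\ge \frac{c+s}{2}]$ is exponentially small in $N(c-s)^2$, hence $\le\negl(\secp)$; adding back the $N/p(\secp)=\negl(\secp)$ probability of the bad event still gives $\le\negl(\secp)$, which is $0$-soundness (soundness with $s=0$, i.e., acceptance probability $\le\negl$).

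The main obstacle I anticipate is the adaptivity issue just sketched: making precise that strong soundness --- which is stated for a single-shot IV-PoQ with a single randomness space $\mathcal{R}$ --- can be applied round-by-round inside the sequential composition, even though $\cP^{*}$ may correlate its behavior in round $i$ with the transcripts of earlier rounds. The key point making this work is that strong soundness is a statement about "almost all fixed randomness," so one can fix $\cP^{*}$'s randomness once and for all, and then for each fixed earlier transcript the round-$i$ strategy becomes a deterministic-advice single-shot prover whose only remaining randomness is that of $\cV_1$ in the $i$-th copy; but strong soundness as literally stated quantifies over the prover's randomness, not the verifier's, so I would either reformulate (equivalently) strong soundness in terms of the prover's randomness being split per-round, or absorb the prefix transcripts as non-uniform advice and note that the class of PPT cheating provers is closed under such fixing. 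A secondary, more routine obstacle is bookkeeping the two layers of failure probability ($N/p$ from the union bound plus the Hoeffding term) and checking they both become $\negl(\secp)$ for the stated choice $N\ge \secp/(c-s)^2$; this just requires choosing $p$ as a large enough polynomial multiple of $1/(c-s)$ and using that $c,s$ are polynomial-time computable so that $\cV_2^{\Nseq}$'s threshold test is well-defined.
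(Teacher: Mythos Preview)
Your high-level strategy (union bound over rounds plus a Chernoff-type tail bound) is close to what the paper does, but there are two genuine gaps.

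\textbf{Parameter accounting.} With your choice $1/p(\secp)\approx (c-s)/4$, the ``bad randomness'' term $N/p(\secp)$ is \emph{not} negligible: since $N\ge \secp/(c-s)^2$ you get $N/p\ge \secp/(4(c-s))$, which is at least $\secp/4$. So the sentence ``adding back the $N/p(\secp)=\negl(\secp)$ probability of the bad event'' is false as written. Strong soundness ties the bad-fraction bound $1/p$ to the acceptance threshold $s+1/p$, so you cannot simultaneously make $1/p$ small enough for Hoeffding and $N/p$ negligible by a single fixed choice of $p$. The paper resolves this by arguing by contradiction: assume some PPT prover achieves acceptance $\ge 1/p(\secp)$ for infinitely many $\secp$, and then apply strong soundness with the polynomial $2Np(\secp)$. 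The hybrid loss is then $N\cdot \frac{1}{2Np}=\frac{1}{2p}$, which is exactly half of the assumed advantage and yields the contradiction. Your argument can be repaired the same way (fix a target polynomial $q$ first, then take the strong-soundness parameter to be $\Theta(Nq)$), but this step is missing.

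\textbf{Uniformity of the round-$i$ prover.} Your second proposed fix for the adaptivity issue---``absorb the prefix transcripts as non-uniform advice and note that the class of PPT cheating provers is closed under such fixing''---does not work in the uniform model the paper is in (indeed, the paper explicitly remarks that with non-uniform soundness one would not need strong soundness at all). The paper's trick is different and worth internalizing: the round-$k$ cheating prover $\cP^*_k$ \emph{internally simulates both} $\cP^*$ and the verifier $\cV_1$ for the first $k-1$ executions, and then plays round $k$ against the external verifier. This makes $\cP^*_k$ a uniform PPT machine whose randomness space is $(r_P,r_V^{k-1})$, so strong soundness applies verbatim and yields that a $1-\frac{1}{2Np}$ fraction of pairs $(r_P,r_V^{k-1})$ are ``good'' for round $k$. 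Your vaguer alternative (``split the prover's randomness per round'') does not capture this, because the conditioning you need is on both the prover's and the \emph{verifier's} earlier randomness.

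As a minor point of comparison: once the above are fixed, the paper finishes with a hybrid argument (replace $X_k$ by an independent Bernoulli $X'_k$ one coordinate at a time) rather than your union-bound-plus-stochastic-domination route. Both are valid; the hybrid avoids having to invoke Azuma or a coupling for dependent $X_i$.
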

\ifnum\cameraready=1
Its proof is given in the full version.
\else
Its proof is given in \cref{sec:gap_amplification}.
\fi

\begin{remark}
Note that the meaning of ``sequential repetition'' is slightly different from that for usual interactive arguments: we defer the second phases of each execution to the end of the protocol so that inefficient computations are only needed after completing the interaction. 
\end{remark}
\begin{remark}
If we assume $s$-soundness against \emph{non-uniform} PPT adversaries, we can easily prove a similar amplification theorem without introducing strong soundness.
However, for proving soundness against non-uniform PPT adversaries, we would need non-uniform hardness assumptions such as non-uniformly secure OWFs. 
Since our motivation is to demonstrate quantum advantage from the standard notion of \emph{uniformly} secure OWFs, we do not take the above approach.  
\end{remark}

\section{Coherent Execution of Classical Bit Commitments}
\label{sec:coherent}
In this section, we explain our key concept, namely, executing classical bit commitments coherently.

Let $(\cS,\cR)$ be a classical (interactive) bit commitment scheme.
If we explicitly consider the randomness, the commit phase can be described
as in \cref{protocol:classical}.
\begin{figure}[!t]
\begin{algorithm}[H]
\caption{The commit phase of a classical bit commitment scheme}
\label{protocol:classical}
\begin{enumerate}
\item
The sender $\cS$ takes the comitted bit $b\in\bit$ and the security parameter $1^\secp$ as input.
The receiver $\cR$ takes $1^\secp$ as input.
    \item 
    $\cR$ samples a random seed $r\gets\bit^\ell$, where $\ell\coloneqq\poly(\secp)$.
    \item
    $\cS$ samples a random seed $x\gets\bit^\ell$. (Without loss of generality, we assume that
    $\cR$'s random seed and $\cS$'s random seed are of equal length.)
    \item
    Let $L$ be the number of rounds.
    For $j=1$ to $L$, $\cS$ and $\cR$ repeat the following.
    \begin{enumerate}
        \item 
    $\cS$ computes $\alpha_j\coloneqq f_j(b,x,\alpha_1,\beta_1,\alpha_2,\beta_2,...,\alpha_{j-1},\beta_{j-1})$, and sends $\alpha_j$ to $\cR$.
    Here, $f_j$ is the function that computes $\cS$'s $j$th message. 
     \item
    $\cR$ computes $\beta_j\coloneqq g_j(r,\alpha_1,\beta_1,\alpha_2,\beta_2,...,\alpha_j)$, and sends $\beta_j$ to $\cS$.
    Here, $g_j$ is the function that computes $\cR$'s $j$th message.
    \end{enumerate}
\end{enumerate}
\end{algorithm}
\end{figure}

\begin{figure}[!t]
\begin{algorithm}[H]
\caption{Coherent execution of the commit phase of a classical bit commitment scheme}
\label{protocol:coherent}
\begin{enumerate}
\item
The sender $\cS$ takes $1^\secp$ as input.
The receiver $\cR$ takes $1^\secp$ as input.
    \item 
    $\cR$ samples a random seed $r\gets\bit^\ell$. 
    \item
    $\cS$ generates the state $\sum_{b\in\bit}\sum_{x\in\bit^\ell}\ket{b}\ket{x}$.
    \item
    Let $L$ be the number of rounds. For $j=1$ to $L$, $\cS$ and $\cR$ repeat the following.
    \begin{enumerate}
        \item 
         $\cS$ possesses the state
    \begin{equation}
    \sum_{b\in\bit}\sum_{x\in \cap_{i=0}^{j-1}X_{b}^i}\ket{b}\ket{x}, 
    \end{equation}
    where 
    \begin{align}
     X_{b}^0&\coloneqq\bit^\ell, 
    \end{align}
    and
    \begin{equation}
     X_{b}^i\coloneqq\{x\in\bit^\ell:f_i(b,x,\alpha_1,\alpha_2,...,\alpha_{i-1},\beta_{i-1})=\alpha_i\}   
    \end{equation}
    for $i\ge1$.
\item
      $\cS$ generates the state
    \begin{equation}
    \sum_{b\in\bit}\sum_{x\in \cap_{i=0}^{j-1}X_{b}^i}\ket{b}\ket{x}\ket{f_j(b,x,\alpha_1,\beta_1,...,\alpha_{j-1},\beta_{j-1})}, 
    \end{equation}
    and measures the third
    register in the computational basis to obtain the measurement result $\alpha_j$.
    The post-measurement state is 
     \begin{equation}
    \sum_{b\in\bit}\sum_{x\in \cap_{i=0}^j X_{b}^i}\ket{b}\ket{x}. 
    \end{equation}
    $\cS$ sends $\alpha_j$ to $\cR$.
    \item
    $\cR$ computes $\beta_j\coloneqq g_j(r,\alpha_1,\beta_1,...,\alpha_j)$, and sends $\beta_j$ to $\cS$.
    \end{enumerate}
\end{enumerate}
\end{algorithm}
\end{figure}

Now let us consider the coherent execution of \cref{protocol:classical}, which is shown in \cref{protocol:coherent}.
Let $t\coloneqq(\alpha_1,\beta_1,\alpha_2,\beta_2,...,\alpha_L,\beta_L)$ be the transcript
obtained in the execution of \cref{protocol:coherent}.
At the end of the execution of \cref{protocol:coherent},
$\cS$ possesses the state
\begin{equation}
\frac{1}{\sqrt{|X_{0,t}|+|X_{1,t}|}}\sum_{b\in\bit}\sum_{x\in X_{b,t}}\ket{b}\ket{x},
\end{equation}
where 
\begin{align}
X_{b,t}\coloneqq \bigcap_{j=0}^L X_b^j
=\Big\{x\in\bit^\ell:
\bigwedge_{j=1}^L
f_j(b,x,\alpha_1,\beta_1,...,\alpha_{j-1},\beta_{j-1})=\alpha_j
\Big\}.    
\end{align}
The probability $\Pr[t]$ that the transcript $t$ is obtained 
in the execution of \cref{protocol:coherent}
is
\begin{equation}
   \Pr[t]=\frac{|R_t|}{2^\ell}\frac{|X_{0,t}|+|X_{1,t}|}{2^{\ell+1}},
   \label{pr_coherent_t}
\end{equation}
where
\begin{equation}
R_t\coloneqq\Big\{r\in\bit^\ell:\bigwedge_{j=1}^L
g_j(r,\alpha_1,\beta_1,...,\alpha_j)=\beta_j\Big\}.
\end{equation}

In the remaining of this section, we show two lemmas, \cref{lem:equalS} and \cref{lem:kinji}, that will be used in the proofs of our main results.

The following \cref{lem:equalS} roughly claims that $|X_{0,t}|$ and $|X_{1,t}|$ are almost equal with overwhelming probability.
\begin{lemma}
\label{lem:equalS}
Let $0<\epsilon<1$ be a constant.
Define the set
\begin{equation}
T\coloneqq\{t:
(1-\epsilon)|X_{1,t}|<|X_{0,t}|<(1+\epsilon)|X_{1,t}|
\}.
\end{equation}
Then,
\begin{equation}
\sum_{t\in T}
\Pr[t]
\ge1-\negl(\secp).
\end{equation}
Here, $\Pr[t]$ is the
probability that the transcript $t$ is obtained 
in the execution of \cref{protocol:coherent} as is given in \cref{pr_coherent_t}.
\end{lemma}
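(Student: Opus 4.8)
The plan is to derive \cref{lem:equalS} from statistical hiding of $(\cS,\cR)$ against the \emph{honest} receiver (which is all that our construction assumes). First I would record the transcript distribution of the honest \emph{classical} commit phase (\cref{protocol:classical}). If the sender commits to $b\in\bit$, it samples $x\gets\bit^\ell$ while the receiver samples $r\gets\bit^\ell$; a straightforward induction on the round index $j$ shows that the resulting transcript equals a fixed $t$ if and only if $x\in X_{b,t}$ and $r\in R_t$, so $\Pr[t\mid b]=|R_t|\,|X_{b,t}|/2^{2\ell}$. Since the honest receiver's view is recoverable from $(r,t)$ (it consists of $r$ together with the received messages $\alpha_j$, and the $\beta_j$'s are then determined), statistical hiding implies that the transcript distributions for $b=0$ and $b=1$ are statistically close, i.e.
\begin{equation*}
\sum_t \frac{|R_t|}{2^{2\ell}}\,\bigl|\,|X_{0,t}|-|X_{1,t}|\,\bigr|\ \le\ \negl(\secp).
\end{equation*}

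Next I would bound $\sum_{t\notin T}\Pr[t]$, where $\Pr[t]=\frac{|R_t|}{2^\ell}\cdot\frac{|X_{0,t}|+|X_{1,t}|}{2^{\ell+1}}$ is the coherent-execution probability from \cref{pr_coherent_t}. The only elementary observation needed is that, with $c_\epsilon:=\frac{\epsilon}{2+\epsilon}>0$, every $t\notin T$ with $|X_{0,t}|+|X_{1,t}|>0$ satisfies $\bigl|\,|X_{0,t}|-|X_{1,t}|\,\bigr|\ge c_\epsilon\bigl(|X_{0,t}|+|X_{1,t}|\bigr)$: if $|X_{0,t}|\ge(1+\epsilon)|X_{1,t}|$ then the ratio $\bigl(|X_{0,t}|-|X_{1,t}|\bigr)/\bigl(|X_{0,t}|+|X_{1,t}|\bigr)$ is at least $\frac{\epsilon}{2+\epsilon}$, and if $|X_{0,t}|\le(1-\epsilon)|X_{1,t}|$ it is at least $\frac{\epsilon}{2-\epsilon}\ge\frac{\epsilon}{2+\epsilon}$ (the degenerate case $|X_{0,t}|+|X_{1,t}|=0$ gives $\Pr[t]=0$ and is irrelevant). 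Combining this with the hiding bound,
\begin{equation*}
\negl(\secp)\ \ge\ \sum_{t\notin T}\frac{|R_t|}{2^{2\ell}}\,\bigl|\,|X_{0,t}|-|X_{1,t}|\,\bigr|\ \ge\ c_\epsilon\sum_{t\notin T}\frac{|R_t|\bigl(|X_{0,t}|+|X_{1,t}|\bigr)}{2^{2\ell}}\ =\ 2c_\epsilon\sum_{t\notin T}\Pr[t],
\end{equation*}
so $\sum_{t\notin T}\Pr[t]\le \negl(\secp)/(2c_\epsilon)=\negl(\secp)$ because $c_\epsilon$ is a constant, and hence $\sum_{t\in T}\Pr[t]\ge 1-\negl(\secp)$, as claimed.

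The only step that requires a little care is the first one: verifying that the transcript of the honest classical execution is distributed as $|R_t|\,|X_{b,t}|/2^{2\ell}$ and that it is a deterministic function of the honest receiver's view, so that the statistical hiding guarantee genuinely applies to it. After that the argument is just an averaging step together with the two-case inequality above, so I do not anticipate any real obstacle.
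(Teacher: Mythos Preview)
Your proof is correct and rests on the same idea as the paper's---statistical hiding forces the two transcript distributions $\Pr[t\mid b]=\frac{|R_t|\,|X_{b,t}|}{2^{2\ell}}$ to be close, which bounds the mass on transcripts with unbalanced $|X_{0,t}|,|X_{1,t}|$---but you package it more directly. The paper argues by contradiction: it splits $T^c$ into $T^+=\{t:(1+\epsilon)|X_{1,t}|\le|X_{0,t}|\}$ and $T^-=\{t:|X_{0,t}|\le(1-\epsilon)|X_{1,t}|\}$, builds an explicit unbounded receiver that outputs $0$ on $T^+$, $1$ on $T^-$, and a random bit on $T$, and then lower-bounds its distinguishing advantage by $\frac{\epsilon}{2(1+\epsilon)}\sum_{t\notin T}\Pr[t]$ through a chain of case-wise inequalities. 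You instead invoke the total-variation bound $\sum_t \frac{|R_t|}{2^{2\ell}}\bigl|\,|X_{0,t}|-|X_{1,t}|\,\bigr|\le\negl(\secp)$ once and combine it with the single pointwise inequality $\bigl|\,|X_{0,t}|-|X_{1,t}|\,\bigr|\ge\frac{\epsilon}{2+\epsilon}\bigl(|X_{0,t}|+|X_{1,t}|\bigr)$ for $t\notin T$. Your route is shorter and avoids the explicit adversary and the separate handling of $T^+,T^-$; the paper's route has the minor expository benefit of making the hiding adversary visible. The resulting constants differ slightly ($\frac{\epsilon}{2+2\epsilon}$ versus your $\frac{\epsilon}{2+\epsilon}$) but this is immaterial.
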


\begin{proof}[Proof of \cref{lem:equalS}]
Intuitively, this follows from the statistical hiding property since whenever $t\notin T$,  
an unbounded adversary can guess the committed bit from the transcript $t$ with probability $1/2+\Omega(\epsilon)$.
Below, we provide a formal proof. 

Define 
\begin{align}
T^+&\coloneqq\{t:
(1+\epsilon)|X_{1,t}|\le |X_{0,t}|
\},\\
T^-&\coloneqq\{t:
|X_{0,t}|\le(1-\epsilon)|X_{1,t}|
\}.
\end{align}
In order to show the lemma, we want to show that
$
\sum_{t\in T^+\cup T^-}
\Pr[t]\le\negl(\secp).
$
To show this, assume that    
$
\sum_{t\in T^+\cup T^-}
\Pr[t]\ge\frac{1}{\poly(\secp)}
$
for infinitely many $\secp$.
Then the following computationally-unbounded malicious receiver $\cR^*$ can break the statistical hiding
of the classical bit commitment scheme in \cref{protocol:classical}.
\begin{enumerate}
\item
$\cR^*$ honestly executes the commit phase with $\cS$.
Let $t$ be the transcript obtained in the execution.
    \item 
    If $t\in T^+$, $\cR^*$ outputs 0.
    If $t\in T^-$, $\cR^*$ outputs 1.
    If $t\in T$, $\cR^*$ outputs $0$ with probability $1/2$ and outputs 1 with probability $1/2$.
\end{enumerate}
The probability that $\cR^*$ outputs 0 when $\cS$ commits $b\in\bit$ is
\begin{align}
\Pr[0\gets \cR^*|b]
=\sum_{t\in T^+} \Pr[t|b]
+\frac{1}{2}\sum_{t\in T} \Pr[t|b]
=\sum_{t\in T^+} \frac{|R_t|}{2^\ell}\frac{|X_{b,t}|}{2^\ell}
+\frac{1}{2}\sum_{t\in T} \frac{|R_t|}{2^\ell}\frac{|X_{b,t}|}{2^\ell}.
\end{align}
Therefore
\begin{align}
&\Pr[0\gets \cR^*|b=0]
-\Pr[0\gets \cR^*|b=1]\\
&=
\sum_{t\in T^+}\Pr[t|b=0]
+\frac{1}{2}\sum_{t\in T}\Pr[t|b=0]
-\sum_{t\in T^+}\Pr[t|b=1]
-\frac{1}{2}\sum_{t\in T}\Pr[t|b=1]\\
&=
\sum_{t\in T^+}\Pr[t|b=0]
+\frac{1}{2}\Big(1-\sum_{t\in T^+}\Pr[t|b=0]-\sum_{t\in T^-}\Pr[t|b=0]\Big)\\
&~~-\sum_{t\in T^+}\Pr[t|b=1]
-\frac{1}{2}\Big(1-\sum_{t\in T^+}\Pr[t|b=1]-\sum_{t\in T^-}\Pr[t|b=1]\Big)\\
&=
\frac{1}{2}\sum_{t\in T^+}(\Pr[t|b=0]-\Pr[t|b=1])
+\frac{1}{2}\sum_{t\in T^-}(\Pr[t|b=1]-\Pr[t|b=0])\\
&=
\frac{1}{2}\sum_{t\in T^+}\frac{|R_t|}{2^\ell}\frac{|X_{0,t}|-|X_{1,t}|}{2^\ell}
+\frac{1}{2}\sum_{t\in T^-}\frac{|R_t|}{2^\ell}\frac{|X_{1,t}|-|X_{0,t}|}{2^\ell}\\
&\ge
\frac{1}{2}\sum_{t\in T^+}\frac{|R_t|}{2^\ell}\frac{|X_{0,t}|-\frac{|X_{0,t}|}{1+\epsilon}}{2^\ell}
+\frac{1}{2}\sum_{t\in T^-}\frac{|R_t|}{2^\ell}\frac{|X_{1,t}|-(1-\epsilon)|X_{1,t}|}{2^\ell}\\
&=
\frac{1}{2}\Big(1-\frac{1}{1+\epsilon}\Big)\sum_{t\in T^+}\frac{|R_t|}{2^\ell}\frac{|X_{0,t}|}{2^\ell}
+\frac{\epsilon}{2}\sum_{t\in T^-}\frac{|R_t|}{2^\ell}\frac{|X_{1,t}|}{2^\ell}\\
&=
\frac{1}{2}\frac{\epsilon}{1+\epsilon}\sum_{t\in T^+}\frac{|R_t|}{2^\ell}\frac{2|X_{0,t}|}{2^{\ell+1}}
+\frac{\epsilon}{2}\sum_{t\in T^-}\frac{|R_t|}{2^\ell}\frac{2|X_{1,t}|}{2^{\ell+1}}\\
&\ge
\frac{1}{2}\frac{\epsilon}{1+\epsilon}\sum_{t\in T^+}\frac{|R_t|}{2^\ell}\frac{|X_{0,t}|+|X_{1,t}|}{2^{\ell+1}}
+\frac{\epsilon}{2}\sum_{t\in T^-}\frac{|R_t|}{2^\ell}\frac{|X_{0,t}|+|X_{1,t}|}{2^{\ell+1}}\\
&\ge
\frac{1}{2}\frac{\epsilon}{1+\epsilon}\sum_{t\in T^+}\frac{|R_t|}{2^\ell}\frac{|X_{0,t}|+|X_{1,t}|}{2^{\ell+1}}
+\frac{\epsilon}{2(1+\epsilon)}\sum_{t\in T^-}\frac{|R_t|}{2^\ell}\frac{|X_{0,t}|+|X_{1,t}|}{2^{\ell+1}}\\
&=
\frac{1}{2}\frac{\epsilon}{1+\epsilon}\sum_{t\in T^+\cup T^-}\frac{|R_t|}{2^\ell}\frac{|X_{0,t}|+|X_{1,t}|}{2^{\ell+1}}\\
&=
\frac{1}{2}\frac{\epsilon}{1+\epsilon}
\sum_{t\in T^+\cup T^-}\Pr[t]\\
&\ge
\frac{1}{2}\frac{\epsilon}{1+\epsilon}
\frac{1}{\poly(\secp)}\\
&=
\frac{1}{\poly(\secp)}
\end{align}
for infinitely many $\secp$, which breaks the statistical hiding.
\end{proof}

\if0
\begin{proof}[Proof of \cref{thm:equalS}]
Define 
\begin{eqnarray}
T_r^+&\coloneqq&\{t~|~
(1+\epsilon)|S(1,r,t)|\le |S(0,r,t)|
\},\\
T_{r}^-&\coloneqq&\{t~|~
|S(0,r,t)|\le(1-\epsilon)|S(1,r,t)|
\}.
\end{eqnarray}
We want to show that
for any $r\in\bit^\ell$,
\begin{equation}
\sum_{t\in T_{r}^+}
\frac{|S(0,r,t)|+|S(1,r,t)|}{2^{\ell+1}} \le\negl(\secp) 
\label{T+}
\end{equation}
and
\begin{equation}
\sum_{t\in T_{r}^-}
\frac{|S(0,r,t)|+|S(1,r,t)|}{2^{\ell+1}} \le\negl(\secp). 
\label{T-}
\end{equation}

First, we show \cref{T+}.
To show \cref{T+}, assume that    
there exists $r^*\in\bit^\ell$ such that
\begin{equation}
\sum_{t\in T_{r^*}^+}
\frac{|S(0,r^*,t)|+|S(1,r^*,t)|}{2^{\ell+1}} \ge 
\frac{1}{\poly(\secp)}.
\end{equation}
Then,
the following unbounded malicious receiver $\cR^*$ can break the statistical hiding
of the classical bit commitment scheme in \cref{protocol:classical}.
\begin{enumerate}
\item
Choose $r^*$.
\item
Execute the commit phase with $\cS$.
Let $t$ be the transcript obtained in this execution.
    \item 
    If $t\in T_{r^*}^+$, output 0.
    If $t\in T_{r^*}^-$, output 1.
    If $t\in T_{r^*}$, output $\bot$.
\end{enumerate}
The probability that $\cR^*$ outputs 0 when $\cS$ commits $b=0$ is
\begin{eqnarray}
\Pr[0\gets \cR^*|b=0]
&=&\sum_{t\in T_{r^*}^+} \Pr[t|b=0,r^*]\\
&=&\sum_{t\in T_{r^*}^+} \frac{|S(0,r^*,t)|}{2^\secp}.
\end{eqnarray}
The probability that $\cR^*$ outputs 0 when $\cS$ commits $b=1$ is
\begin{eqnarray}
\Pr[0\gets \cR^*|b=1]
&=&\sum_{t\in T_{r^*}^+} \Pr[t|b=1,r^*]\\
&=&\sum_{t\in T_{r^*}^+} \frac{|S(1,r^*,t)|}{2^\secp}. 
\end{eqnarray}
Then
\begin{eqnarray}
\Pr[0\gets \cR^*|b=0]
-\Pr[0\gets \cR^*|b=1]
&=&\sum_{t\in T_{r^*}^+} \frac{|S(0,r^*,t)|-|S(1,r^*,t)|}{2^\secp} \\
&\ge&\sum_{t\in T_{r^*}^+} \frac{|S(0,r^*,t)|-\frac{|S(0,r^*,t)|}{1+\epsilon}}{2^\secp} \\
&\ge&\Big(1-\frac{1}{1+\epsilon}\Big)\sum_{t\in T_{r^*}^+} \frac{|S(0,r^*,t)|}{2^\secp} \\
&=&\Big(1-\frac{1}{1+\epsilon}\Big)\sum_{t\in T_{r^*}^+} \frac{2|S(0,r^*,t)|}{2^{\secp+1}} \\
&\ge&\Big(1-\frac{1}{1+\epsilon}\Big)\sum_{t\in T_{r^*}^+} \frac{|S(0,r^*,t)|+|S(1,r^*,t)|}{2^{\secp+1}} \\
&\ge&\frac{1}{\poly(\secp)},
\end{eqnarray}
which breaks the statistical hiding.
We have therefore shown \cref{T+}.

Next, let us show \cref{T-}.
To show \cref{T-}, assume there exists $r^*\in\bit^\secp$ such that
\begin{equation}
\sum_{t\in T_{r^*}^-}
\frac{|S(0,r^*,t)|+|S(1,r^*,t)|}{2^{\secp+1}} \ge \frac{1}{\poly(\secp)}.
\end{equation}
Then,
the following unbounded malicious receiver $\cR^*$ can break the statistical hiding
of the classical bit commitment scheme in \cref{protocol:classical}.
\begin{enumerate}
\item
Choose $r^*$.
\item
Execute the commit phase with $\cS$.
Let $t$ be the transcript obtained in this execution.
    \item 
    If $t\in T_{r^*}^+$, output 0.
    If $t\in T_{r^*}^-$, output 1.
    If $t\in T_{r^*}$, output $\bot$.
\end{enumerate}
The probability that $\cR^*$ outputs 1 when $\cS$ commits $b=1$ is
\begin{eqnarray}
\Pr[1\gets \cR^*|b=1]
&=&\sum_{t\in T_{r^*}^-} \Pr[t|b=1,r^*] \\
&=&\sum_{t\in T_{r^*}^-} \frac{|S(1,r^*,t)|}{2^{\secp}}.
\end{eqnarray}
The probability that $\cR^*$ outputs 1 when $\cS$ commits $b=0$ is
\begin{eqnarray}
\Pr[1\gets \cR^*|b=0]
&=&\sum_{t\in T_{r^*}^-} \Pr[t|b=0,r^*] \\
&=&\sum_{t\in T_{r^*}^-} \frac{|S(0,r^*,t)|}{2^{\secp}}.
\end{eqnarray}
Then
\begin{eqnarray}
\Pr[1\gets \cR^*|b=1]
-\Pr[1\gets \cR^*|b=0]
&=&
\sum_{t\in T_{r^*}^-} \frac{|S(1,r^*,t)|-|S(0,r^*,t)|}{2^{\secp}}\\
&\ge&
\sum_{t\in T_{r^*}^-} \frac{|S(1,r^*,t)|-(1-\epsilon)|S(1,r^*,t)|}{2^{\secp}}\\
&=&
\epsilon\sum_{t\in T_{r^*}^-} \frac{|S(1,r^*,t)|}{2^{\secp}}\\
&=&
\epsilon\sum_{t\in T_{r^*}^-} \frac{2|S(1,r^*,t)|}{2^{\secp+1}}\\
&\ge&
\epsilon\sum_{t\in T_{r^*}^-} \frac{|S(0,r^*,t)|+|S(1,r^*,t)|}{2^{\secp+1}}\\
&\ge&
\frac{1}{\poly(\secp)},
\end{eqnarray}
which breaks the statistical hiding.
We have therefore shown \cref{T-}.
\end{proof}
\fi

The following \cref{lem:kinji}
roughly claims that 
whenever $t\in T$
a good approximation $k$ of $2|X_{0,t}|$ and $2|X_{1,t}|$ up to a small constant multiplicative error 
can be chosen with probability $1/m=1/\poly(\secp)$.
\begin{lemma}
\label{lem:kinji}
Let
$0<\epsilon<1$ be a constant.
Let
\begin{equation}
T\coloneqq\{t:
(1-\epsilon)|X_{1,t}|<|X_{0,t}|<(1+\epsilon) |X_{1,t}|
\}.
\end{equation}
Let $m$ be an integer such that $(1+\epsilon)^m\ge 2^{\ell+1}$.
For any $t\in T$,
there exists an integer 
$j\in\{0,1,2,...,m-1\}$
such that
\begin{equation}
k\le 2|X_{0,t}|\le (1+\epsilon)k
\label{k1}
\end{equation}
and
\begin{equation}
\frac{k}{1+\epsilon}\le 2|X_{1,t}|\le \frac{1+\epsilon}{1-\epsilon}k.
\label{k2}
\end{equation}
Here, $k\coloneqq \lceil(1+\epsilon)^j\rceil$.
\end{lemma}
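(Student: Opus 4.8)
\textbf{Proof plan for \cref{lem:kinji}.} The plan is to fix an arbitrary $t\in T$, abbreviate $a\coloneqq 2|X_{0,t}|$, and choose $j$ to be the \emph{largest} index in $\{0,1,\dots,m-1\}$ for which $k\coloneqq\lceil(1+\epsilon)^j\rceil\le a$; then I would verify \cref{k1} and \cref{k2} by direct manipulation. Before that it is worth recording a small observation that is used implicitly: every $t\in T$ has $|X_{0,t}|\ge 1$ and $|X_{1,t}|\ge 1$. Indeed, if $|X_{1,t}|=0$ then the chain $(1-\epsilon)|X_{1,t}|<|X_{0,t}|<(1+\epsilon)|X_{1,t}|$ collapses to $0<|X_{0,t}|<0$, which is impossible; hence $|X_{1,t}|\ge 1$, and then $|X_{0,t}|>(1-\epsilon)|X_{1,t}|\ge 1-\epsilon>0$, where we use $\epsilon<1$, so $|X_{0,t}|\ge 1$ as well. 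In particular $a\ge 2\ge 1=\lceil(1+\epsilon)^0\rceil$, so the index $j$ defined above exists (the index $0$ already satisfies the condition), and $j\in\{0,1,\dots,m-1\}$, $k=\lceil(1+\epsilon)^j\rceil$ as required by the statement.

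Next I would prove \cref{k1}. The lower bound $k\le a=2|X_{0,t}|$ is immediate from the choice of $j$. For the upper bound $2|X_{0,t}|\le(1+\epsilon)k$ I would split into two cases. If $j<m-1$, then $j+1$ still lies in $\{0,\dots,m-1\}$, so maximality of $j$ gives $\lceil(1+\epsilon)^{j+1}\rceil>a$; since both sides are integers this means $a\le\lceil(1+\epsilon)^{j+1}\rceil-1<(1+\epsilon)^{j+1}=(1+\epsilon)(1+\epsilon)^j\le(1+\epsilon)\lceil(1+\epsilon)^j\rceil=(1+\epsilon)k$. If $j=m-1$, I would instead use $a=2|X_{0,t}|\le 2\cdot 2^\ell=2^{\ell+1}$, together with the hypothesis $2^{\ell+1}\le(1+\epsilon)^m=(1+\epsilon)(1+\epsilon)^{m-1}\le(1+\epsilon)k$. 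In both cases $2|X_{0,t}|\le(1+\epsilon)k$, establishing \cref{k1}. This case split is the one place where a little care is needed and the only place the hypothesis $(1+\epsilon)^m\ge 2^{\ell+1}$ is used; it is precisely what guarantees that the index never has to leave $\{0,\dots,m-1\}$, so it is the "main obstacle," though a mild one.

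Finally, for \cref{k2} I would simply feed \cref{k1} into the inequalities defining $T$. Doubling $(1-\epsilon)|X_{1,t}|<|X_{0,t}|<(1+\epsilon)|X_{1,t}|$ yields $(1-\epsilon)\cdot 2|X_{1,t}|<a<(1+\epsilon)\cdot 2|X_{1,t}|$. The right inequality gives $2|X_{1,t}|>a/(1+\epsilon)\ge k/(1+\epsilon)$, using $a\ge k$; the left inequality, combined with $1-\epsilon>0$ (again using $\epsilon<1$), gives $2|X_{1,t}|<a/(1-\epsilon)\le(1+\epsilon)k/(1-\epsilon)$, using $a\le(1+\epsilon)k$ from \cref{k1}. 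Together these are exactly the two bounds of \cref{k2}, which completes the argument. There is no genuinely hard step here; the whole proof is a short chain of elementary inequalities once the index $j$ is chosen correctly.
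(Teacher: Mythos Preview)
Your proof is correct and follows essentially the same approach as the paper: pick $j$ so that $k=\lceil(1+\epsilon)^j\rceil$ approximates $2|X_{0,t}|$ up to a factor $(1+\epsilon)$, then deduce \cref{k2} from \cref{k1} and the membership $t\in T$. The only cosmetic difference is that the paper selects $j$ via the condition $(1+\epsilon)^j<2|X_{0,t}|\le(1+\epsilon)^{j+1}$, which yields the upper bound of \cref{k1} directly without your case split on $j=m-1$.
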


\begin{proof}[Proof of \cref{lem:kinji}]
Let $t\in T$.
Because $0\le (1-\epsilon)|X_{1,t}|<|X_{0,t}|$, 
we have $|X_{0,t}|\ge1$.
Because $2|X_{0,t}|\le2^{\ell+1}$, there exists an integer $j\in\{0,1,2,...,m-1\}$ such that
$(1+\epsilon)^j< 2|X_{0,t}|\le (1+\epsilon)^{j+1}$.
Let us take $k=\lceil(1+\epsilon)^{j}\rceil$.
Because $2|X_{0,t}|$ is an integer,
$k\le 2|X_{0,t}|$.
Moreover, we have
\begin{align}
2|X_{0,t}|\le (1+\epsilon)^{j+1}
=(1+\epsilon)\times(1+\epsilon)^j
\le(1+\epsilon)\times\lceil(1+\epsilon)^j\rceil
=(1+\epsilon)\times k.
\end{align}
In summary, we have
$k\le 2|X_{0,t}|\le (1+\epsilon)k$.
We also have    
$2|X_{1,t}|<\frac{2|X_{0,t}|}{1-\epsilon}\le \frac{1+\epsilon}{1-\epsilon}k$,
and
$2|X_{1,t}|>\frac{2|X_{0,t}|}{1+\epsilon}\ge \frac{k}{1+\epsilon}$.
\end{proof}

\section{Construction of IV-PoQ}
\label{sec:construction}
In this section, we prove \cref{thm:PoQ_from_Com}. That is, 
we construct IV-PoQ from statistically-hiding and computationally-binding classical bit commitments. 

Let $(\cS,\cR$) be a statistically-hiding and computationally-binding
classical (interactive) bit commitment scheme.
Note that this can be constructed from one-way functions 
\ifnum\cameraready=1
~\cite{SIAM:HNORV09}.
\else
(\cref{thm:commitment_from_OWFs})~\cite{SIAM:HNORV09}.
\fi
The first phase where the PPT verifier $\cV_1$ and the QPT prover $\cP$ interact
is given in \cref{protocol:PoQ1}.
The second phase where the inefficient verifier $\cV_2$ runs 
is given in \cref{protocol:PoQ2}.

\ifnum\llncs=0
\begin{figure}[H]
\begin{algorithm}[H]
\caption{The first phase by $\cV_1$ and $\cP$}
\label{protocol:PoQ1}
\begin{enumerate}
\item \label{step:coherent_commit}
The PPT verifier $\cV_1$ and the QPT prover $\cP$ coherently execute the commit phase of the classical bit commitment scheme $(\cS,\cR)$.
(See \cref{protocol:coherent}.)
$\cV_1$ plays the role of the receiver $\cR$.
$\cP$ plays the role of the sender $\cS$. 
    Let $t$ be the transcript obtained in the execution.
    \item 
    $\cP$ has the state 
    $\sum_{b\in\bit}\sum_{x\in X_{b,t}}
    \ket{b}\ket{x}$.
    \item\label{step:choose_hash}
    Let $0<\epsilon<1$ be a small constant. (We set $\epsilon\seteq \frac{1}{100}$ for clarity.)
    Let $m$ be an integer such that $(1+\epsilon)^m\ge 2^{\ell+1}$. (Such an $m$ can be computed in $\poly(\secp)$ time. 
    In fact, we have only to take the minimum integer $m$ such that $m\ge\frac{\ell+1}{\log_2 (1+\epsilon)}$.)
    $\cV_1$ chooses 
    $j\gets\{0,1,2,...,m-1\}$. Define $k\coloneqq \lceil (1+\epsilon)^j\rceil$.
    Let $\mathcal{H}\coloneqq\{h:\cX\to\cY\}$
    be a pairwise-independent hash family with
    $\cX\coloneqq\bit^\ell$ and $\cY\coloneqq[k]$. 
    $\cV_1$ chooses $h_0,h_1\gets\mathcal{H}$,
    and sends $(h_0,h_1)$ to $\cP$.
    \item \label{step:send_y}
    $\cP$ changes its state into 
    $\sum_{b\in\bit}\sum_{x\in X_{b,t}}
    \ket{b}\ket{x}\ket{h_b(x)}$,
    and measures the third register in the computational basis to obtain the result $y\in[k]$.
    $\cP$ sends $y$ to $\cV_1$.
    The post-measurement state is
    \begin{equation}
    \sum_{b\in\bit}\sum_{x\in X_{b,t}\cap h_b^{-1}(y)}
    \ket{b}\ket{x}.
    \label{thestate}
    \end{equation}
    (If there is only a single $x_b$ such that $x_b\in X_{b,t}\cap h_b^{-1}(y)$ for each $b\in\bit$,
    \cref{thestate} is $\ket{0}\ket{x_0}+\ket{1}\ket{x_1}$. We will show later that it occurs with a non-negligible probability.)
    \item \label{step:send_v1_and_xi}
    From now on, $\cV_1$ and $\cP$ run the protocol of \cite{NatPhys:KMCVY22}.
    $\cV_1$ chooses $v_1\leftarrow \bit$. 
    $\cV_1$ chooses $\xi\leftarrow \bit^\ell$. 
    $\cV_1$ sends $v_1$ and $\xi$ to $\cP$. 
    \item \label{item:v1}
    \begin{itemize}
    \item\label{step:v_1_0_or_1}
    If $v_1=0$: $\cP$ measures all qubits of the state of \cref{thestate} in the computational basis,
    and sends the measurement result $(b',x')\in\bit\times\bit^\ell$ to $\cV_1$. 
    $\cV_1$ halts.
    \item
    If $v_1=1$: $\cP$ changes the state of \cref{thestate} into
    \begin{equation}
    \sum_{b\in\bit}\sum_{x\in X_{b,t}\cap h_b^{-1}(y)}\ket{b\oplus (\xi\cdot x)}\ket{x},
    \label{thestate2}
    \end{equation}
    measures its second register in the Hadamard basis to obtain the measurement result $d\in\{0,1\}^\ell$,
    and sends $d$ to $\cV_1$.
    The post-measurement state is
     \begin{equation}
    \sum_{b\in\bit}\sum_{x\in X_{b,t}\cap h_b^{-1}(y)}(-1)^{d\cdot x}\ket{b\oplus (\xi\cdot x)}.
    \label{thestate3}
    \end{equation}
    (If there is only a single $x_b$ such that $x_b\in X_{b,t}\cap h_b^{-1}(y)$ for each $b\in\bit$,
    \cref{thestate2} is $\ket{\xi\cdot x_0}\ket{x_0}+\ket{1\oplus(\xi\cdot x_1)}\ket{x_1}$,
    and \cref{thestate3} is
    $|\xi\cdot x_0\rangle+(-1)^{d\cdot(x_0\oplus x_1)}|1\oplus (\xi\cdot x_1)\rangle$.)
    \end{itemize}
    \item \label{step:send_v2}
    $\cV_1$ chooses $v_2\leftarrow\bit$. $\cV_1$ sends $v_2$ to $\cP$.
    \item \label{item:v2}
    If $v_2=0$, $\cP$ measures \cref{thestate3} in the basis 
$\left\{\cos\frac{\pi}{8}|0\rangle+\sin\frac{\pi}{8}|1\rangle,
\sin\frac{\pi}{8}|0\rangle-\cos\frac{\pi}{8}|1\rangle\right\}$.
    If $v_2=1$, $\cP$ measures \cref{thestate3} in the basis 
$\left\{\cos\frac{\pi}{8}|0\rangle-\sin\frac{\pi}{8}|1\rangle,
\sin\frac{\pi}{8}|0\rangle+\cos\frac{\pi}{8}|1\rangle\right\}$.
    Let $\eta\in\bit$ be the measurement result. (For the measurement in the basis $\{|\phi\rangle,|\phi^\perp\rangle\}$, the result 0 corresponds to $|\phi\rangle$ and the result 1 corresponds to $|\phi^\perp\rangle$.)
    $\cP$ sends $\eta$ to $\cV_1$.
\end{enumerate}
\end{algorithm}
\end{figure}

\begin{figure}[!t]
\begin{algorithm}[H]
\caption{The second phase by $\cV_2$}
\label{protocol:PoQ2}
\begin{enumerate}
\item
$\cV_2$ takes
$(t,h_0,h_1,y,v_1=0,\xi,b',x')$
or $(t,h_0,h_1,y,v_1=1,\xi,d,v_2,\eta)$ as input.
   \item\label{is_y_good}
  If $|X_{0,t}\cap h_0^{-1}(y)|=|X_{1,t}\cap h_1^{-1}(y)|=1$ is not satisfied, 
  $\cV_2$ outputs $\top$ with probability $7/8$ and outputs $\bot$ with probability $1/8$. Then $\cV_2$ halts.
  If $|X_{0,t}\cap h_0^{-1}(y)|=|X_{1,t}\cap h_1^{-1}(y)|=1$ is satisfied, 
  $\cV_2$ computes $(x_0,x_1)$ and goes to the next step.
  Here, $x_b$ is the single element of $X_{b,t}\cap h_b^{-1}(y)$ for each $b\in\bit$.
  \item 
   If $v_1=0$ and $x'=x_{b'}$, $\cV_2$ outputs $\top$ and halts.
    Otherwise, $\cV_2$ outputs $\bot$ and halts.
     \item
     If $v_1=1$,
    $\cV_2$ outputs $\top$ if 
    \begin{align}
    (\xi\cdot x_0\neq \xi\cdot x_1) \wedge
    (\eta=\xi\cdot x_0),
    \end{align}
    or
    \begin{align}
    (\xi\cdot x_0 = \xi\cdot x_1) \wedge
    (\eta=v_2\oplus d\cdot(x_0\oplus x_1)).
    \end{align}
    Otherwise, $\cV_2$ outputs $\bot$.    
\end{enumerate}
\end{algorithm}
\end{figure} 
\fi

We can show the completeness and soundness of our IV-PoQ
as follows.

\begin{theorem}[Completeness]
\label{thm:completeness}
Our IV-PoQ satisfies $(\frac{7}{8}+\frac{1}{\poly(\secp)})$-completeness.  
\end{theorem}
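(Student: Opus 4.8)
The plan is to express the honest prover's acceptance probability exactly as a two-term convex combination, according to whether the \emph{good event} $G$ — that $|X_{0,t}\cap h_0^{-1}(y)|=|X_{1,t}\cap h_1^{-1}(y)|=1$ — occurs, and then to prove $\Pr[G]\ge 1/\poly(\secp)$. When $G$ occurs, the post-measurement state in \cref{thestate} is exactly (the normalization of) $\ket{0}\ket{x_0}+\ket{1}\ket{x_1}$ where $x_b$ is the unique element of $X_{b,t}\cap h_b^{-1}(y)$, so \cref{step:send_v1_and_xi}--\cref{item:v2} execute the second phase of the protocol of \cite{NatPhys:KMCVY22} on this state, and a direct calculation (the $v_1=0$ branch is accepted with probability $1$, and the $v_1=1$ branch with probability $\cos^2(\pi/8)$ in both sub-cases $\xi\cdot x_0\neq\xi\cdot x_1$ and $\xi\cdot x_0=\xi\cdot x_1$) shows that $\cV_2$ accepts with probability $c_0:=\tfrac12+\tfrac12\cos^2(\pi/8)=\tfrac{6+\sqrt{2}}{8}$. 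When $G$ does not occur, $\cV_2$ accepts with probability exactly $7/8$ by \cref{is_y_good}. Hence the honest acceptance probability equals $\Pr[G]\,c_0+(1-\Pr[G])\cdot\tfrac78=\tfrac78+\Pr[G]\cdot\tfrac{\sqrt2-1}{8}$, and it suffices to show $\Pr[G]\ge 1/\poly(\secp)$.

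To bound $\Pr[G]$, first note that by \cref{lem:equalS} the commit-phase transcript satisfies $t\in T$ with probability $1-\negl(\secp)$. Fix such a $t$. By \cref{lem:kinji} there is an index $j^\ast=j^\ast(t)\in\{0,\dots,m-1\}$ for which $k^\ast:=\lceil(1+\epsilon)^{j^\ast}\rceil$ satisfies $|X_{0,t}|/k^\ast,\,|X_{1,t}|/k^\ast\in[\tfrac12-O(\epsilon),\,\tfrac12+O(\epsilon)]$ (in particular $|X_{0,t}|,|X_{1,t}|\ge 1$); since $\cV_1$ picks $j$ uniformly in \cref{step:choose_hash}, we have $\Pr[j=j^\ast]=1/m$ with $m=\poly(\secp)$. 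Conditioned on $j=j^\ast$ (so $\cY=[k^\ast]$), the outcome $y$ of the measurement in \cref{step:send_y} is obtained with probability $\frac{|X_{0,t}\cap h_0^{-1}(y)|+|X_{1,t}\cap h_1^{-1}(y)|}{|X_{0,t}|+|X_{1,t}|}$, which equals $\frac{2}{|X_{0,t}|+|X_{1,t}|}$ on every $y$ witnessing $G$; using independence of $h_0,h_1$ this gives
\[
\Pr[G\mid t,j^\ast]=\frac{2}{|X_{0,t}|+|X_{1,t}|}\sum_{y\in[k^\ast]}\Pr_{h_0\gets\mathcal{H}}[|X_{0,t}\cap h_0^{-1}(y)|=1]\cdot\Pr_{h_1\gets\mathcal{H}}[|X_{1,t}\cap h_1^{-1}(y)|=1].
\]
By \cref{lem:hash2} each factor is at least $\frac{|X_{b,t}|}{k^\ast}-\frac{|X_{b,t}|^2}{(k^\ast)^2}$, which for $\epsilon=1/100$ is a positive constant close to $1/4$; together with $|X_{0,t}|+|X_{1,t}|\le(1+O(\epsilon))k^\ast$ this yields a uniform lower bound $\Pr[G\mid t,j^\ast]\ge \Omega(1)$ valid for all $t\in T$. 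Therefore $\Pr[G]\ge\sum_{t\in T}\Pr[t]\cdot\Pr[j=j^\ast(t)]\cdot\Pr[G\mid t,j^\ast(t)]\ge\Pr[t\in T]\cdot\Omega(1)/m\ge 1/\poly(\secp)$, and plugging this into the expression above proves the claimed $(\tfrac78+\tfrac1{\poly(\secp)})$-completeness (up to the $\negl(\secp)$ slack allowed by \cref{def:PoQinefficient}).

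The routine ingredients are the verification that the honest prover can perform all the coherent operations of \cref{protocol:coherent} and \cref{protocol:PoQ1}, and the case analysis showing the $v_1=0$ and $v_1=1$ branches pass with the stated probabilities — essentially the completeness computation of \cite{NatPhys:KMCVY22} adapted to the state $\ket{0}\ket{x_0}+\ket{1}\ket{x_1}$ as indicated in the footnotes of \cref{step:send_v1_and_xi}. The step that requires care is the quantitative lower bound on $\Pr[G\mid t,j^\ast]$: one must propagate the $(1\pm\epsilon)$ slack coming from \cref{lem:kinji} through the quadratic term of \cref{lem:hash2} and check that the bound does not collapse to $0$. This is precisely why $\epsilon$ is taken to be a small constant such as $1/100$: it keeps each ratio $|X_{b,t}|/|\cY|$ safely below $1$, so that $\frac{|X_{b,t}|}{|\cY|}-\frac{|X_{b,t}|^2}{|\cY|^2}$ stays bounded away from $0$, which is the only place in the argument where a non-trivial constant is needed.
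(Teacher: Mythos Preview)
Your proposal is correct and follows essentially the same approach as the paper: split according to the good event $G$, use \cref{lem:equalS} and \cref{lem:kinji} to hit the right size class $k^\ast$ with probability $1/m$, and then apply \cref{lem:hash2} together with the independence of $h_0,h_1$ to get a constant lower bound on $\Pr[G\mid t,j^\ast]$ (the paper's \cref{lem:Sis1} carries out exactly the computation you sketch, arriving at the explicit constant $0.1$). Your derivation of the factorized formula for $\Pr[G\mid t,j^\ast]$ and the observation that $u-u^2$ stays near $1/4$ when $u=|X_{b,t}|/k^\ast\approx 1/2$ match the paper's calculation line by line.
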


\begin{theorem}[Soundness]
\label{thm:soundness}
Our IV-PoQ satisfies $\frac{7}{8}$-strong-soundness, which in particular implies $\frac{7}{8}$-soundness.
\end{theorem}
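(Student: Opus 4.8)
By \cref{lem:strong_to_soundness} it suffices to establish $\tfrac{7}{8}$-strong-soundness, and the plan is to do this by reduction to computational binding. Suppose toward a contradiction that there is a PPT malicious prover $\cP^*$ and a polynomial $p$ such that, for infinitely many $\secp$, a set of more than $\tfrac{1}{p(\secp)}$-fraction of randomness $r$ (call such $r$ \emph{bad}) satisfies $\Pr[\top\gets\cV_2(I):I\gets\langle\cP^*_r(1^\secp),\cV_1(1^\secp)\rangle]\ge \tfrac78+\tfrac1{p(\secp)}$. Fix a bad $r$. Let $\good$ denote the event, over an execution with the prover's randomness fixed to $r$, that $|X_{0,t}\cap h_0^{-1}(y)|=|X_{1,t}\cap h_1^{-1}(y)|=1$. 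By the description of $\cV_2$ in \cref{protocol:PoQ2}, conditioned on $\neg\good$ the verifier accepts with probability exactly $\tfrac78$ regardless of the rest of the transcript, so
\[
\Pr[\cV_2\text{ accepts}\mid r]=\tfrac78+\Pr[\good\mid r]\cdot\bigl(\Pr[\cV_2\text{ accepts}\mid \good\wedge r]-\tfrac78\bigr).
\]
Since the left-hand side is $\ge \tfrac78+\tfrac1p$ and the bracketed factor is at most $\tfrac18\le 1$, this forces $\Pr[\good\mid r]\ge \tfrac1p$ and $\Pr[\cV_2\text{ accepts}\mid\good\wedge r]\ge \tfrac78+\tfrac1p$.

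The next step is to identify the protocol, conditioned on $\good$, with the second phase of the PoQ of \cite{NatPhys:KMCVY22}: once $\good$ holds, the pair $(x_0,x_1)$ with $x_b\in X_{b,t}\cap h_b^{-1}(y)$ is well-defined and unique, and Steps~\ref{step:send_v1_and_xi}--\ref{item:v2} of \cref{protocol:PoQ1} together with Steps~3--4 of \cref{protocol:PoQ2} are syntactically the KMCVY test on the ``claw'' $(x_0,x_1)$, whose honest completeness is a constant $c$ and whose soundness parameter is $s=\tfrac78$. I will import the soundness argument of \cite{NatPhys:KMCVY22} in the following form: any prover that passes this test with probability $\ge \tfrac78+\delta$ can be converted, by rewinding it on the two possible first challenges $v_1\in\bit$ and extracting (bits of) $x_0\oplus x_1$ from the $v_1=1$ Bell-type branch, into an algorithm that outputs the full claw $(x_0,x_1)$ with probability $\ge \poly(\delta)$. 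Two points matter here: the extractor only rewinds and re-runs $\cP^*$, which is legitimate since $\cP^*$ is classical PPT; and the extractor never needs to compute the (possibly exponential) sets $X_{b,t}$ or to detect whether $\good$ holds.

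Finally, build a PPT adversary $\cB$ against the computational binding of $(\cS,\cR)$: $\cB$ plays the sender, relaying the coherent-commit messages between $\cP^*$ and the external honest receiver while simulating by itself the remaining $\cV_1$-messages of \cref{protocol:PoQ1} (Steps~\ref{step:choose_hash}--\ref{item:v2}); after obtaining the transcript $t$ and the string $y$, it runs the above extractor to get a candidate $(x_0,x_1)$ and outputs $(t,0,x_0,1,x_1)$. Since by definition any element of $X_{b,t}$ is a valid opening of $t$ to the bit $b$, $\cB$ wins the binding game whenever the extractor returns the true claw. For a bad $r$: with probability $\ge \Pr[\good\mid r]\ge \tfrac1p$ the event $\good$ occurs, and conditioned on it $\cP^*$ passes the KMCVY test with probability $\ge \tfrac78+\tfrac1p$, so the extractor succeeds with probability $\ge \poly(\tfrac1p)$; hence $\cB$ with the prover-randomness fixed to $r$ wins with probability $\ge \tfrac1p\cdot\poly(\tfrac1p)=\tfrac1{\poly(\secp)}$. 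As this holds for every bad $r$ and bad $r$'s have density $>\tfrac1p$, averaging over $r$ shows the (randomness-unfixed) adversary $\cB$ wins with probability $\ge \tfrac1p\cdot\tfrac1{\poly(\secp)}=\tfrac1{\poly(\secp)}$ for infinitely many $\secp$, contradicting computational binding. The delicate point, and the one I expect to require the most care in the write-up, is exactly this de-averaging: the extraction guarantee of \cite{NatPhys:KMCVY22} is a worst-case statement over provers, so it applies to each $\cP^*_r$ individually, and only after using it per-$r$ do we recombine into a single efficient binding adversary; keeping $\cB$ polynomial-time throughout (in particular, having it output a double opening it cannot itself verify, rather than trying to check $\good$) is what makes the reduction go through.
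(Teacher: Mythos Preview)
Your proposal is correct and follows the same approach as the paper: isolate the advantage onto transcripts where the singleton condition holds (since $\cV_2$ accepts with probability exactly $\tfrac78$ otherwise), then extract both openings via the KMCVY rewinding-and-Goldreich--Levin argument to break computational binding. The one point to make precise is that your bound $\Pr[\top\mid\good,r]\ge\tfrac78+\tfrac1p$ is an \emph{average} over tuples $(t,h_0,h_1,y)$ in $\good$, whereas the extraction (rewind on $v_2$ inside the $v_1=1$ branch to learn $\xi\cdot(x_0\oplus x_1)$, then Goldreich--Levin over $\xi$, combined with the $v_1=0$ answer) needs a per-tuple advantage; the paper handles this by instead defining $\good\coloneqq\{(t,h_0,h_1,y):\Pr[\top\mid t,h_0,h_1,y]\ge\tfrac78+\tfrac1{2p}\}$ (which automatically forces the singleton condition) and showing $\Pr[\good]\ge\tfrac1{2p}$---in your framing this is a Markov step you should insert before invoking the extractor.
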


\ifnum\llncs=1
\fi

The above theorems only gives an inverse-polynomial completeness-soundness gap, but we can amplify the gap to $1$ by sequential repetition by \cref{thm:gap_amplification_sequential}. 
\cref{thm:completeness} is shown in \cref{sec:completeness}.    
\cref{thm:soundness} is shown in \cref{sec:soundness}.    
By combining
\cref{thm:completeness} and
\cref{thm:soundness}, we obtain
\cref{thm:PoQ_from_Com}.

\if0
The completeness-soundness gap is $1/\poly(\secp)$ because
\begin{eqnarray}
&&(1-\negl(\secp))
\frac{0.01}{m}
\Big(\frac{1}{2}+\frac{1}{2}\cos^2\frac{\pi}{8}\Big)
+
(1-\negl(\secp))
\Big(1-\frac{0.01}{m}\Big)\frac{7}{8}
-\Big(\frac{7}{8}+\negl(\secp)\Big)\\
&=&
(1-\negl(\secp))
\frac{0.01}{m}
\Big(\frac{1}{2}+\frac{1}{2}\cos^2\frac{\pi}{8}-\frac{7}{8}\Big)
-\negl(\secp)\\
&=&
(1-\negl(\secp))
\frac{0.01\times0.025}{m}
-\negl(\secp)\\
&\ge&\frac{1}{\poly(\secp)}.
\end{eqnarray}
\fi

\subsection{Completeness}
\label{sec:completeness}

In this subsection, we show $(\frac{7}{8}+\frac{1}{\poly(\secp)})$-completeness.

\begin{proof}[Proof of \cref{thm:completeness}]
Let $p_{\mathsf{good}}$ be the probability that
\cref{thestate} 
in the \cref{step:send_y} of \cref{protocol:PoQ1}
is in the form of $\ket{0}\ket{x_0}+\ket{1}\ket{x_1}$.
Then
$
p_{\mathsf{good}}\ge (1-\negl(\secp))\frac{0.1}{m}
$
because of the following reasons.
\begin{itemize}
    \item 
In Step \ref{step:coherent_commit} of \cref{protocol:PoQ1}, the probability that the transcript $t$ such that $t\in T$
is obtained is at least $1-\negl(\secp)$ from \cref{lem:equalS}
where $T$ is defined in \cref{lem:equalS}.
\item
Given $t\in T$, in Step \ref{step:choose_hash} of \cref{protocol:PoQ1}, the probability that $\cV_1$ chooses $j$ such that
$k$ satisfies \cref{k1} and \cref{k2} of \cref{lem:kinji} is $\frac{1}{m}$.
\item
Given that $k$ satisfies \cref{k1} and \cref{k2} of \cref{lem:kinji},
in Step \ref{step:send_y} of \cref{protocol:PoQ1}, the probability that $y$ such that
$|X_{0,t}\cap h_0^{-1}(y)|=|X_{1,t}\cap h_1^{-1}(y)|=1$
is obtained is at least 0.1 from \cref{lem:Sis1} shown below.
\end{itemize}
Moreover, 
if 
\cref{thestate} 
in Step \ref{step:send_y} of \cref{protocol:PoQ1}
is in the form of $\ket{0}\ket{x_0}+\ket{1}\ket{x_1}$,
the probability that $\cV_2$ outputs $\top$ in \cref{protocol:PoQ2} is 
$\frac{1}{2}+\frac{1}{2}\cos^2\frac{\pi}{8}\ge 0.9$
as is shown 
\ifnum\cameraready=1
in the full version.
\else
in \cref{app:completeness}. 
\fi

Therefore,
the probability that $\cV_2$ outputs $\top$ is
\begin{align}
\Big(\frac{1}{2}+\frac{1}{2}\cos^2\frac{\pi}{8}\Big)
p_{\mathsf{good}}
+
\frac{7}{8}
(1-p_{\mathsf{good}})
=\frac{7}{8}
+\Big(\frac{1}{2}+\frac{1}{2}\cos^2\frac{\pi}{8}-\frac{7}{8}\Big)
p_{\mathsf{good}}
\ge\frac{7}{8}+\frac{1}{\poly(\secp)},
\end{align}
which shows the completeness.
\end{proof}

\begin{lemma}
\label{lem:Sis1}
Assume that $k$ satisfies \cref{k1} and \cref{k2} of \cref{lem:kinji}.
In Step \ref{step:send_y} of \cref{protocol:PoQ1}, the probability that $y$ such that
$|X_{0,t}\cap h_0^{-1}(y)|=|X_{1,t}\cap h_1^{-1}(y)|=1$
is obtained is at least 0.1. 
\end{lemma}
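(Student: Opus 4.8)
The plan is to evaluate the probability in question exactly, as an expectation over the verifier's choice of $(h_0,h_1)$ together with $\cP$'s measurement in \cref{step:send_y}, and then to lower-bound it using \cref{lem:hash2}. Write $N_b\coloneqq|X_{b,t}|$ for $b\in\bit$. For a fixed pair $(h_0,h_1)$, the (unnormalized) state $\sum_{b\in\bit}\sum_{x\in X_{b,t}}\ket{b}\ket{x}\ket{h_b(x)}$ has squared norm $N_0+N_1$, and its projection onto the subspace where the third register equals $y$ consists of exactly $|X_{0,t}\cap h_0^{-1}(y)|+|X_{1,t}\cap h_1^{-1}(y)|$ computational-basis vectors. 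Hence the measurement yields $y\in[k]$ with probability $\frac{|X_{0,t}\cap h_0^{-1}(y)|+|X_{1,t}\cap h_1^{-1}(y)|}{N_0+N_1}$, and the probability that the measured $y$ satisfies $|X_{0,t}\cap h_0^{-1}(y)|=|X_{1,t}\cap h_1^{-1}(y)|=1$ equals
\[
\frac{2}{N_0+N_1}\cdot\mathbb{E}_{h_0,h_1}\Big[\,\big|\{\,y\in[k]\;:\;|X_{0,t}\cap h_0^{-1}(y)|=1=|X_{1,t}\cap h_1^{-1}(y)|\,\}\big|\,\Big],
\]
where the factor $2$ appears because the numerator above equals $1+1$ on the relevant values of $y$.

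Next I would expand the expectation by linearity over $y$ and use that $h_0$ and $h_1$ are sampled independently, so that it factors as $\sum_{y\in[k]}\Pr_{h_0}[|X_{0,t}\cap h_0^{-1}(y)|=1]\cdot\Pr_{h_1}[|X_{1,t}\cap h_1^{-1}(y)|=1]$. Since $|\cX|=2^\ell\ge2$ and $X_{b,t}\subseteq\cX$, \cref{lem:hash2} applies and gives $\Pr_{h_b}[|X_{b,t}\cap h_b^{-1}(y)|=1]\ge\frac{N_b}{k}-\frac{N_b^2}{k^2}=\frac{N_b}{k}\big(1-\frac{N_b}{k}\big)$ for every $y$. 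This is precisely where the hypotheses enter: \cref{k1} gives $\frac{N_0}{k}\in[\tfrac12,\tfrac{1+\epsilon}{2}]$ and \cref{k2} gives $\frac{N_1}{k}\in[\tfrac{1}{2(1+\epsilon)},\tfrac{1+\epsilon}{2(1-\epsilon)}]$, so for the concrete choice $\epsilon=\tfrac{1}{100}$ both ratios lie strictly inside $(0,1)$ and each factor $\frac{N_b}{k}(1-\frac{N_b}{k})$ is a positive constant close to $\tfrac14$ (at least $0.249$). Summing the resulting product over the $k$ outcomes yields
\[
\mathbb{E}_{h_0,h_1}\big[|\{\dots\}|\big]\;\ge\;k\cdot\frac{N_0}{k}\Big(1-\frac{N_0}{k}\Big)\cdot\frac{N_1}{k}\Big(1-\frac{N_1}{k}\Big)\;\ge\;0.06\,k.
\]

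Finally I would bound the denominator using \cref{k1} and \cref{k2} again, which give $N_0\le\tfrac{1+\epsilon}{2}k$ and $N_1\le\tfrac{1+\epsilon}{2(1-\epsilon)}k$, hence $N_0+N_1\le\tfrac{(1+\epsilon)(2-\epsilon)}{2(1-\epsilon)}k$. Plugging in, the target probability is at least $\frac{2\cdot 0.06\,k}{\frac{(1+\epsilon)(2-\epsilon)}{2(1-\epsilon)}k}$, which for $\epsilon=\tfrac{1}{100}$ evaluates to a number comfortably exceeding $0.1$, proving the lemma. There is no real conceptual obstacle; the only points needing care are (i) checking that both factors coming out of \cref{lem:hash2} are nonnegative before multiplying and summing — otherwise that lemma's lower bound is vacuous — which is exactly what \cref{k1} and \cref{k2} guarantee, and (ii) propagating the two-sided $(1\pm\epsilon)$ slack on $\frac{N_0}{k}$ and $\frac{N_1}{k}$ through the arithmetic so that the final constant clears $0.1$. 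I would also note that \cref{k1} and \cref{k2} force $N_0,N_1\ge1$ (in particular $X_{0,t},X_{1,t}$ are nonempty), so all the quantities above are well-defined.
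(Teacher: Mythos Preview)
Your proposal is correct and follows essentially the same approach as the paper's proof: express the target probability as $\frac{2}{|X_{0,t}|+|X_{1,t}|}$ times the expected number of good $y$'s, factor using independence of $h_0,h_1$, lower-bound each factor via \cref{lem:hash2}, and plug in the bounds \cref{k1}, \cref{k2} with $\epsilon=\tfrac{1}{100}$. The only cosmetic difference is that the paper replaces both $\frac{N_b}{k}-\frac{N_b^2}{k^2}$ by the common (slightly weaker) lower bound $\frac{1}{2(1+\epsilon)}-\frac{(1+\epsilon)^2}{4(1-\epsilon)^2}$ and uses the looser estimate $N_0+N_1\le \frac{(1+\epsilon)}{1-\epsilon}k$, arriving at $\frac{2(1-\epsilon)}{1+\epsilon}\big[\frac{1}{2(1+\epsilon)}-\frac{(1+\epsilon)^2}{4(1-\epsilon)^2}\big]^2\approx 0.108$, whereas your slightly sharper bookkeeping gives about $0.118$; both clear $0.1$.
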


\begin{proof}[Proof of \cref{lem:Sis1}]
\if0
From \cref{lem:hash},
for any $b\in\bit$ and $y\in[k]$,
\begin{eqnarray}
    \Pr_{h_b\gets\mathcal{H}}[|X_{b,t}\cap h_b^{-1}(y)|\ge1]&\ge&
    \frac{|X_{b,t}|}{k}-
\frac{|X_{b,t}|^2}{2 k^2}\\
&\ge&\frac{1}{2(1+\epsilon)}-\frac{(1+\epsilon)^2}{8(1-\epsilon)^2}.
\end{eqnarray}
\fi

\if0
Let us show it.
First,
\begin{eqnarray}
\sum_{j=1}^{2^\secp}j \Pr_{h_b\gets\mathcal{H}}[|S(b,r,t)\cap h_b^{-1}(y)|=j]    
&=&\frac{|S(b,r,t)|}{k}.
\end{eqnarray}
Second,
\begin{eqnarray}
&&\sum_{j=1}^{2^\secp}(j-1) \Pr_{h_b\gets\mathcal{H}}[|S(b,r,t)\cap h_b^{-1}(y)|=j]\\
&\le&
\sum_{j=2}^{2^\secp}{j\choose 2} \Pr_{h_b\gets\mathcal{H}}[|S(b,r,t)\cap h_b^{-1}(y)|=j]\\
&=&\frac{1}{k^2}{|S(b,r,t)|\choose2}\\
&=&\frac{|S(b,r,t)|(|S(b,r,t)|-1)}{2 k^2}\\
&\le&\frac{|S(b,r,t)|^2}{2 k^2}\\
&\le&\frac{1}{8}\left(\frac{1+\epsilon}{1-\epsilon}\right)^2.
\end{eqnarray}
By extracting both sides of the above two equations,
\begin{eqnarray}
\sum_{j=1}^{2^\secp}    
\Pr_{h_b\gets\mathcal{H}}[|S(b,r,t)\cap h_b^{-1}(y)|=j]    
&\ge&
\frac{|S(b,r,t)|}{k}-\frac{1}{8}\left(\frac{1+\epsilon}{1-\epsilon}\right)^2\\
&\ge&
\frac{1}{2(1+\epsilon)}-\frac{1}{8}\left(\frac{1+\epsilon}{1-\epsilon}\right)^2.
\end{eqnarray}
\fi

By using \cref{lem:hash2} with $S=X_{b,t}$, $h=h_b$, and $\cY=[k]$, we have, for any $b\in\bit$, $t$, and $y\in\cY$,
\begin{align}
   \Pr_{h_b\gets\mathcal{H}}[|X_{b,t}\cap h_b^{-1}(y)|=1]
   \ge
   \frac{|X_{b,t}|}{k}
   -\frac{|X_{b,t}|^2}{k^2}
   \ge
   \frac{1}{2(1+\epsilon)}-\frac{(1+\epsilon)^2}{4(1-\epsilon)^2}.
   \label{comp3}
\end{align}
Here, in the last inequality, we have used \cref{lem:kinji}.

In Step \ref{step:send_y} of \cref{protocol:PoQ1}, the probability that $y$ is obtained is
$
\frac{|X_{0,t}\cap h_0^{-1}(y)|+|X_{1,t}\cap h_1^{-1}(y)|}{|X_{0,t}|+|X_{1,t}|}.
$
Let us define 
\begin{equation}
G_{b,t,h_b}\coloneqq\{y\in[k]:|X_{b,t}\cap h_b^{-1}(y)|=1\}.    
\end{equation}
Then, the probability that we obtain $y$ such that
$|X_{0,t}\cap h_0^{-1}(y)|=|X_{1,t}\cap h_1^{-1}(y)|=1$
is
\begin{align}
&\underset{h_0,h_1\gets\mathcal{H}}{\mathbb{E}}\left[
\sum_{y\in G_{0,t,h_0}\cap G_{1,t,h_1}}
\frac{2}{|X_{0,t}|+|X_{1,t}|}\right]\\
&=
\underset{h_0,h_1\gets\mathcal{H}}{\mathbb{E}}\left[
\frac{2
|G_{0,t,h_0}\cap G_{1,t,h_1}|
}{|X_{0,t}|+|X_{1,t}|}\right]\\
&\ge
\underset{h_0,h_1\gets\mathcal{H}}{\mathbb{E}}\left[
\frac{2
|G_{0,t,h_0}\cap G_{1,t,h_1}|
}{\frac{(1+\epsilon)k}{1-\epsilon}}\right]\label{comp1}\\
&=
\frac{2(1-\epsilon)}{1+\epsilon}
\underset{h_0,h_1\gets\mathcal{H}}{\mathbb{E}}\left[
\frac{
|G_{0,t,h_0}\cap G_{1,t,h_1}|
}{k}\right]\\
&=
\frac{2(1-\epsilon)}{1+\epsilon}
\frac{1}{|\mathcal{H}|^2}
\sum_{h_0,h_1\in \mathcal{H}}
\frac{1}{k}\sum_{y\in [k]}
\delta_{y\in G_{0,t,h_0}}
\delta_{y\in G_{1,t,h_1}}\\
&=
\frac{2(1-\epsilon)}{1+\epsilon}
\frac{1}{k}\sum_{y\in [k]}
\Big(\frac{1}{|\mathcal{H}|}
\sum_{h_0\in \mathcal{H}}
\delta_{y\in G_{0,t,h_0}}\Big)
\Big(\frac{1}{|\mathcal{H}|}
\sum_{h_1\in \mathcal{H}}
\delta_{y\in G_{1,t,h_1}}\Big)\\
&=
\frac{2(1-\epsilon)}{1+\epsilon}
\frac{1}{k}\sum_{y\in [k]}
\Big(\Pr_{h_0\gets\mathcal{H}}[y\in G_{0,t,h_0}]\Big)
\Big(\Pr_{h_1\gets\mathcal{H}}[y\in G_{1,t,h_1}]\Big)\\
&\ge
\frac{2(1-\epsilon)}{1+\epsilon}
\frac{1}{k}\sum_{y\in[k]}
\Big[
 \frac{1}{2(1+\epsilon)}-\frac{(1+\epsilon)^2}{4(1-\epsilon)^2}
\Big]^2\label{comp2}\\
&=
\frac{2(1-\epsilon)}{1+\epsilon}
\Big[
 \frac{1}{2(1+\epsilon)}-\frac{(1+\epsilon)^2}{4(1-\epsilon)^2}
\Big]^2\\
&>0.1.
\end{align}
Here, $\delta_\alpha$ is 1 if the statement $\alpha$ is true, and is 0 if not.
In \cref{comp1}, we have used \cref{lem:kinji},
and in \cref{comp2}, we have used \cref{comp3}.
In the last inequality, we have taken $\epsilon=\frac{1}{100}$.

\if0
Define
\begin{equation}
G_{b,t,h_b}\coloneqq\{y\in[k]~|~
|\{x\in X_{b,t}|h_b(x)=y\}|=1
\}.
\end{equation}
We can show that, for each $b\in\bit$,
\begin{equation}
   \Pr_{h_b\gets\mathcal{H}}[y\in G_{b,t,h_b}]\ge0.2. 
\end{equation}

This is shown as follows. By an easy combinatorial argument,
for any $y\in[k]$,
\begin{eqnarray}
    {\mathbb E}_{h_b\gets\mathcal{H}}[|X_{b,t}\cap h_b^{-1}(y)|]
    &=&\frac{|X_{b,t}|}{k}\\
    &\le&\frac{1}{2}\frac{1+\epsilon}{1-\epsilon}.
\end{eqnarray}
Then,
\begin{eqnarray}
    \frac{1}{2}\frac{1+\epsilon}{1-\epsilon}
&\ge&
{\mathbb E}_{h_b\gets\mathcal{H}}[|X_{b,t}\cap h_b^{-1}(y)|]\\
&\ge&
\Pr_{h_b\gets\mathcal{H}}[y\in G_{b,t,h_b}]
+2\Pr_{h_b\gets\mathcal{H}}[|X_{b,t}\cap h_b^{-1}(y)|\ge2]\\
&=&
2\Pr_{h_b\gets\mathcal{H}}[|X_{b,t}\cap h_b^{-1}(y)|\ge1]
-\Pr_{h_b\gets\mathcal{H}}[y\in G_{b,t,h_b}]\\
&\ge&
\frac{1}{1+\epsilon}-\frac{1}{4}\left(\frac{1+\epsilon}{1-\epsilon}\right)^2
-\Pr_{h_b\gets\mathcal{H}}[y\in G_{b,t,h_b}].
\end{eqnarray}
Therefore
if we take $\epsilon=1/100$,
\begin{eqnarray}
\Pr_{h_b\gets\mathcal{H}}[y\in G_{b,t,h_b}]&\ge&
\frac{1}{1+\epsilon}-\frac{1}{4}\left(\frac{1+\epsilon}{1-\epsilon}\right)^2
-
    \frac{1}{2}\frac{1+\epsilon}{1-\epsilon}\\
    &=&\frac{100}{101}-\frac{101^2}{4\times 99^2}-\frac{101}{2\times 99}\\
   &\ge&0.2. 
\end{eqnarray}

If we measure the third register of \cref{post},
the probability that we obtain $y$ is
\begin{equation}
\frac{|\{x\in X_{0,t}|h_0(x)=y\}|+|\{x\in X_{1,t}|h_1(x)=y\}|}{|X_{0,t}|+|X_{1,t}|}.
\end{equation}
Therefore, the probability that we obtain $y$ such that
$|\{x\in X_{0,t}|h_0(x)=y\}|=|\{x\in X_{1,t}|h_1(x)=y\}|=1$
is
\begin{eqnarray}
&&\underset{h_0,h_1\gets\mathcal{H}}{\mathbb{E}}
\sum_{y\in G_{0,t,h_0}\cap G_{1,t,h_1}}
\frac{2}{|X_{0,t}|+|X_{1,t}|}\\
&=&
\underset{h_0,h_1\gets\mathcal{H}}{\mathbb{E}}
\frac{2
|G_{0,t,h_0}\cap G_{1,t,h_1}|
}{|X_{0,t}|+|X_{1,t}|}\\
&\ge&
\underset{h_0,h_1\gets\mathcal{H}}{\mathbb{E}}
\frac{2
|G_{0,t,h_0}\cap G_{1,t,h_1}|
}{\frac{(1+\epsilon)k}{1-\epsilon}}\\
&=&
\frac{1-\epsilon}{1+\epsilon}
\underset{h_0,h_1\gets\mathcal{H}}{\mathbb{E}}
\frac{
|G_{0,t,h_0}\cap G_{1,t,h_1}|
}{k}\\
&=&
\frac{1-\epsilon}{1+\epsilon}
\Pr_{h_0,h_1\gets\mathcal{H},y\gets [k]}
[y\in G_{0,t,h_0}\cap G_{1,t,h_1}]\\
&=&
\frac{1-\epsilon}{1+\epsilon}
\frac{1}{k}\sum_{y\in[k]}
\Pr_{h_0,h_1\gets\mathcal{H}}
[y\in G_{0,t,h_0}\cap G_{1,t,h_1}]\\
&=&
\frac{1-\epsilon}{1+\epsilon}
\frac{1}{k}\sum_{y\in[k]}
\Pr_{h_0\gets\mathcal{H}}[y\in G_{0,t,h_0}]
\Pr_{h_1\gets\mathcal{H}}[y\in G_{1,t,h_1}]\\
&\ge&
\frac{1-\epsilon}{1+\epsilon}
\frac{1}{k}\sum_{y\in[k]}
(0.2)^2\\
&=&
\frac{1-\epsilon}{1+\epsilon}
(0.2)^2\\
&\ge&0.04.
\end{eqnarray}
\fi

\end{proof}

\subsection{Soundness}
\label{sec:soundness}


In this subsection, we show $\frac{7}{8}$-strong-soundness.
\begin{proof}[Proof of \cref{thm:soundness}]
Our goal is to prove that for any PPT malicious prover $\cP^*$ and any polynomial $p$, 
\begin{equation}\label{eq:goal}
\Pr_{r\gets \mathcal{R}}\left[\Pr[\top\gets\cV_2(I):I\gets\langle \cP^*_r(1^\secp),\cV_1(1^\secp)\rangle]\ge  \frac{7}{8}+\frac{1}{p(\secp)}\right]\le \frac{1}{p(\secp)}
\end{equation}
for sufficiently large $\secp$ 
where $\mathcal{R}$ is the randomness space for $\cP^*$ and $\cP^*_r$ is $\cP^*$ with the fixed randomness $r$. 

Toward contradiction, suppose that there are a 
PPT prover $\cP^*$ and a polynomial $p$ such that 
\begin{align}\label{eq:V_attack_success_average}
\Pr_{r\gets \mathcal{R}}\left[\Pr[\top\gets\cV_2(I):I\gets\langle \cP^*_r(1^\secp),\cV_1(1^\secp)\rangle]\ge  \frac{7}{8}+\frac{1}{p(\secp)}\right]> \frac{1}{p(\secp)}
\end{align}
for infinitely many $\secp$. 
Then we prove the following lemma.
\begin{lemma}\label{lem:reduction_B}
There is an oracle-aided PPT algorithm $\B$ that breaks the computational binding property of the commitment scheme if it is given black-box access to $\cP^*_r$ such that
\begin{align} \label{eq:V_attack_success}
\Pr[\top\gets\cV_2(I):I\gets\langle \cP^*_r(1^\secp),\cV_1(1^\secp)\rangle]\ge  \frac{7}{8}+\frac{1}{p(\secp)}
\end{align}
for infinitely many $\secp$. 
\end{lemma}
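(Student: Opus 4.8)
The plan is to follow the soundness analysis of the proof of quantumness of \cite{NatPhys:KMCVY22}, with collision resistance of a claw-free function replaced by computational binding of the commitment. Fix a randomness $r$ for which \cref{eq:V_attack_success} holds, and consider the honest first-phase verifier $\cV_1$ interacting with $\cP^*_r$. Let $\good$ denote the event (over $\cV_1$'s coins and $\cP^*_r$'s measurement outcomes) that at the end of \cref{step:send_y} of \cref{protocol:PoQ1} we have $|X_{0,t}\cap h_0^{-1}(y)|=|X_{1,t}\cap h_1^{-1}(y)|=1$; conditioned on $\good$, let $x_0,x_1$ be the two determined elements, so that $x_b\in X_{b,t}$, i.e., $(b,x_b)$ is a valid opening of the commit-phase transcript $t$ to the bit $b$. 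The first step is to localize the adversary's advantage: by \cref{is_y_good} of \cref{protocol:PoQ2}, when $\good$ fails $\cV_2$ outputs $\top$ with probability exactly $7/8$, independently of the whole transcript, so
\[
\Pr[\top\gets\cV_2(I)] \;=\; \Pr[\good]\cdot\Pr[\top\gets\cV_2(I)\mid\good]\;+\;(1-\Pr[\good])\cdot\tfrac{7}{8}.
\]
Combining with \cref{eq:V_attack_success} gives $\Pr[\good]\ge 1/p(\secp)$ and $\Pr[\top\gets\cV_2(I)\mid\good]\ge \tfrac{7}{8}+1/p(\secp)$. Moreover, conditioned on $\good$, the residual interaction in \cref{step:send_v1_and_xi}--\cref{item:v2} together with the checks in \cref{protocol:PoQ2} is exactly (the trivially modified version of) the second phase of \cite{NatPhys:KMCVY22} run on the pair $(x_0,x_1)$; this second phase is public coin, so $\cV_1$ drives it knowing only $(v_1,\xi,v_2)$, and ``$7/8$'' is precisely its soundness parameter.

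Next I would build a claw extractor. Since $r$ is fixed, $\cP^*_r$ is a \emph{classical deterministic} machine and may be rewound freely. Consider the oracle-aided PPT algorithm $\cA^{\cP^*_r}$ that simulates $\cV_1$ honestly through \cref{step:coherent_commit}--\cref{step:send_y} (playing the receiver $\cR$ in the commit phase, choosing $j,h_0,h_1$, and recording $y$), then rewinds $\cP^*_r$ to the point just before \cref{step:send_v1_and_xi} and re-runs it polynomially many times with fresh choices of $v_1\in\bit$, $\xi\in\bit^\ell$, $v_2\in\bit$: from a run with $v_1=0$ it obtains a candidate $(b',x')$, and from runs with $v_1=1$ it collects the resulting $(\xi,d,v_2,\eta)$ tuples. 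Exactly as in the soundness argument of \cite{NatPhys:KMCVY22}, conditioned on $\good$ and on $\cP^*_r$ passing the relevant branches (which happens with probability $1/\poly(\secp)$, since by the previous paragraph its conditional success probability exceeds $7/8$ by $1/p(\secp)$ and hence it passes each of the two main challenges with probability bounded well away from the value achievable without knowledge of both preimages), this data determines both $x_0$ and $x_1$, so $\cA$ outputs the pair $(x_0,x_1)$ with probability $1/\poly(\secp)$. Note that $\cA$ itself cannot decide whether $\good$ occurred ($\good$ is an $\NP$-type predicate about $X_{b,t}$, unavailable to a plain PPT machine), but it need not: it simply outputs its candidate pair, which is a genuine double opening with probability $1/\poly(\secp)$.

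Finally I would turn $\cA$ into the binding adversary $\B$. In the computational-binding game $\B$ plays a cheating sender against an external honest receiver $\cR$: it relays the commit-phase messages between $\cP^*_r$ (which plays the sender $\cS$) and $\cR$, recording the transcript $t$, then internally simulates the rest of $\cV_1$ and runs $\cA$'s rewinding extraction on the post-commit part of $\cP^*_r$ --- which never touches $\cR$, so no interaction with $\cR$ is rewound. When extraction succeeds, $\B$ holds $x_0\in X_{0,t}$ and $x_1\in X_{1,t}$ and outputs the openings $(0,x_0)$ and $(1,x_1)$; since $x_b\in X_{b,t}$ is exactly the receiver's acceptance predicate for an opening of $t$ to $b$, $\B$ wins whenever $\cA$ succeeds, i.e., with probability $\ge \Pr[\good]\cdot\Pr[\cA\text{ succeeds}\mid\good]=1/\poly(\secp)$, for infinitely many $\secp$. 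This is the desired $\B$.

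The step I expect to be the main obstacle is the faithful adaptation of the extraction argument of \cite{NatPhys:KMCVY22}: there the verifier recovers $(x_0,x_1)$ from a trapdoor and the hard task is finding a claw of a $2$-to-$1$ function, whereas here $(x_0,x_1)$ is only \emph{information-theoretically} pinned down by $\good$, the success predicate is evaluated by the inefficient $\cV_2$ rather than by $\cV_1$, and the hard task is double-opening the commitment. One must check (i) that conditioning on $\good$ really makes $(x_0,x_1)$ well defined, so that ``extract a claw'' is meaningful; (ii) that the public-coin structure lets $\B$ (and $\cV_1$) run the second phase without a trapdoor; and (iii) that the extraction probability degrades only polynomially even though $\good$ has only inverse-polynomial probability and the advantage is only $1/p(\secp)$. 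The use of \emph{strong} soundness --- fixing $r$ at the outset --- is exactly what lets us treat $\cP^*_r$ as a deterministic, rewindable classical machine; the reduction is otherwise identical to the plain-soundness case, and the bookkeeping of ``infinitely many $\secp$'' through the conditioning on $\good$ is routine.
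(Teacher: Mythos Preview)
Your plan is essentially the paper's: localize the advantage to tuples $(t,h_0,h_1,y)$ where the pair $(x_0,x_1)$ is well-defined, rewind the deterministic $\cP^*_r$ on the $v_1=1$ branch with both values of $v_2$ to build a predictor for $\xi\cdot(x_0\oplus x_1)$, apply Goldreich--Levin to recover $x_0\oplus x_1$, combine with the $v_1=0$ branch to get both preimages, and hand them to the external receiver as a double opening.

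The one place where your write-up is genuinely looser than the paper is the definition of $\good$. You take $\good$ to be the preimage condition $|X_{0,t}\cap h_0^{-1}(y)|=|X_{1,t}\cap h_1^{-1}(y)|=1$ and correctly derive the \emph{average} bound $\Pr[\top\mid\good]\ge 7/8+1/p$. But the extraction needs a \emph{pointwise} guarantee: Goldreich--Levin is applied to the predictor $\xi\mapsto \eta_{1,0}\oplus\eta_{1,1}\oplus 1$ for a \emph{fixed} tuple $(t,h_0,h_1,y)$, and for that you need $\Pr[\top\mid(t,h_0,h_1,y)]\ge 7/8+1/\poly$ so that $p_0,p_1\ge 3/4+1/\poly$ and the predictor beats $1/2$ by $1/\poly$ over random $\xi$. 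Your average bound does not give this directly, and your sentence ``it passes each of the two main challenges with probability bounded well away from\ldots'' is an average statement, not a per-tuple one. The paper sidesteps this by \emph{defining} $\good$ as the set of tuples with $\Pr[\top\mid(t,h_0,h_1,y)]\ge 7/8+1/(2p)$; this automatically forces the preimage condition (otherwise the conditional acceptance is exactly $7/8$), makes the per-tuple bounds immediate, and a one-line averaging shows $\Pr[(t,h_0,h_1,y)\in\good]\ge 1/(2p)$. Your version is easily patched by the same Markov step, but as written the passage from ``$\Pr[\top\mid\good]\ge 7/8+1/p$'' to ``the rewinding data determines $(x_0,x_1)$'' is the missing link.
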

By combining \cref{eq:V_attack_success_average,lem:reduction_B},  
$\B^{\cP^*_r}$ for random $r\gets \mathcal{R}$ breaks the computational binding property, which is a contradiction.
Thus, we only have to prove \cref{lem:reduction_B} for completing the proof of \cref{thm:soundness}.

\begin{proof}[Proof of \cref{lem:reduction_B}.]
The proof is very similar to that of \cite{ITCS:MorYam23}, which in turn is based on \cite{NatPhys:KMCVY22}. 
Nonetheless, there is a difference that we have to deal with the case where $|X_{0,t}\cap h_0^{-1}(y)|=|X_{1,t}\cap h_1^{-1}(y)|=1$ is not satisfied. Thus, we provide the full proof even though we sometimes repeat the same arguments as those in \cite{ITCS:MorYam23} where some sentences are taken verbatim from there with notational adaptation.

We fix $r$ and an infinite set $\Gamma\subseteq \mathbb{N}$
such that \cref{eq:V_attack_success} holds for all $\secp \in \Gamma$.  
In the following, we simply write 
$\cP^*$ to mean $\cP^*_r$ and 
$\Pr[\top\gets\cV_2]$ to mean $\Pr[\top\gets\cV_2(I):I\gets\langle \cP^*_r(1^\secp),\cV_1(1^\secp)\rangle]$. 
We also often omit to say ``for all $\secp\in \Gamma$'', but whenever we refer to some inequality where $\secp$ appears, we always mean it holds for all $\secp \in \Gamma$.

Define  
\begin{align}\label{eq:def_good}
\good \seteq \left\{(t,h_0,h_1,y):\Pr[\top\gets\cV_2~|~(t,h_0,h_1,y)]\geq  \frac{7}{8}+\frac{1}{2p(\secp)}\right\},
\end{align}
where $\Pr[\top\gets\cV_2~|~(t,h_0,h_1,y)]$ denotes $\cV_2$'s acceptance probability conditioned on a fixed $(t,h_0,h_1,y)$,
and define
$
p_{\good}:=\Pr[(t,h_0,h_1,y)\in \good].
$
Note that we have $|X_{0,t}\cap h_0^{-1}(y)|=|X_{1,t}\cap h_1^{-1}(y)|=1$ for all $(t,h_0,h_1,y)\in \good$ since otherwise $\Pr[\top\gets\cV_2~|~(t,h_0,h_1,y)]=  \frac{7}{8}$. 
Then we have
\begin{align}
\Pr[\top \gets \cV_2]&=\Pr[\top \gets \cV_2 \land (t,h_0,h_1,y)\in \good]+\Pr[\top \gets \cV_2 \land (t,h_0,h_1,y)\notin \good]\\
&\le p_\good +(1-p_\good) \cdot \left(\frac{7}{8}+\frac{1}{2p(\secp)}\right). \label{eq:V_acceptance_prob_rewrite}
\end{align}
By \cref{eq:V_attack_success,eq:V_acceptance_prob_rewrite},
we have
\begin{align}
p_\good \geq \frac{1}{2p(\secp)}. \label{eq:good_prob_bound}
\end{align}
We fix $(t,h_0,h_1,y)\in \good$ until \cref{eq:succ_E}.

For $b\in \bit$, let $x_b\in \bit^\ell$ be the unique element in $X_{b,t}\cap h_b^{-1}(y)$. 
 Note that it is well-defined since we assume $(t,h_0,h_1,y)\in \good$, which implies $|X_{0,t}\cap h_0^{-1}(y)|=|X_{1,t}\cap h_1^{-1}(y)|=1$.

 We define the following probabilities all of which are conditioned on the fixed value of $(t,h_0,h_1,y)$:
\begin{description}
\item[$p_0$:] The probability that $\cV_2$ returns $\top$ conditioned on $v_1=0$.
\item[$p_1$:] The probability that $\cV_2$ returns $\top$ conditioned on $v_1=1$.
\item[$p_{1,0}$:] The probability that $\cV_2$ returns $\top$ conditioned on $v_1=1$ and $v_2=0$.
\item[$p_{1,1}$:] The probability that $\cV_2$ returns $\top$ conditioned on $v_1=1$ and $v_2=1$.
\end{description}
Clearly, we have 
\begin{align} \label{eq:p0p1}
\Pr[\top\gets\cV_2|(t,h_0,h_1,y)]= \frac{p_0+p_1}{2}
\end{align}
and 
\begin{align}\label{eq:p10p11}
p_{1}= \frac{p_{1,0}+p_{1,1}}{2}.
\end{align}
By 
$(t,h_0,h_1,y)\in \good$, 
\cref{eq:def_good,eq:p0p1}, and a trivial inequality $p_0,p_1\le 1$, we have 
\begin{align}\label{eq:p0_lowerbound}
    p_0 \geq \frac{3}{4}+\frac{1}{p(\secp)}
\end{align}  
and 
\begin{align}\label{eq:p1_lowerbound}
    p_1 \geq \frac{3}{4}+\frac{1}{p(\secp)}.
\end{align}

Let $\A$ be a classical deterministic polynomial-time algorithm that works as follows:
\begin{enumerate}
    \item $\A$ takes $(t,h_0,h_1,y)$ and $\xi\in \bit^\ell$ as input.
     \item $\A$ runs Step \ref{item:v1} of $\cP^*$ where 
     the transcript of Step \ref{step:coherent_commit}-\ref{step:send_y} is set to be $(t,h_0,h_1,y)$ and the transcript of Step \ref{step:send_v1_and_xi} is set to be $(v_1=1,\xi)$.  Let $d\in \bit^\ell$ be the message sent from $\cP^*$ to $\cV_1$. Note that $\cP^*$'s message is determined by the previous transcript since $\cP^*$ is deterministic. (Recall that $\cP^*$ is a shorthand of $\cP^*_r$ for a fixed randomness $r$.)   
    \item $\A$ runs Step~\ref{item:v2} of $\cP^*$ where the transcript of Step \ref{step:coherent_commit}-\ref{step:send_y} is set to be $(t,h_0,h_1,y)$, the transcript of Step \ref{step:send_v1_and_xi} is set to be $(v_1=1,\xi)$, the transcript of Step \ref{item:v1} is set to be $d$, and the transcript of Step \ref{step:send_v2} is set to be $v_2=0$. Let $\eta_{1,0}$ be the message sent from $\cP^*$ to $\cV_1$. 
    \item $\A$ runs Step~\ref{item:v2} of $\cP^*$ where the transcript of Step \ref{step:coherent_commit}-\ref{step:send_y} is set to be $(t,h_0,h_1,y)$, the transcript of Step \ref{step:send_v1_and_xi} is set to be $(v_1=1,\xi)$, the transcript of Step \ref{item:v1} is set to be $d$, and the transcript of Step \ref{step:send_v2} is set to be $v_2=1$. Let $\eta_{1,1}$ be the message sent from $\cP^*$ to $\cV_1$.  
    \item $\cA$ outputs  $\eta_{1,0}\oplus \eta_{1,1}\oplus 1$. 
\end{enumerate} 
By the union bound, the probability that both $(d,\eta_{1,0})$ and $(d,\eta_{1,1})$ pass the verification is at least 
\begin{align}
    1-(1-p_{1,0})-(1-p_{1,1})=-1+2p_1\geq \frac{1}{2}+\frac{1}{p(\secp)},
\end{align}
where the equation follows from \cref{eq:p10p11} and
the inequality follows from \cref{eq:p1_lowerbound}. 
When this occurs, for each $v_2\in \bit$, we have
\begin{align}
    (\xi\cdot x_0 \neq \xi\cdot x_1) \wedge
    (\eta_{1,v_2}=\xi\cdot x_0),~~
    or~~
    (\xi\cdot x_0= \xi\cdot x_1) \wedge
    (\eta_{1,v_2}=v_2\oplus d\cdot(x_0\oplus x_1)).
    \end{align}
(Remark that the same $d$ is used for both cases of $v_2=0$ and $v_2=1$.) 
In particular, 
if $\xi\cdot x_0 \neq \xi\cdot x_1$ then $\eta_{1,0}=\eta_{1,1}$, and 
if $\xi\cdot x_0 = \xi\cdot x_1$ 
then $\eta_{1,0}= \eta_{1,1}\oplus 1$. 
This implies that
$
    \eta_{1,0}\oplus \eta_{1,1}\oplus 1=\xi\cdot(x_0\oplus x_1).
$
Therefore, we have 
\begin{align}
    \Pr_{\xi\gets \bit^\ell}[\cA((t,h_0,h_1,y),\xi)=\xi\cdot(x_0\oplus x_1)]\geq \frac{1}{2}+\frac{1}{p(\secp)}.
\end{align}
Thus, by the Goldreich-Levin theorem~\cite{GL89}, there is a PPT algorithm $\mathcal{E}$ such that 
\begin{align} \label{eq:succ_E}
    \Pr[\mathcal{E}(t,h_0,h_1,y)=x_0\oplus x_1]\geq\frac{1}{p'(\secp)}
\end{align}
for some polynomial $p'$. (Remark that what we have shown so far is that the above holds for any fixed $(t,h_0,h_1,y)\in \good$.)

Then, we construct a PPT algorithm $\mathcal{B}$ that breaks the computational binding property of the classical bit commitment scheme as follows:
\begin{enumerate}
    \item $\mathcal{B}$ interacts with the receiver $\cR$ in the same way as $\cP^*$ does in Step \ref{step:coherent_commit} of \cref{protocol:PoQ1},
   and let $t$ be the transcript obtained from the execution. 
    \item $\mathcal{B}$ chooses hash functions $h_0$ and $h_1$ as in Step \ref{step:choose_hash} of \cref{protocol:PoQ1}, and send them to
$\cP^*$.
\item $\cP^*$ returns $y$ as a message of Step \ref{step:send_y} in \cref{protocol:PoQ1}. 
At this point, $(x_0,x_1)$ is implicitly determined if $(t,h_0,h_1,y)\in \good$. 
    \item $\mathcal{B}$ 
    sends $v_1=0$ and $\xi \gets \bit^\ell$ to $\cP^*$ as a message of Step \ref{step:send_v1_and_xi} in \cref{protocol:PoQ1}.
  \item   $\cP^*$ returns $(b',x')$ as a message of the first case of Step \ref{step:v_1_0_or_1} in \cref{protocol:PoQ1}. 
    \item $\mathcal{B}$ runs $\mathcal{E}(t,h_0,h_1,y)$ and let $z$ be the output. 
    \item $\mathcal{B}$ 
    sets $x'_0\seteq x'$ and $x'_1\seteq x'\oplus z$ if $b'=0$, and $x'_0\seteq x'\oplus z$ and $x'_1\seteq x'$ otherwise. For each $b\in \bit$, $\mathcal{B}$ generate a decommitment $\mathsf{dcom}_b$ corresponding to the sender's randomness $x'_b$ and transcript $t$.   
     $\mathcal{B}$ outputs $(0,\mathsf{decom}_0)$ and $(1,\mathsf{decom}_1)$. 
    \end{enumerate}
Recall that we have shown that for any $(t,h_0,h_1,y)\in \good$, \cref{eq:p0_lowerbound,eq:succ_E} hold. Thus, for any $(t,h_0,h_1,y)\in \good$, 
we have 
$
    \Pr[x'=x_{b'}|(t,h_0,h_1,y)]\geq  \frac{3}{4}+\frac{1}{p(\secp)}
$
and 
$
    \Pr[z=x_0\oplus x_1 |(t,h_0,h_1,y)]\geq \frac{1}{p'(\secp)}.
$
Moreover, the two events $x'= x_{b'}$ and $z=x_0\oplus x_1$ are independent once we fix $(t,h_0,h_1,y)$. 
Therefore, for any $(t,h_0,h_1,y)\in \good$, we have 
\begin{align}
    \Pr[
    x'=x_{b'} \wedge
    z=x_0\oplus x_1 |(t,h_0,h_1,y)]\geq \frac{3}{4p'(\secp)}.
\end{align}
Combined with \cref{eq:good_prob_bound}, we have 
\begin{align}\label{eq:C_breaks_binding}
    \Pr[(t,h_0,h_1,y)\in \good\land x'_0=x_0\land x'_1=x_1]\geq \frac{3}{8p(\secp)p'(\secp)}.
\end{align}
By the definition of $x_b$, 
we have $x_b\in X_{b,t}$ for $b\in \bit$. 
Thus, by the perfect correctness of the commitment scheme,  
$\mathsf{decom}_b$ derived from 
$(x_b,t)$ is a valid decommitment.  
Thus,  \cref{eq:C_breaks_binding} implies that $\mathcal{B}$ outputs valid decommitments for both messages $0$ and $1$ with probability at least $\frac{3}{8p(\secp)p'(\secp)}$ (for all $\secp\in \Gamma$).  
This completes the proof of \cref{lem:reduction_B}.
\end{proof}
This completes the proof of \cref{thm:soundness}. 
\end{proof}

\ifnum\llncs=0
\ifnum\llncs=0
\subsection{Computational Power of the Inefficient Verifier}
\label{sec:powerofV2}
\else
\section{Computational Power of the Inefficient Verifier}
\label{sec:powerofV2}
\fi

In this subsection, we show that $\cV_2$ can be a classical deterministic polynomial-time algorithm querying an $\NP$ oracle.
(The inefficient verifier in our construction actually uses randomness when $|X_{0,t}\cap h_0^{-1}(y)|=|X_{1,t}\cap h_1^{-1}(y)|=1$ is not satisfied, 
but the inefficient verifier can be deterministic if we let the first phase verifier append the randomness to the transcript.)

The inefficient parts of $\cV_2$ are
verifying $|X_{0,t}\cap h_0^{-1}(y)|=|X_{1,t}\cap h_1^{-1}(y)|=1$ and
finding $(x_0,x_1)$, where $x_b$ is the single element of $X_{b,t}\cap h_b^{-1}(y)$.
We show that these two tasks can be done in classical deterministic polynomial-time
querying an $\mathbf{NP}$ oracle.
Note that the membership of $x\in X_{b,t}\cap h_b^{-1}(y)$ can be decided in a classical deterministic polynomial time.
Therefore, the decision problem
\begin{itemize}
    \item[Yes:]
    There exists $x\in\bit^\ell$ such that $x\in X_{b,t}\cap h_b^{-1}(y)$.
    \item[No:] 
    For any $x\in\bit^\ell$, $x\notin X_{b,t}\cap h_b^{-1}(y)$.
\end{itemize}
is in $\mathbf{NP}$.

First, $\cV_2$ queries the above decision problem to the $\mathbf{NP}$ oracle for each $b\in\bit$.
If the answer is no for a $b\in\bit$, it means that $|X_{b,t}\cap h_b^{-1}(y)|=0$. In that case,
$\cV_2$ concludes that $|X_{0,t}\cap h_0^{-1}(y)|=|X_{1,t}\cap h_1^{-1}(y)|=1$ is not satisfied.
If the answer is yes for both $b\in\bit$,
$|X_{0,t}\cap h_0^{-1}(y)|\ge 1$ and
$|X_{1,t}\cap h_1^{-1}(y)|\ge1$ are guaranteed.

Then, $\cV_2$ finds an element $x_b\in X_{b,t}\cap h_b^{-1}(y)$ for each $b\in\bit$.
Finding such an element is just an $\mathbf{NP}$ search problem, which can be solved in
classical deterministic polynomial-time by querying the $\mathbf{NP}$ oracle polynomially many times.

$\cV_2$ finally queries the following decision problem to the $\mathbf{NP}$ oracle for each $b\in\bit$.
\begin{itemize}
    \item[Yes:] 
    There exists $x\in\bit^\ell$ such that $x\in (X_{b,t}\cap h_b^{-1}(y))\setminus\{x_b\}$.
    \item[No:] 
    For any $x\in\bit^\ell$, $x\notin (X_{b,t}\cap h_b^{-1}(y))\setminus\{x_b\}$.
\end{itemize}
If the answer is yes for a $b\in\bit$, it means that $|X_{b,t}\cap h_b^{-1}(y)|\ge2$. 
In that case, $\cV_2$ concludes that $|X_{0,t}\cap h_0^{-1}(y)|=|X_{1,t}\cap h_1^{-1}(y)|=1$ is not satisfied.
If the answer is no for both $b\in\bit$,
$\cV_2$ concludes that $|X_{0,t}\cap h_0^{-1}(y)|=|X_{1,t}\cap h_1^{-1}(y)|=1$ is satisfied.
\else
\subsection{Computational Power of the Inefficient Verifier}
We observe that $\cV_2$ can be a classical deterministic polynomial-time algorithm querying an $\NP$ oracle. 
\ifnum\cameraready=1
See the full version for details. 
\else
See \Cref{sec:powerofV2} for details. 
\fi
\fi

\section{Implausibility of Two-Round AI-IV-PoQ}
In this section, we prove \cref{thm:impossible_reduction,thm:oracle_separation_two-round_OWF}.

\subsection{Impossibility of Classical Reduction}
In this subsection, we formally state \cref{thm:impossible_reduction} and prove it. 

First, we define  game-based assumptions. The definition is identical to \emph{falsifiable assumptions} in \cite{STOC:GenWic11} except for the important difference that the challenger can be unbounded-time. 
\begin{definition}[Game-based assumptions]\label{def:game_based_assumption}
A game-based assumption consists of a possibly unbounded-time interactive machine $\mathcal{C}$ (the challenger) and a constant $t\in[0,1)$. On the security parameter $1^\secp$, the challenger $\mathcal{C}(1^\secp)$ interacts with a classical or quantum machine $\A$ (the adversary) over a classical channel and finally outputs a bit $b$. We denote this execution by $b\gets \langle\A(1^\secp),\mathcal{C}(1^\secp) \rangle$.

We say that a game-based assumption $(\mathcal{C},t)$ holds against classical (resp. quantum) adversaries if for any PPT (resp. QPT) adversary $\A$, $|\Pr[1\gets \langle\A(1^\secp),\mathcal{C}(1^\secp) \rangle]-t|\le \negl(\secp).$
\end{definition}
\begin{remark}[Examples]
As explained in \cite{STOC:GenWic11}, the above definition captures a very wide class of assumptions used in cryptography even if we restrict the challenger to be PPT. They include (but not limited to) general assumptions such as security of OWFs, public key encryption, digital signatures, oblivious transfers etc. as well as concrete assumptions such as the hardness of factoring, discrete-logarithm, LWE etc. 
In addition, since we allow the challenger to be unbounded-time, it also captures some non-falsifiable assumptions such as hardness of   
indistinguishability obfuscation \cite{JACM:BGIRSVY12,SICOMP:GGHRSW16} or succinct arguments \cite{SICOMP:Micali00} etc. 
Examples of assumptions that are not captured by the above definition include so called knowledge assumptions \cite{C:Damgaard91,ICALP:CanDak08,TCC:CanDak09,ITCS:BCCT12} and zero-knowledge proofs with non-black-box zero-knowledge \cite{C:HadTan98,FOCS:Barak01}. 
\end{remark}

We clarify meanings of several terms used in the statement of our theorem. 
\begin{definition}[Classical oracle-access to a cheating prover]
Let $\Pi=(\cP,\cV=(\cV_1,\cV_2))$ be a two-round AI-IV-PoQ. 
We say that a (possibly unbounded-time stateless) randomized classical machine $\cP^*$ breaks $s$-soundness of $\Pi$ if there is a polynomial $\poly$ such that 
$\Pr[\top\gets\cV_2(I):I\gets\langle \cP^*(\sigma),\cV_1(\sigma)\rangle]\ge s(|\sigma|)+1/\poly(|\sigma|)$ for all but finitely many $\sigma\in \Sigma$. 
We say that an oracle-aided classical machine $\mathcal{R}$ is given oracle access to $\cP^*$ if it can query an auxiliary input $\sigma$ and the first-round message $m_1$ of $\Pi$ and the oracle returns the second-round message $m_2$ generated by $\cP^*$ with a fresh randomness $r$ in each invocation.  
\end{definition}
\begin{remark}
Since we consider two-round protocols, we can assume that $\cP^*$ is stateless without loss of generality. 
\end{remark}

Then we state the formal version of \cref{thm:impossible_reduction}. 
\begin{theorem}\label{thm:impossible_reduction_formal}
Let $\Pi$ be a two-round AI-IV-PoQ that satisfies $c$-completeness and $s$-soundness where $c(\secp)-s(\secp)\ge 1/\poly(\secp)$ 
and let $(\mathcal{C},t)$ be a game-based assumption. 
Suppose that there is an oracle-aided PPT machine $\mathcal{R}$ (the reduction algorithm) such that for any (possibly unbounded-time stateless) randomized classical machine $\cP^*$ that breaks $s$-soundness of $\Pi$, 
$|\Pr[1\gets \langle\mathcal{R}^{\cP^*}(1^\secp),\mathcal{C}(1^\secp) \rangle]-t|$ is non-negligible. 
Then the game-based assumption $(\mathcal{C},t)$ does not hold against quantum adversaries.  
\end{theorem}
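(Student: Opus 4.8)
The plan is to turn the reduction $\mathcal{R}$ into a QPT adversary against the game-based assumption $(\mathcal{C},t)$ by feeding it the honest quantum prover $\cP$ as the ``cheating prover'' oracle. First I would observe that, because $\Pi$ has a two-round structure, the honest prover $\cP$ is itself a (stateless) randomized machine that, on auxiliary input $\sigma$ and a first-round message $m_1$, produces a second-round message $m_2$; the only difference from a generic cheating prover is that $\cP$ is quantum rather than classical. I would define $\cP^{\mathsf{hon}}$ to be exactly this behavior of $\cP$, and note that by $c$-completeness together with $c-s\ge 1/\poly$, the machine $\cP^{\mathsf{hon}}$ satisfies $\Pr[\top\gets\cV_2(I):I\gets\langle\cP^{\mathsf{hon}}(\sigma),\cV_1(\sigma)\rangle]\ge c(|\sigma|)-\negl(|\sigma|)\ge s(|\sigma|)+1/\poly(|\sigma|)$ for all sufficiently large $|\sigma|$, i.e.\ $\cP^{\mathsf{hon}}$ ``breaks $s$-soundness'' in the sense of the relevant definition (possibly after restricting to the infinite set $\Sigma$ and noting that ``all but finitely many'' is what the definition asks for; if completeness holds for all $\sigma\in\Sigma$ this is immediate).

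Next I would construct the quantum adversary $\mathcal{A}$ against $(\mathcal{C},t)$: on input $1^\secp$, $\mathcal{A}$ runs $\mathcal{R}(1^\secp)$, forwarding $\mathcal{R}$'s messages to and from the external challenger $\mathcal{C}$, and whenever $\mathcal{R}$ makes an oracle query $(\sigma,m_1)$ to its cheating-prover oracle, $\mathcal{A}$ instead runs a fresh execution of the honest quantum prover $\cP$ on $(\sigma,m_1)$ with fresh randomness/ancillas, measures to obtain $m_2$, and returns $m_2$ to $\mathcal{R}$. Since $\mathcal{R}$ is PPT and makes polynomially many queries, and each query costs one run of the QPT prover $\cP$, the adversary $\mathcal{A}$ is QPT. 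The key point is that the oracle $\mathcal{A}$ simulates is \emph{distributed identically} to the cheating-prover oracle instantiated with $\cP^{\mathsf{hon}}$: each query is answered by an independent sample of $\cP$'s output on the queried $(\sigma,m_1)$, exactly as the oracle-access definition prescribes (a fresh randomness per invocation). Therefore the view of $\mathcal{R}$ inside $\mathcal{A}$, and hence the final bit output by $\mathcal{C}$, is identical to the distribution of $\langle\mathcal{R}^{\cP^{\mathsf{hon}}}(1^\secp),\mathcal{C}(1^\secp)\rangle$. By the hypothesis on $\mathcal{R}$ (which is required to work for \emph{any} cheating prover breaking $s$-soundness, in particular for $\cP^{\mathsf{hon}}$), $|\Pr[1\gets\langle\mathcal{R}^{\cP^{\mathsf{hon}}}(1^\secp),\mathcal{C}(1^\secp)\rangle]-t|$ is non-negligible, so $|\Pr[1\gets\langle\mathcal{A}(1^\secp),\mathcal{C}(1^\secp)\rangle]-t|$ is non-negligible, which exactly says that $(\mathcal{C},t)$ does not hold against quantum adversaries.

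The step I expect to require the most care — and which is precisely the crux flagged in the paper's overview and in Remark~\ref{rem:why_not_quantum} — is the claim that substituting $\cP^{\mathsf{hon}}$ for a classical cheating prover is legitimate for a \emph{classical} reduction $\mathcal{R}$. The subtlety is that a coherent execution of $\cP$ could entangle its message register with its internal register, so one must be careful that $\mathcal{A}$ does the ``right'' thing: because $\mathcal{R}$ is classical, it only ever feeds $\mathcal{A}$ a classical message $m_1$ and expects a classical reply, so $\mathcal{A}$ may (indeed must) \emph{measure} the prover's output register to obtain a classical $m_2$ and discard the internal register, and this measured-output distribution is exactly what the soundness-breaking definition and the completeness guarantee refer to. Thus for classical $\mathcal{R}$ the entanglement issue is harmless — it is only when one tries to extend the argument to quantum reductions that it bites, which is why this theorem is restricted to classical reductions. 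I would also take care with two bookkeeping points: (i) matching the ``all but finitely many $\sigma$'' quantifier in the soundness-breaking definition against the ``for all $\sigma\in\Sigma$'' completeness guarantee, and (ii) ensuring the reduction's advantage being non-negligible is preserved verbatim since the simulation is perfect (no hybrid or statistical-distance loss is incurred).
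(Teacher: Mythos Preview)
Your approach is essentially the same as the paper's, and the overall structure is correct, but there is one formal gap in how you invoke the hypothesis on $\mathcal{R}$. The theorem's premise only guarantees that $\mathcal{R}$ succeeds when given oracle access to a \emph{classical} (possibly unbounded-time) randomized machine $\cP^*$ that breaks $s$-soundness. Your $\cP^{\mathsf{hon}}$ is, as you yourself note, a quantum machine, so the sentence ``By the hypothesis on $\mathcal{R}$ (which is required to work for \emph{any} cheating prover breaking $s$-soundness, in particular for $\cP^{\mathsf{hon}}$)'' does not literally apply: $\cP^{\mathsf{hon}}$ is not in the class of oracles the hypothesis quantifies over.

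The paper closes this gap by first defining $\widetilde{\cP}^*$ to be an \emph{unbounded-time classical} randomized machine that reproduces the output distribution of the honest quantum prover $\cP$ on each $(\sigma,m_1)$; such a machine exists because any QPT-samplable distribution is samplable by an unbounded classical machine. Now $\widetilde{\cP}^*$ is genuinely a classical machine that breaks $s$-soundness (by $c$-completeness and $c-s\ge 1/\poly$), so the hypothesis applies and $|\Pr[1\gets\langle\mathcal{R}^{\widetilde{\cP}^*}(1^\secp),\mathcal{C}(1^\secp)\rangle]-t|$ is non-negligible. Since the oracle answers from $\widetilde{\cP}^*$ and from your freshly-run-and-measured $\cP$ are identically distributed, your QPT adversary $\mathcal{A}$ perfectly simulates $\mathcal{R}^{\widetilde{\cP}^*}$, and the conclusion follows. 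Once you insert this intermediate classical $\widetilde{\cP}^*$, your argument is complete and coincides with the paper's.
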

\begin{proof}
Let $\widetilde{\cP}^*$ be an unbounded-time randomized classical machine that simulates the honest QPT prover $\cP$. Then $\widetilde{\cP}^*$ breaks $s$-soundness of $\Pi$ because of $c$-soundness and $c(\secp)-s(\secp)\ge 1/\poly(\secp)$. 
Thus, $|\Pr[1\gets \langle\mathcal{R}^{\widetilde{\cP}^*}(1^\secp),\mathcal{C}(1^\secp) \rangle]-t|$ is non-negligible. Since  $\widetilde{\cP}^*$ is simulatable by a QPT machine, $\mathcal{R}^{\widetilde{\cP}^*}$ is simulatable by a QPT machine. Thus, 
$(\mathcal{C},t)$ does not hold against quantum adversaries.
\end{proof}
\begin{remark}\label{rem:why_not_quantum}
One might think that a similar proof extends to the case of quantum reductions. However, we believe that this is non-trivial. 
For example, suppose that a quantum reduction algorithm $\mathcal{R}$ queries  
a uniform superposition $\sum_{m_1}\ket{m_1}$ to the oracle where we omit an auxiliary input for simplicity. If the oracle is a classical randomized cheating prover $\cP^*$, then it should return a state of the form  $\sum_{m_1}\ket{m_1}\ket{\cP^*(m_1;r)}$ for a randomly chosen $r$ where $\cP^*(m_1;r)$ is the second message $m_2$ sent by $\cP^*$ given the first-round message $m_1$ and randomness $r$. 
On the other hand, if we try to simulate the oracle by using the honest QPT prover $\cP$, then the resulting state is of the form $\sum_{m_1,m_2}\ket{m_1}\ket{m_2}\ket{garbage_{m_1,m_2}}$. Due to potential entanglement between the first two registers and the third register, this does not correctly simulate the situation with a classical prover. 
\end{remark}

\subsection{Oracle Separation}
In this subsection, we formally state \cref{thm:oracle_separation_two-round_OWF} and prove it.

First, we define cryptographic primitives. 
The following definition is taken verbatim from \cite{TCC:ReiTreVad04} except for the difference that we consider quantum security. We remark that we restrict primitives themselves to be classical and only allow the adversary (the machine $M$) to be quantum.
\begin{definition}[Cryptographic primitives; quantumly-secure version of {\cite[Definition 2.1]{TCC:ReiTreVad04}}]
A primitive $\primitive$ is a pair $(F_{\primitive}, R_{\primitive})$, where $F_{\primitive}$ is a set of functions $f:\bit^*\rightarrow \bit^*$, and $R_{\primitive}$ is a relation over pairs $(f,M)$ of a function $f\in F_{\primitive}$ and an interactive quantum machine $M$. 
The set $F_{\primitive}$  is required to contain at least one function which is computable by a PPT machine.

A function $f:\bit^*\rightarrow \bit^*$ implements $\primitive$ or is an implementation of $\primitive$ if $f\in F_{\primitive}$. An efficient implementation of $\primitive$ is an implementation of $\primitive$ which is computable by a PPT machine. A machine $M$ $\primitive$-breaks $f\in F_{\primitive}$ if $(f, M)\in R_{\primitive}$. A quantumly-secure implementation of $\primitive$ is an implementation of $\primitive$ such that no QPT machine $\primitive$-breaks $f$.  
\end{definition}
It was pointed out in \cite{AC:BaeBrzFis13} that the above definition was too general and there are subtle logical gaps and counter examples in their claims.
In particular, \cite{TCC:ReiTreVad04} implicitly assumes that 
if two machines $M$ and $M'$ behave identically, then $(f, M)\in R_{\primitive}$ and $(f, M')\in R_{\primitive}$ are equivalent. 
We formalize this property following \cite{AC:BaeBrzFis13} 
where the definitions are taken verbatim from there except for
adaptation to the quantumly-secure setting. 
\begin{definition}[Output distribution~{\cite[Definition B.1 in the ePrint version]{AC:BaeBrzFis13}}]
An interactive (oracle-aided) quantum  Turing machine $M$
together with its oracle defines an output distribution, namely, each fixed finite sequence of inputs
fed to $M$ induces a distribution on the output sequences by considering all random choices of $M$ and 
its oracle. The output distribution of $M$ is defined to be the set of these distributions, indexed by
the finite sequences of input values.
\end{definition} 
\begin{definition}[Semantical cryptographic primitive {\cite[Definition B.2 in the ePrint version]{AC:BaeBrzFis13}}]
A cryptographic primitive $\primitive = (F_\primitive, R_\primitive)$ is called semantical, if for
all $f\in F_\primitive$ and all interactive  (oracle-aided) quantum Turing machines $M$ and $M'$
(including their oracles), it
holds: If $M$ induces the same output distribution as $M'$, then
$(f,M)\in R_\primitive$ if and only if $(f,M')\in R_\primitive$.
\end{definition}
\begin{remark}[Examples]
As explained in \cite{TCC:ReiTreVad04,AC:BaeBrzFis13}, most cryptographic primitives considered in the literature are captured by semantical cryptographic primitives. They include (but not limited to) OWFs, public key encryption, digital signatures, oblivious transfers 
indistinguishability obfuscation etc. 
On the other hand, we note that it does not capture concrete assumptions such as the hardness of factoring, discrete-logarithm, LWE etc. unlike  game-based assumptions defined in \cref{def:game_based_assumption}.
Similarly to game-based assumptions, semantical cryptographic primitives  do not capture knowledge-type assumptions or zero-knowledge proofs (with non-black-box zero-knowledge). 
\end{remark}

Next, we define secure implementation relative to oracles following \cite{TCC:ReiTreVad04}.
\begin{definition}[Secure implementation relative to oracles {\cite[Definition 2.2]{TCC:ReiTreVad04}}]
A quantumly-secure implementation of primitive $\primitive$ exists relative to an oracle $O$ 
if there exists an implementation of $f$ of $\primitive$ which is computable by a PPT oracle machine with access to $O$ and such that no QPT oracle machine with access to $O$
$\primitive$-breaks $f$.
\end{definition}
\begin{remark}[Example]
A quantumly-secure implementation of 
OWFs and collision-resistant hash functions exists relative to a random oracle~\cite{SICOMP:BBBV97,Zhandry15}. 
\cite{AC:HosYam20} implicitly proves that quantumly-secure implementation of trapdoor-permutations exist relative to a classical oracle.
We believe that we can prove similar statements for most cryptographic primitives by appropriately defining oracles. 
\end{remark}

Now, we are ready to state the formal version of \cref{thm:oracle_separation_two-round_OWF}.
\begin{theorem}\label{thm:separation_formal}
Suppose that a semantical cryptographic primitive $\primitive=(F_{\primitive}, R_{\primitive} )$ has  a quantumly-secure implementation relative to a classical oracle.
Then there is a randomized classical oracle relative to which two-round AI-IV-PoQ do not exist but a quantumly-secure implementation of $\primitive$ exists. 
\end{theorem}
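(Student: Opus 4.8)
The plan is to follow the intuition sketched in the introduction: start from a quantumly-secure implementation $f$ of $\primitive$ relative to a classical oracle $O$, and augment $O$ with a second oracle $Q^O$ that acts as a ``universal quantum circuit evaluator''. Concretely, $Q^O$ takes as input a description of a (polynomial-size) quantum circuit $C$ with oracle gates for $O$, together with a classical input string $x$, runs $C^O$ on $x$, measures the output register in the computational basis, and returns the resulting classical string. To make this a \emph{randomized} classical oracle (rather than a quantum operation), we fix fresh internal randomness for each query, so that $Q^O(C,x)$ is distributed exactly as the output of $C^O(x)$. The combined oracle is $(O, Q^O)$; I would first verify that $Q^O$ is well-defined as a randomized classical oracle (the output distribution of a fixed quantum circuit on a fixed input is a genuine distribution over bit strings, and we draw an independent sample per query).

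The next step is to show that relative to $(O,Q^O)$ no two-round AI-IV-PoQ exists. Suppose $\Pi = (\cP, \cV=(\cV_1,\cV_2))$ were a two-round AI-IV-PoQ relative to $(O,Q^O)$ with $c$-completeness and $s$-soundness, $c-s \ge 1/\poly$. The honest prover $\cP$ is a QPT oracle machine with access to $(O, Q^O)$; since $Q^O$ itself can be simulated by a QPT machine with access to $O$ (a quantum machine can just run $C^O$ coherently and measure), $\cP$ is really a QPT machine with access to $O$ alone. Now define a classical cheating prover $\cP^*$ that, on auxiliary input $\sigma$ and first-round message $m_1$, queries $Q^O$ on the description of the quantum circuit computing $\cP$'s response together with input $(\sigma, m_1)$ (and whatever internal state $\cP$ would have after the first round — for a two-round protocol $\cP$'s state before sending its message is determined by $\sigma$ and $m_1$, so this is a single circuit). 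The string returned is distributed exactly as $\cP$'s honest second-round message, hence $\cP^*$ passes $\cV_2$ with probability $\ge c - \negl \ge s + 1/\poly$ on every $\sigma$. This $\cP^*$ runs in classical polynomial time relative to $(O, Q^O)$, contradicting $s$-soundness. I should be a little careful that soundness is only required against \emph{PPT} provers and that $\cP^*$ indeed qualifies: its only non-trivial step is one query to $Q^O$, so it does.

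Finally I would argue that a quantumly-secure implementation of $\primitive$ still exists relative to $(O,Q^O)$. Take the same $f$ that is a quantumly-secure implementation relative to $O$; it is computable by a PPT oracle machine with access to $O$, hence also with access to $(O,Q^O)$, so it is an implementation of $\primitive$ relative to $(O,Q^O)$. Suppose some QPT oracle machine $M^{O,Q^O}$ $\primitive$-breaks $f$. I would replace each query $M$ makes to $Q^O$ by an internal coherent simulation: a QPT machine with access to $O$ can, given the description of $C$ and input $x$, run $C^O$ coherently on $|x\rangle$, measure the output register, and thereby sample from exactly the distribution $Q^O(C,x)$ — and it can do this in superposition over $M$'s query register (measuring the answer register after writing it, to match the randomized classical oracle's behaviour of returning a classical string). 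Call the resulting machine $M'^{O}$; it induces the same output distribution as $M^{O,Q^O}$. Since $\primitive$ is \emph{semantical}, $(f, M^{O,Q^O}) \in R_\primitive$ implies $(f, M'^O)\in R_\primitive$, i.e. $M'^O$ is a QPT machine with access to $O$ that $\primitive$-breaks $f$, contradicting quantum security of $f$ relative to $O$.

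The main obstacle, and the place where the \emph{semantical} hypothesis is essential, is the last step: a priori ``$\primitive$-breaking'' is a property of the machine, not just of its output distribution, so one cannot freely swap $M^{O,Q^O}$ for the $O$-only simulator $M'^O$ without the semantical assumption. A secondary subtlety worth spelling out carefully is the precise modelling of $Q^O$ as a randomized classical oracle and the claim that a quantum machine can simulate it query-by-query \emph{in superposition} — one must measure the answer register immediately after it is written, so that the simulation matches an oracle that returns a classical (already-measured) string, and one must check this does not disturb $M$'s behaviour relative to the true $Q^O$, which also returns classical strings. Once these modelling points are pinned down, the three steps above combine to give Theorem~\ref{thm:separation_formal}.
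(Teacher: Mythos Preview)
Your proposal is correct and follows essentially the same approach as the paper's proof: define the augmented oracle $(O,Q^O)$ with $Q^O$ a universal quantum-circuit evaluator, kill any two-round AI-IV-PoQ by having a classical cheater offload the honest prover's circuit to $Q^O$, and preserve quantum security of $f$ by absorbing $Q^O$-queries into an $O$-only QPT simulator and invoking the semantical hypothesis. The paper's argument is slightly terser on the modelling points you flag (how $Q^O$ behaves as a randomized classical oracle and how a quantum machine simulates it), but the logical skeleton is identical.
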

\begin{remark}
If we assume that a quantumly-secure implementation of a semantical cryptographic primitive $\primitive$ exists in the unrelativized world, then the assumption of the theorem is trivially satisfied relative to a trivial oracle that does nothing. Thus, the above theorem can be understood as a negative result on constructing two-round AI-IV-PoQ from any primitive whose quantumly-secure implementation is believed to exist. 
\end{remark}
\begin{proof}
Let $f$ be a quantumly-secure implementation of $\primitive$ relative to a classical oracle $O$. 
Let $Q^O$ be a randomized oracle that takes a description of an $n$-qubit input quantum circuit $C^O$  with $O$-gates and a classical string $x\in \bit^n$ as input and returns a classical string according to the distribution of $C^O(x)$.   
We prove that two-round AI-IV-PoQ do not exist but $f$ is 
a quantumly-secure implementation relative to oracles $(O,Q^O)$. 

Let $\Pi=(\cP,\cV=(\cV_1,\cV_2))$ be a two-round AI-IV-PoQ that satisfies $s$-soundness 
relative to $(O,Q^O)$. 
Since $Q^O$ is simulatable in QPT with oracle access to $O$, given the auxiliary input $\sigma$, one can generate a description of quantum circuit $C^O$  with $O$-gates that simulates $\cP^{O,Q^O}(\sigma)$ in a classical polynomial time. Then let us consider a classical cheating prover $\cP^*$ relative to the oracles $(O,Q^O)$ that works as follows: Receiving the  auxiliary input $\sigma$ and the first-round message $m_1$ from the external verifier, $\cP^*$ generates the above quantum circuit $C^O$, queries $(C^O,m_1)$ to the oracle $Q^O$ to receive the response $m_2$, and send $m_2$ to the external verifier. 
Clearly, $\cP^*$ passes the verification with the same probability as $\cP$ does. Therefore, $\Pi$ cannot satisfy $c$-soundness for any $c<s$. 
This means that there is no two-round AI-IV-PoQ relative to $(O,Q^O)$.

On the other hand, if $f$ is not a quantumly-secure implementation of $\primitive$ relative to $(O,Q^O)$, then there is a QPT oracle-aided machine $M$ such that $(f,M^{O,Q^O})\in R_\primitive$. Again, since  $Q^O$ is simulatable in QPT with oracle access to $O$, there is a QPT oracle-aided machine $\widetilde{M}$ such that $M^{O,Q^O}$ and $\widetilde{M}^{O}$ induce the same output distributions. Since $\primitive$ is semantical, we have $(f,\widetilde{M}^{O})\in R_\primitive$. This contradicts the assumption that $f$ is a quantumly-secure implementation of $\primitive$ relative to $O$. Therefore, $f$ is a quantumly-secure implementation of $\primitive$ relative to $(O,Q^O)$.

\end{proof}
\section{Variants of PoQ from QE-OWFs}
\label{sec:QEOWFs}
\begin{definition}[QE-OWFs]
A function $f:\bit^*\to\bit^*$ is a (classically-secure) quantum-evaluation OWF (QE-OWF) if the following two properties
are satisfied.
\begin{itemize}
    \item 
There exists a QPT algorithm $\mathsf{QEval}$ such that
$\Pr[f(x)\gets \mathsf{QEval}(x)]\ge1-2^{-|x|}$ for all $x\in\bit^*$.\footnote{Actually the threshold can be any value larger than 1/2, because the amplification is possible.}
\item
For any PPT adversary $\cA$,
there exists a negligible function $\negl$ such that for any $\secp$,
\begin{equation}
\Pr[f(x')=f(x):x'\gets \cA(1^\secp,f(x)),x\gets\bit^\secp]\le \negl(\secp).
\end{equation}
\end{itemize}
\end{definition}

\begin{remark}
It is usually useless to consider OWFs whose evaluation algorithm is QPT but the security is against PPT adversaries.
However, for our applications, classical security is enough.
We therefore consider the classically-secure QE-OWFs, because it only makes our result stronger.
\end{remark}

\if0
\begin{remark}
Instead of the above definition of the one-wayness,
we could define the one-wayness such that
$\Pr[\cC\to\top]\le\negl(\secp)$
for any PPT adversary $\cA$ in the following security game:
\begin{enumerate}
    \item 
    The challenger $\cC$ samples $x\gets\bit^\secp$.
    \item
    $\cC$ runs $y\gets \mathsf{QEval}(x)$, and sends $r$ to the adversary $\cA$.
    \item
    $\cA$ sends $x'$ to $\cC$.
    \item
    $\cC$ runs $y'\gets\mathsf{QEval}(x)$. If $y=y'$, $\cC$ outputs $\top$. Otherwise, it outputs $\bot$.
\end{enumerate}
It is clear that the one-wayness in these two definitions are equivalent. 
\end{remark}
\fi

Before explaining our construction of variants of PoQ from QE-OWFs,
we point out that QE-OWFs seems to be weaker than classically-secure and classical-evaluation OWFs.
Let $g$ be a classically-secure and classical-evaluation OWF.
Let $L$ be any language in $\mathbf{BQP}$.
From them, we construct the function $f$ as follows:
$f(x,y)\coloneqq L(x)\|g(y)$,
where $L(x)=1$ if $x\in L$ and $L(x)=0$ if $x\notin L$.
Then we have the following lemma.

\begin{lemma}
$f$ is QE-OWFs. Moreover, if $\mathbf{BQP}\neq\mathbf{BPP}$, $f$ cannot be evaluated
in classical polynomial-time.
\end{lemma}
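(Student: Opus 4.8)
The plan is to verify the three requirements in turn: quantum evaluability of $f$, one-wayness of $f$ against PPT adversaries, and the ``moreover'' claim on classical inevaluability. Throughout I think of an input to $f$ as a pair $z=(x,y)$ of equal-length strings under a fixed efficient pairing, so $|x|,|y|=\Theta(|z|)$.

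For quantum evaluation, since $L\in\mathbf{BQP}$ there is a QPT algorithm deciding membership in $L$ with bounded error; running $O(|z|)$ independent copies and taking the majority vote drives the error below $2^{-|z|}$. On input $z=(x,y)$, $\mathsf{QEval}$ runs this amplified procedure to obtain a bit $b$ (meant to equal $L(x)$), computes $g(y)$ deterministically and classically (hence also quantumly, with no error), and outputs $b\|g(y)$. The only source of error is the computation of $L(x)$, so $\Pr[f(z)\gets\mathsf{QEval}(z)]\ge 1-2^{-|z|}$, as required.

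For one-wayness, suppose toward a contradiction that a PPT adversary $\cA$ inverts $f$ with non-negligible probability $\nu(\secp)$. I would build a PPT adversary $\cB$ against the one-wayness of $g$: on input $g(y)$ for uniform $y\gets\bit^\secp$, $\cB$ samples $x\gets\bit^\secp$ and a uniform bit $b\gets\bit$, runs $(x',y')\gets\cA(1^\secp,b\|g(y))$, and outputs $y'$. The key observation is that a uniformly random bit $b$ equals $L(x)$ with probability exactly $\tfrac12$, independently of the (unknown, possibly skewed) distribution of $L(x)$, and that conditioning on the event $b=L(x)$ leaves the pair $(x,y)$ uniform. Hence, conditioned on $b=L(x)$, the string $b\|g(y)$ fed to $\cA$ is distributed exactly as $f(x,y)$ for uniform $(x,y)$, so $\cA$ returns $(x',y')$ with $f(x',y')=f(x,y)$ — in particular $g(y')=g(y)$ — with probability at least $\nu(\secp)$. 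Therefore $\cB$ inverts $g$ with probability at least $\tfrac12\nu(\secp)$, which is non-negligible, contradicting the classical security of $g$. Note that $\cB$ never needs to compute $L$; it only needs $g$ to be one-way.

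For the ``moreover'' claim, if $\mathbf{BQP}\neq\mathbf{BPP}$ take $L$ to be a language in $\mathbf{BQP}\setminus\mathbf{BPP}$. If $f$ were computable by a PPT algorithm $M$ with bounded error, then one could decide $L$ in $\mathbf{BPP}$: on input $x$, run $M$ on the pair $(x,0^{|x|})$ and output the first bit of its output, which equals $L(x)$ with bounded error; this would put $L\in\mathbf{BPP}$, a contradiction. Hence $f$ admits no PPT evaluation. I do not anticipate a genuine obstacle here — the only mildly subtle point is the guessing step in the one-wayness reduction, and the observation that it succeeds with probability exactly $1/2$ regardless of the bias of $L$ makes the argument go through cleanly.
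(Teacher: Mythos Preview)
Your proof is correct and follows essentially the same approach as the paper: the same three-part structure, and in particular the identical reduction for one-wayness (sample a uniform bit $b$ to guess $L(x)$, feed $b\|g(y)$ to the inverter, and output the second component). Your conditioning argument that $\Pr[b=L(x)]=\tfrac12$ and that $(x,y)$ remains uniform on this event is a slightly cleaner packaging of the same calculation the paper carries out via direct summation; and your explicit choice of $L\in\mathbf{BQP}\setminus\mathbf{BPP}$ for the ``moreover'' part makes precise what the paper leaves implicit.
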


\begin{proof}
First, it is clear that there exists a QPT algorithm $\mathsf{QEval}$ such that
for any $x,y$
\begin{equation}
\Pr[f(x,y)\gets\mathsf{QEval}(x,y)]\ge 1-2^{-|x\|y|}.    
\end{equation}

Second, let us show the one-wayness of $f$.
Assume that it is not one-way. Then, there exists a PPT adversary $\cA$ and a polynomial $p$ such that
\begin{equation}
\frac{1}{2^n}\sum_{x\in\bit^n}
\frac{1}{2^m}\sum_{y\in \bit^m}
\Pr[L(x)=L(x')\wedge g(y)=g(y'):(x',y')\gets \cA(L(x)\|g(y))]\ge \frac{1}{p}.
\end{equation}
From this $\cA$, we can construct a PPT adversary $\cB$ that breaks the one-wayness of $g$ as follows.
\begin{enumerate}
   \item
   On input $g(y)$, sample $x\gets\bit^n$ and $b\gets\bit$.
   \item
   Run $(x',y')\gets \cA(b\|g(y))$.
   \item
   Output $y'$.
\end{enumerate}
The probability that $\cB$ breaks the one-wayness of $g$ is
\begin{align}
&\frac{1}{2^m}\sum_{y\in \bit^m}
\frac{1}{2^n}\sum_{x\in\bit^n}
\frac{1}{2}\sum_{b\in\bit}
\sum_{x',y'}\Pr[(x',y')\gets \cA(b\|g(y))]
\delta_{g(y),g(y')}\\
&\ge
\frac{1}{2}
\frac{1}{2^n}\sum_{x\in\bit^n}
\frac{1}{2^m}\sum_{y\in \bit^m}
\sum_{x',y'}\Pr[(x',y')\gets \cA(L(x)\|g(y))]
\delta_{g(y),g(y')}\\
&\ge
\frac{1}{2}
\frac{1}{2^n}\sum_{x\in\bit^n}
\frac{1}{2^m}\sum_{y\in \bit^m}
\sum_{x',y'}\Pr[(x',y')\gets \cA(L(x)\|g(y))]
\delta_{g(y),g(y')}\delta_{L(x),L(x')}\\
&\ge
\frac{1}{2p},
\end{align}
which is non-negligible.

Finally, it is clear that if there exists a PPT algorithm that computes $f(x,y)$
for any $x,y$ with probability at least $1-2^{-|x\|y|}$, then the algorithm can solve $L$, which contradicts
$\mathbf{BQP}\neq\mathbf{BPP}$.
\end{proof}

Now we show the main result of this section.
\begin{theorem}\label{PoQ_from_QE-OWFs}
If (classically-secure) QE-OWFs exist, then QV-PoQ exist or (classically-secure and classical-evaluation) infinitely-often OWFs exist.
\end{theorem}

\begin{proof}
Let $f$ be a classically-secure QE-OWF. 
From the $f$, we construct a QV-PoQ $(\cP,\cV)$ as follows.
\begin{enumerate}
    \item 
    The verifier $\cV$ chooses $x\gets\bit^\secp$, and sends it to the prover $\cP$.
    \item
    $\cP$ runs $y\gets\mathsf{QEval}(x)$, and sends $y$ to $\cV$.
    \item
    $\cV$ runs $y'\gets\mathsf{QEval}(x)$.
    If $y=y'$, $\cV$ outputs $\top$. Otherwise, $\cV$ outputs $\bot$.
\end{enumerate}
The $1$-completeness is shown as follows.
The probability that $\cV$ accepts with the honest prover is
\begin{align}
\frac{1}{2^\secp}\sum_{x}\sum_y\Pr[y\gets\mathsf{QEval}(x)]^2    
&\ge \frac{1}{2^\secp}\sum_{x}\Pr[f(x)\gets\mathsf{QEval}(x)]^2\\
&\ge \frac{1}{2^\secp}\sum_{x}(1-2^{-\secp})^2\\
&\ge 1-\negl(\secp).
\end{align}

If the soundness is also satisfied, then we have a QV-PoQ.

Assume that the soundness is not satisfied. Then there exists a PPT algorithm $P^*$ such that for any polynomial $\poly$
such that
\begin{align}
\frac{1}{2^\secp}\sum_{x}\sum_y\Pr[y\gets P^*(x)]\Pr[y\gets \mathsf{QEval}(x)]\ge1-\frac{1}{\poly(\secp)}   
\end{align}
for infinitely many $\secp$.
Then we have
for any polynomial $\poly$
\begin{align}
1-\frac{1}{\poly(\secp)}
&\le
\frac{1}{2^\secp}\sum_{x}\sum_y\Pr[y\gets P^*(x)]\Pr[y\gets \mathsf{QEval}(x)]\\
&=\frac{1}{2^\secp}\sum_{x}\Pr[f(x)\gets P^*(x)]\Pr[f(x)\gets \mathsf{QEval}(x)]\\
&+\frac{1}{2^\secp}\sum_{x}\sum_{y\neq f(x)}\Pr[y\gets P^*(x)]\Pr[y\gets \mathsf{QEval}(x)]\\
&\le\frac{1}{2^\secp}\sum_{x}\Pr[f(x)\gets P^*(x)]+\frac{1}{2^\secp}\sum_{x}\sum_{y\neq f(x)}\Pr[y\gets \mathsf{QEval}(x)]\\
&\le\frac{1}{2^\secp}\sum_{x}\Pr[f(x)\gets P^*(x)]+\frac{1}{2^\secp}\sum_{x}2^{-\secp}
\end{align}
for infinitely many $\secp$,
which gives that for any polynomial $\poly$
\begin{equation}
   \frac{1}{2^\secp}\sum_{x}  \Pr[f(x)\gets P^*(x)]
   \ge 1-\frac{1}{\poly(\secp)}
\end{equation}
for infinitely many $\secp$.
If we write the random seed for $P^*$ explicitly, it means that
for any polynomial $\poly$
\begin{equation}
   \frac{1}{2^{\secp+p(\secp)}}\sum_{x\in \bit^\secp}\sum_{r\in \bit^{p(\secp)}}  \delta_{f(x),P^*(x;r)}
   \ge 1-\frac{1}{\poly(\secp)}
   \label{QOWFs_assumption}
\end{equation}
for infinitely many $\secp$,
where $p(\secp)$ is the length of the seed,
and $\delta_{\alpha,\beta}=1$ if $\alpha=\beta$ and it is 0 otherwise.
Define the set
\begin{equation}
G\coloneqq\{(x,r)\in\bit^\secp\times\bit^p:f(x)= P^*(x;r)\}.    
\end{equation}
Then, from \cref{QOWFs_assumption}, we have for any polynomial $\poly$
\begin{equation}
\frac{2^{\secp+p}-|G|}{2^{\secp+p}} \le\frac{1}{\poly(\secp)}
\label{notGisnegl}
\end{equation}
for infinitely many $\secp$.

Define the function $g:(x,r)\to P^*(x;r)$.
We show that it is a classically-secure and classical-evaluation infinitely-often distributionally OWF.
(For the definition of distributionally OWFs, see \cref{def:dOWFs}.)
It is enough because distributionally OWFs imply OWFs (\cref{lem:dOWFs}).
To show it, assume that it is not. Then,
for any polynomial $\poly$
there exists a PPT algorithm $\cA$ 
such that 
\begin{align}
   \Big\|
   \frac{1}{2^{\secp+p}}\sum_{x,r}(x,r)\otimes g(x,r)
   -\frac{1}{2^{\secp+p}}\sum_{x,r} \cA(g(x,r))\otimes g(x,r)\Big\|_1\le\frac{1}{\poly(\secp)}
   \label{gisnotOW}
\end{align}
for infinitely many $\secp$.
Here, we have used quantum notations although everything is classical, because
it is simpler.
Moreover, for the notational simplicity, we omit bras and kets:
$(x,r)$ means $|(x,r)\rangle\langle (x,r)|$, 
$g(x,r)$ means $|g(x,r)\rangle\langle g(x,r)|$, 
and $\cA(g(x,r))$ is the (diagonal) density matrix that represents the classical output distribution of
the algorithm $\cA$ on input $g(x,r)$.

From the algorithm $\cA$, we construct a PPT adversary $\cB$ that
breaks the distributional one-wayness of $f$ as follows:
\begin{enumerate}
    \item 
    On input $f(x)$, sample $r\gets\bit^p$.
    \item
    Run $(x',r')\gets\cA(f(x))$.
    \item
    Output $x'$.
\end{enumerate}
Then for any polynomial $\poly$
\begin{align}
   &\Big\|
   \frac{1}{2^\secp}\sum_{x}x\otimes f(x)
   -\frac{1}{2^\secp}\sum_{x}\cB(f(x))\otimes f(x)\Big\|_1\\
   &=
   \Big\|
   \frac{1}{2^\secp}\sum_{x}x\otimes f(x)
   -\frac{1}{2^\secp}\sum_{x} \frac{1}{2^p}\sum_{r}\mbox{Tr}_R[\cA(f(x))]\otimes f(x)\Big\|_1\label{TrR}\\
    &=
   \Big\|
   \frac{1}{2^\secp}\sum_{x}\sum_{r}\frac{1}{2^p}\mbox{Tr}(r)x\otimes f(x)
   -\frac{1}{2^\secp}\sum_{x} \frac{1}{2^p}\sum_{r}\mbox{Tr}_R[\cA(f(x))]\otimes f(x)\Big\|_1\\
     &\le
   \Big\|
   \frac{1}{2^\secp}\sum_{x}\sum_{r}\frac{1}{2^p}x\otimes r\otimes f(x)
   -\frac{1}{2^\secp}\sum_{x} \frac{1}{2^p}\sum_{r}\cA(f(x))\otimes f(x)\Big\|_1\\
      &=
   \Big\|
   \frac{1}{2^{\secp+p}}\sum_{x,r}[x\otimes r\otimes f(x)
   -\cA(f(x))\otimes f(x)]\Big\|_1\\
       &=
   \Big\|
   \frac{1}{2^{\secp+p}}\sum_{(x,r)\in G}[x\otimes r\otimes f(x)
   -\cA(f(x))\otimes f(x)]\\
   &+\frac{1}{2^{\secp+p}}\sum_{(x,r)\notin G}[x\otimes r\otimes f(x)
   -\cA(f(x))\otimes f(x)]\\
   &+\frac{1}{2^{\secp+p}}\sum_{(x,r)\notin G}[x\otimes r\otimes P^*(x;r)
   -\cA(P^*(x;r))\otimes P^*(x;r)]\\
   &-\frac{1}{2^{\secp+p}}\sum_{(x,r)\notin G}[x\otimes r\otimes P^*(x;r)
   -\cA(P^*(x;r))\otimes P^*(x;r)]
   \Big\|_1\\
        &=
   \Big\|
   \frac{1}{2^{\secp+p}}\sum_{(x,r)\in G}[x\otimes r\otimes P^*(x;r)
   -\cA(P^*(x;r))\otimes P^*(x;r)]\\
   &+\frac{1}{2^{\secp+p}}\sum_{(x,r)\notin G}[x\otimes r\otimes f(x)
   -\cA(f(x))\otimes f(x)]\\
   &+\frac{1}{2^{\secp+p}}\sum_{(x,r)\notin G}[x\otimes r\otimes P^*(x;r)
   -\cA(P^*(x;r))\otimes P^*(x;r)]\\
   &-\frac{1}{2^{\secp+p}}\sum_{(x,r)\notin G}[x\otimes r\otimes P^*(x;r)
   -\cA(P^*(x;r))\otimes P^*(x;r)]
   \Big\|_1\\
         &=
   \Big\|
   \frac{1}{2^{\secp+p}}\sum_{x,r}[x\otimes r\otimes P^*(x;r)
   -\cA(P^*(x;r))\otimes P^*(x;r)]\\
   &+\frac{1}{2^{\secp+p}}\sum_{(x,r)\notin G}[x\otimes r\otimes f(x)
   -\cA(f(x))\otimes f(x)]\\
   &-\frac{1}{2^{\secp+p}}\sum_{(x,r)\notin G}[x\otimes r\otimes P^*(x;r)
   -\cA(P^*(x;r))\otimes P^*(x;r)]
   \Big\|_1\\
          &\le
   \Big\|
   \frac{1}{2^{\secp+p}}\sum_{x,r}[x\otimes r\otimes P^*(x;r)
   -\cA(P^*(x;r))\otimes P^*(x;r)]\Big\|_1\\
   &+\frac{1}{2^{\secp+p}}\sum_{(x,r)\notin G}\Big\|x\otimes r\otimes f(x)
   -\cA(f(x))\otimes f(x)\Big\|_1\\
   &+\frac{1}{2^{\secp+p}}\sum_{(x,r)\notin G}\Big\|x\otimes r\otimes P^*(x;r)
   -\cA(P^*(x;r))\otimes P^*(x;r)\Big\|_1\\
   &\le\frac{1}{\poly(\secp)}
\end{align}
for infinitely $\secp$,
which means that $f$ is not distributional one-way. 
It contradicts the assumption that $f$ is one-way, because one-wayness implies distributionally one-wayness.
In \cref{TrR}, $R$ is the register of the state $\cA(f(x))$ that contains ``the output $r$'' of the algorithm $\cA$.
The last inequality comes from \cref{gisnotOW} and \cref{notGisnegl}.
\end{proof}

\begin{remark}\label{rem:uniform_non-uniform}
It is crucial in the proof of \Cref{PoQ_from_QE-OWFs} that the soundness of QV-PoQ only considers \emph{uniform} PPT adversaries so that we can use the adversary to construct infinitely-often OWFs. If the soundness of QV-PoQ also considers \emph{non-uniform} PPT adversaries, then we should allow the evaluation algorithm of infinitely-often OWFs to be \emph{non-uniform}. 
\end{remark}

\if0
\begin{proof}
Let $f$ be a classically-secure QE-OWF. 
Then, for any PPT algorithm $\cA$,
\begin{equation}
\frac{1}{2^\secp}\sum_{x\in\bit^\secp} \Pr[f(x)=f(\cA(f(x)))]\le\negl(\secp),    
\label{QE-OWF}
\end{equation}
which means that except for a negligible fraction of $x$, 
$\Pr[f(x)=f(\cA(f(x)))]\le\negl(\secp)$.    

From the $f$, we construct a QV-PoQ $(\cP,\cV)$ as follows.
\begin{enumerate}
    \item 
    The verifier $\cV$ chooses $x\gets\bit^\secp$, and sends it to the prover $\cP$.
    \item
    The prover runs $y\gets\mathsf{QEval}(x)$, and sends $y$ to $\cV$.
    \item
    $\cV$ runs $y'\gets\mathsf{QEval}(x)$.
    If $y=y'$, $\cV$ outputs $\top$. Otherwise, $\cV$ outputs $\bot$.
\end{enumerate}
The $(1-\negl(\secp))$-completeness is clear. 
If the soundness is also satisfied, then we have a QV-PoQ.

Assume that the soundness is not satisfied. Then there exists a PPT algorithm $P^*$ such that
\begin{equation}
   \frac{1}{2^\secp}\sum_{x\in\bit^\secp}  \Pr[f(x)\gets P^*(x)]
   \ge 1-\negl(\secp)
\end{equation}
is satisfied for infinitely many $\secp$.
If we write the random seed for $P^*$ explicitly, it means that
\begin{equation}
   \frac{1}{2^{\secp+p(\secp)}}\sum_{x\in \bit^\secp}\sum_{r\in \bit^{p(\secp)}}  \delta_{f(x),P^*(x;r)}
   \ge 1-\negl(\secp)
   \label{QOWFs_assumption}
\end{equation}
for infinitely many $\secp$,
where $p(\secp)$ is the length of the seed,
and $\delta_{\alpha,\beta}=1$ if $\alpha=\beta$ and it is 0 otherwise.
Therefore, except for a negligible fraction of $(x,r)$, $f(x)=P^*(x;r)$.

Then, the function $g:(x,r)\to P^*(x;r)$ is a classically-secure and classical-evaluation infinitely-often OWF,
because for any PPT algorithm $\cA$,
\begin{align}
   \Pr[g(x,r)=g(\cA(g(x,r)))] 
   =
   \Pr[f(x)=f(\cA(f(x)))]\le\negl(\secp) 
\end{align}
except for a negligible fraction of $(x,r)$ for infinitely many $\secp$.
Here, the first equality comes from \cref{QOWFs_assumption} and the second inequality comes from \cref{QE-OWF}.
\end{proof}
\fi

\ifnum\anonymous=1
\else
{\bf Acknowledgements.}
We thank 
\ifnum\cameraready=1
\else
Mark Zhandry for pointing out \cref{rem:why_not_quantum} and 
\fi
Kai-Min Chung for pointing out the relationship between  \Cref{lem:hash2} and Valiant-Vazirani theorem. 
TM is supported by
JST CREST JPMJCR23I3,
JST Moonshot R\verb|&|D JPMJMS2061-5-1-1, 
JST FOREST, 
MEXT QLEAP, 
the Grant-in-Aid for Scientific Research (B) No.JP19H04066, 
the Grant-in Aid for Transformative Research Areas (A) 21H05183,
and 
the Grant-in-Aid for Scientific Research (A) No.22H00522.
\fi

\ifnum\submission=1 
\else
\appendix
\section{Necessity of Assumptions for (AI-/IO-)IV-PoQ}
\label{sec:necessity_assumption}
In this appendix, we prove 
that
the existence of AI-IV-PoQ implies $\mathbf{PP}\neq \mathbf{BPP}$.
(Remember that IV-PoQ implies IO-IV-PoQ, and that IO-IV-PoQ implies AI-IV-PoQ.)

Assume that there is an AI-IV-PoQ.
From it, we can construct another AI-IV-PoQ $(\cP,\cV=(\cV_1,\cV_2))$ where each prover's message is a single bit.
Without loss of generality, we can assume that the first phase of the AI-IV-PoQ runs as follows:
For $j=1,2,...,N$, the prover $\cP$ and the verifier $\cV_1$ repeat the following.
\begin{enumerate}
    \item 
    $\cP$ possesses a state $\ket{\psi_{j-1}}$. It applies
    a unitary $U_j(\alpha_1,\beta_1,...,\alpha_{j-1},\beta_{j-1})$ 
    that depends on the transcript $(\alpha_1,\beta_1,...,\alpha_{j-1},\beta_{j-1})$ 
    obtained so far on the state, and measures
    a qubit in the computational basis
    to obtain the measurement result $\alpha_j\in\bit$.
    $\cP$ sends $\alpha_j\in\bit$ to $\cV_1$.
    The post-measurement state is $\ket{\psi_j}$.
    \item
   $\cV_1$ does some classical computing and sends a bit string $\beta_j$ to $\cP$. 
\end{enumerate}
We claim that
a PPT
prover $\cP^*$ that queries the $\verb|#|\mathbf{P}$ oracle
can exactly sample from $\cP$'s $j$th output probability distribution $\Pr[\alpha_j|\alpha_{j-1},...,\alpha_1]$ for each $j\in[N]$.
This is shown as follows.
\begin{enumerate}
    \item 
    $\cP^*$ computes $\Pr[\alpha_1]=\| (|\alpha_1\rangle\langle \alpha_1|\otimes I)U_1|0...0\rangle\|^2$ for
    each $\alpha_1\in\bit$,
    and samples $\alpha_1$ according to $\Pr[\alpha_1]$.
    Let $\alpha_1^*\in\bit$ be the result of the sampling.
    $\cP^*$ sends $\alpha_1^*$ to $\cV_1$.
    \item
    $\cV_1$ sends $\beta_1^*$ to $\cP^*$.
    \item
    $\cP^*$ and $\cV_1$ repeat the following for $j=2,3,...,N$.
    \begin{enumerate}
        \item 
     $\cP^*$ computes $\Pr[\alpha_j|\alpha_{j-1}^*,...,\alpha_1^*]
     =\frac{\Pr[\alpha_j,\alpha_{j-1}^*,...,\alpha_1^*]}{\Pr[\alpha_{j-1}^*,...,\alpha_1^*]}$ for
    each $\alpha_j\in\bit$,
    and samples $\alpha_j$ according to $\Pr[\alpha_j|\alpha_{j-1}^*,...,\alpha_1^*]$.
    Here, 
    \begin{align}
&    \Pr[\alpha_j,\alpha_{j-1}^*,...,\alpha_1^*]\nonumber\\
    &=\|
    (|\alpha_1^*\rangle\langle \alpha_1^*|\otimes...\otimes|\alpha_{j-1}^*\rangle\langle\alpha_{j-1}^*|\otimes|\alpha_j\rangle\langle\alpha_j|\otimes I)\\
    &\times U_j(\alpha_1^*,\beta_1^*,...,\alpha_{j-1}^*,\beta_{j-1}^*)
    ...U_2(\alpha_1^*,\beta_1^*)U_1
    |0...0\rangle\|^2.
    \end{align}
    Let $\alpha_j^*\in\bit$ be the result of the sampling.
    $\cP^*$ sends $\alpha_j^*$ to $\cV_1$.
    \item
    $\cV_1$ sends $\beta_j^*$ to $\cP^*$.
    \end{enumerate}
\end{enumerate}
    It is known that for any QPT algorithm that outputs $z\in\bit^\ell$, the probability $\Pr[z]$ that it outputs $z$ can be computed
    in classical deterministic polynomial-time by querying the \verb|#|$\mathbf{P}$ oracle~\cite{ForRog99}.
    Therefore, $\cP^*$ can compute $\Pr[\alpha_1]$ and $\Pr[\alpha_j|\alpha_{j-1}^*,...,\alpha_1^*]$ for any $j=2,3,...,N$.
    It is known that a $\verb|#|\mathbf{P}$ function can be computed in classical deterministic polynomial-time
    by querying the $\mathbf{PP}$ oracle. Therefore, if $\mathbf{PP}=\mathbf{BPP}$,
    $\cP^*$ is enough to be a PPT algorithm.

\if0
    \begin{lemma}\label{lem:ForRog}
    Let us consider a QPT algorithm that outputs $(x,y)\in \bit^n\times\bit^m$.
    For any $y\in\bit^m$, $\Pr[x|y]$ can be sampled in a classical deterministic polynomial-time
    querying the \verb|#|$\mathbf{P}$ oracle.
    \end{lemma}
    \begin{proof}
    It is known that for any QPT algorithm that outputs $z\in\bit^\ell$, the probability $\Pr[z]$ that it outputs $z$ can be computed
    in classical deterministic polynomial-time by querying the \verb|#|$\mathbf{P}$ oracle~\cite{ForRog99}.
    The sampling of $\Pr[x|y]$ can be done as follows.
    \begin{enumerate}
       \item
       Compute $\Pr[x_1|y]=\frac{\Pr[x_1,y]}{\Pr[y]}$ for each $x_1\in\bit$,
       and sample $x_1$ from it.
       Let $x_1^*$ be the result.
        \item
       Compute $\Pr[x_2|x_1^*,y]=\frac{\Pr[x_2,x_1^*,y]}{\Pr[x_1^*,y]}$ for each $x_2\in\bit$,
       and sample $x_2$ from it.
       Let $x_2^*$ be the result.
        \item
       Compute $\Pr[x_3|x_2^*,x_1^*,y]=\frac{\Pr[x_3,x_2^*,x_1^*,y]}{\Pr[x_1^*,x_2^*,y]}$ for each $x_3\in\bit$,
       and sample $x_3$ from it.
       Let $x_3^*$ be the result.
       \item
       Repeat until all bits of $x$ are obtained.
    \end{enumerate}
    \end{proof}
    \fi
 
\section{Omitted Preliminaries}\label{sec:omitted_pre}

\ifnum\llncs=1

\fi

\subsection{Poof of \cref{lem:hash2}}
\label{sec:proof_hashing}
Here, we prove \cref{lem:hash2}. As a preparation, we prove the following lemma. 
\begin{lemma}
\label{lem:hash}
Let $\mathcal{H}\coloneqq\{h:\cX\to\cY\}$
be a pairwise-independent hash family such that $|\cX|\ge2$.
Let $S\subseteq \cX$ be a subset of $\cX$. 
For any $y\in\cY$,
\begin{equation}
\Pr_{h\gets\mathcal{H}}[|S\cap h^{-1}(y)|\ge1]
\ge
\frac{|S|}{|\cY|}-
\frac{|S|^2}{2 |\cY|^2}.
\label{ge}
\end{equation}
\end{lemma}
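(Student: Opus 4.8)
The plan is to view the event $\{|S\cap h^{-1}(y)|\ge 1\}$ as a union, over $x\in S$, of the elementary events $E_x:=\{h(x)=y\}$, and then bound the probability of this union from below by the second Bonferroni inequality (inclusion--exclusion truncated after the pairwise terms).

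First I would record the two facts about $\mathcal{H}$ supplied by pairwise independence. For distinct $x,x'\in\cX$ and all $y,y'\in\cY$ we have $\Pr_h[h(x)=y\wedge h(x')=y']=|\cY|^{-2}$; summing this over $y'$ (which is exactly where the hypothesis $|\cX|\ge 2$ enters, ensuring a second point $x'$ exists) yields the uniform marginal $\Pr_h[h(x)=y]=|\cY|^{-1}$. Hence $\Pr[E_x]=|\cY|^{-1}$ for every $x\in S$, and $\Pr[E_x\cap E_{x'}]=|\cY|^{-2}$ for every pair of distinct $x,x'\in S$.

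Next, since $|S\cap h^{-1}(y)|\ge 1$ holds precisely when $h(x)=y$ for some $x\in S$, Bonferroni's inequality gives
\[
\Pr_{h\gets\mathcal{H}}\bigl[|S\cap h^{-1}(y)|\ge 1\bigr]
=\Pr\Bigl[\bigcup_{x\in S}E_x\Bigr]
\ge \sum_{x\in S}\Pr[E_x]-\sum_{\{x,x'\}\subseteq S}\Pr[E_x\cap E_{x'}]
=\frac{|S|}{|\cY|}-\binom{|S|}{2}\frac{1}{|\cY|^2}.
\]
Since $\binom{|S|}{2}=\frac{|S|(|S|-1)}{2}\le\frac{|S|^2}{2}$, the right-hand side is at least $\frac{|S|}{|\cY|}-\frac{|S|^2}{2|\cY|^2}$, establishing \cref{ge}. (When $S=\emptyset$ the statement is the trivial $0\ge 0$.)

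I do not expect a genuine obstacle here: the only points deserving a word of care are the justification that pairwise independence forces the uniform marginal $\Pr[h(x)=y]=|\cY|^{-1}$ (the purpose of assuming $|\cX|\ge 2$), and using Bonferroni in the correct direction, namely truncating the inclusion--exclusion expansion after an even number of terms so that it is a genuine lower bound.
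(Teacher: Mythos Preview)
Your proof is correct and essentially matches the paper's: both compute $\sum_{x\in S}\Pr[h(x)=y]=|S|/|\cY|$ and $\sum_{\{x,x'\}\subseteq S}\Pr[h(x)=h(x')=y]=\binom{|S|}{2}/|\cY|^2$ and combine them via the second Bonferroni bound, though the paper re-derives that bound from scratch through the identity $\Pr[N\ge 1]=\mathbb{E}[N]-\sum_{j\ge 1}(j-1)\Pr[N=j]$ together with $j-1\le\binom{j}{2}$, and treats $|S|\le 1$ by a separate case analysis. Your direct appeal to Bonferroni handles all cases uniformly and is slightly more streamlined.
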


\begin{proof}[Proof of \cref{lem:hash}]
First, if $|S|=0$, \cref{ge} trivially holds.
Second, let us consider the case when $|S|=1$.
In that case,
\begin{align}
\Pr_{h\gets\mathcal{H}}[|S\cap h^{-1}(y)|\ge1]
&=
\frac{1}{|\cY|}\\
&\ge\frac{1}{|\cY|}-
\frac{1}{2 |\cY|^2}\\
&=\frac{|S|}{|\cY|}-
\frac{|S|^2}{2 |\cY|^2},
\end{align}
and therefore \cref{ge} is satisfied.
Here, the first equality comes from the fact that
the probability that the unique element of $S$ is mapped to $y$ is $1/|\cY|$.

Finally, let us consider the case when $|S|\ge2$.
The following argument is based on \cite{stackexchange}.
First, for each $y\in\cY$,
\begin{align}
\sum_{j=1}^{|S|}j \Pr_{h\gets\mathcal{H}}[|S\cap h^{-1}(y)|=j]
&=\underset{h\gets\mathcal{H}}{\mathbb{E}}[|S\cap h^{-1}(y)|]\\
&=\underset{h\gets\mathcal{H}}{\mathbb{E}}[|\{x\in S: h(x)=y\}|]\\
&=\sum_{x\in S}\Pr_{h\gets\mathcal{H}}[h(x)=y]\\
&=\frac{|S|}{|\cY|}.\label{ext1}
\end{align}
Second,
for each $y\in\cY$,
\begin{align}
\sum_{j=1}^{|S|}(j-1) \Pr_{h\gets\mathcal{H}}[|S\cap h^{-1}(y)|=j]
&\le
\sum_{j=1}^{|S|}{j\choose 2} \Pr_{h\gets\mathcal{H}}[|S\cap h^{-1}(y)|=j]\\
&=\underset{h\gets\mathcal{H}}{\mathbb{E}}\left[{|S\cap h^{-1}(y)|\choose 2}\right]\\
&=\underset{h\gets\mathcal{H}}{\mathbb{E}}[|\{\{x,x'\}\subseteq S:x\neq x', h(x)=h(x')=y\}|]\\
&=\sum_{\{x,x'\}\subseteq S, x\neq x'}\Pr_{h\gets\mathcal{H}}[h(x)=h(x')=y]\\
&=\frac{1}{|\cY|^2}{|S|\choose2}\\
&=\frac{|S|(|S|-1)}{2 |\cY|^2}\\
&\le\frac{|S|^2}{2 |\cY|^2}.\label{ext2}
\end{align}
(Note that ${n\choose m}=0$ for any $n<m$.)
By extracting both sides of \cref{ext1} and \cref{ext2},
we have
\begin{eqnarray}
\sum_{j=1}^{|S|}    
\Pr_{h\gets\mathcal{H}}[|S\cap h^{-1}(y)|=j]    
&\ge&
\frac{|S|}{|\cY|}-
\frac{|S|^2}{2 |\cY|^2},
\end{eqnarray}
which shows \cref{lem:hash}.
\end{proof}

Then we prove \cref{lem:hash2}. For the reader's convenience, we restate \cref{lem:hash2} below. 
\begin{lemma}[Restatement of \cref{lem:hash2}]\label{lem:hash2again}
Let $\mathcal{H}\coloneqq\{h:\cX\to\cY\}$
be a pairwise-independent hash family such that $|\cX|\ge2$.
Let $S\subseteq\cX$ be a subset of $\cX$.
For any $y\in\cY$,
\begin{equation}
\Pr_{h\gets\mathcal{H}}[|S\cap h^{-1}(y)|=1]\ge
\frac{|S|}{|\cY|}
-\frac{|S|^2}{|\cY|^2}.
\label{ge2}
\end{equation}
\end{lemma}

\begin{proof}[Proof of \cref{lem:hash2again}]
For any $y\in \cY$,
\begin{align}
\frac{|S|}{|\cY|}
&=
\sum_{j=1}^{|S|}j \Pr_{h\gets\mathcal{H}}[|S\cap h^{-1}(y)|=j]\\
&\ge
\Pr_{h\gets\mathcal{H}}[|S\cap h^{-1}(y)|= 1]
+2\Pr_{h\gets\mathcal{H}}[|S\cap h^{-1}(y)|\ge2]\\
&=
2\Pr_{h\gets\mathcal{H}}[|S\cap h^{-1}(y)|\ge1]
-\Pr_{h\gets\mathcal{H}}[|S\cap h^{-1}(y)|=1]\\
&\ge
\frac{2|S|}{|\cY|}-
\frac{|S|^2}{|\cY|^2}
-\Pr_{h\gets\mathcal{H}}[|S\cap h^{-1}(y)|=1].
\end{align}
Here, 
the first equality is from \cref{ext1},
and in the last inequality we have used \cref{lem:hash}.
Therefore,
\begin{align}
\Pr_{h\gets\mathcal{H}}[|S\cap h^{-1}(y)|=1]
&\ge
\frac{2|S|}{|\cY|}-
\frac{|S|^2}{|\cY|^2}
-\frac{|S|}{|\cY|}\\
&=
\frac{|S|}{|\cY|}
-\frac{|S|^2}{|\cY|^2},
\end{align}
which shows \cref{lem:hash2again} (which is equivalent to \cref{lem:hash2}).
\end{proof}

\subsection{Auxiliary-Input Collision-Resistance and $\mathbf{PWPP}\nsubseteq \mathbf{FBPP}$}\label{sec:AXCRH}
We prepare several definitions.
\begin{definition}[Auxiliary-input collision-resistant hash functions]
An auxiliary-input collision-resistant hash function is a polynomial-time computable
auxiliary-input function ensemble $\mathcal{H}\coloneqq \{h_\sigma:\bit^{p(|\sigma|)}\to \bit^{q(|\sigma|)}\}_{\sigma\in\bit^*}$
such that 
$q(|\sigma|)<p(|\sigma|)$ and
for every
uniform PPT adversary $\cA$ and polynomial $\poly$, there exists an infinite set $\Lambda\subseteq\bit^*$ such that,
\begin{equation}
\Pr[h_\sigma(x')=h_\sigma(x): (x,x')\gets\cA(\sigma)]
\le\frac{1}{\poly(|\sigma|)}
\end{equation}
for all $\sigma\in \Lambda$.
\end{definition}

A search problem is specified by a relation $R\subseteq \bit^*\times \bit^*$ where one is given $x\in \bit^*$ and asked to find $y\in \bit^*$ such that $(x,y)\in R$. 
We say that a search problem $R$ is many-one reducible to another search problem $S$ 
if there are polynomial-time computable functions $f$ and $g$ such that $(f(x),y)\in S$ implies $(x,g(x,y))\in R$.

A search problem $\text{Weak Pigeon}$ is defined as follows: 
$(C,(x,x'))\in \text{Weak Pigeon}$ if and only if $C$ represents a circuit from $\bit^m$ to $\bit^n$ for $m>n$, $x\neq x'$, and $C(x)=C(x')$.

\begin{definition}[\cite{Jerabek16}]
$\mathbf{PWPP}$ is the class of all search problems that are many-one reducible to $\text{Weak Pigeon}$.
\end{definition}
\begin{definition}[\cite{STOC:Aaronson10}]
$\mathbf{FBPP}$ is the class of all search problems $R$ for which there exists a PPT algorithm $\A$ such that for any $x\in \bit^*$,
\begin{align}
\Pr[(x,y)\in R: y \gets \A(x)]= 1-o(1). 
\end{align}
\end{definition}

The following theorem almost directly follows from the definitions. 
\begin{theorem}
There exist auxiliary-input collision-resistant hash functions if and only if $\mathbf{PWPP}\nsubseteq \mathbf{FBPP}$. 
\end{theorem}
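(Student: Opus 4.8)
The claim is an ``if and only if'' between the existence of auxiliary-input collision-resistant hash functions (AI-CRH) and the separation $\mathbf{PWPP}\nsubseteq\mathbf{FBPP}$, and the remark that it ``almost directly follows from the definitions'' is accurate: the plan is to exhibit a tight correspondence between an AI-CRH family and the $\text{Weak Pigeon}$ search problem, with the auxiliary input $\sigma$ playing the role of the circuit $C$, and then translate a would-be $\mathbf{FBPP}$ solver into a collision-finding adversary and vice versa.

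\textbf{Direction 1: AI-CRH $\Rightarrow$ $\mathbf{PWPP}\nsubseteq\mathbf{FBPP}$.} Suppose $\mathcal{H}=\{h_\sigma:\bit^{p(|\sigma|)}\to\bit^{q(|\sigma|)}\}$ is an AI-CRH, so $q(|\sigma|)<p(|\sigma|)$ and $h_\sigma$ is computable by a polynomial-size circuit uniformly generated from $\sigma$. First I would define a search problem $R$ whose instances are (essentially) the strings $\sigma$: on input $\sigma$, interpret it as specifying the compression circuit $C_\sigma$ computing $h_\sigma$ on $p(|\sigma|)$ bits, and declare $(\sigma,(x,x'))\in R$ iff $x\neq x'$ and $h_\sigma(x)=h_\sigma(x')$. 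Since $C_\sigma$ is a circuit from $\bit^{p(|\sigma|)}$ to $\bit^{q(|\sigma|)}$ with $p>q$, the map $\sigma\mapsto C_\sigma$ together with the identity on witnesses is a many-one reduction of $R$ to $\text{Weak Pigeon}$, hence $R\in\mathbf{PWPP}$. Now if $R\in\mathbf{FBPP}$, there is a PPT $\A$ finding a collision for $h_\sigma$ with probability $1-o(1)$ on \emph{every} $\sigma$; in particular, for every polynomial $\poly$ the success probability exceeds $1/\poly(|\sigma|)$ for all but finitely many $\sigma$, so there is no infinite set $\Lambda$ on which $\A$'s advantage stays below $1/\poly(|\sigma|)$. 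This contradicts collision-resistance of $\mathcal{H}$. Hence $R\notin\mathbf{FBPP}$, so $\mathbf{PWPP}\nsubseteq\mathbf{FBPP}$. (One should double-check the cosmetic point that $\mathbf{FBPP}$'s success requirement $1-o(1)$ is incompatible with AI-security's ``infinitely many hard $\sigma$''; this is immediate since $1-o(1)\ge 1-1/\poly$ eventually.)

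\textbf{Direction 2: $\mathbf{PWPP}\nsubseteq\mathbf{FBPP}$ $\Rightarrow$ AI-CRH.} Take a search problem $R\in\mathbf{PWPP}\setminus\mathbf{FBPP}$, with a many-one reduction $(f,g)$ from $R$ to $\text{Weak Pigeon}$: $f(x)$ outputs a circuit $C_x:\bit^{m(|x|)}\to\bit^{n(|x|)}$ with $m>n$, and $(C_x,(u,u'))\in\text{Weak Pigeon}\Rightarrow(x,g(x,(u,u')))\in R$. Define the auxiliary-input function ensemble $h_x\coloneqq C_x$, i.e. on auxiliary input $\sigma=x$ set $h_\sigma:\bit^{m(|\sigma|)}\to\bit^{n(|\sigma|)}$ to be $u\mapsto C_x(u)$; this is polynomial-time computable since $f$ and circuit evaluation are, and it is compressing because $m>n$. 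For collision-resistance: if some PPT $\A$ and polynomial $\poly$ violated it, then $\A(\sigma)$ would find a collision of $C_\sigma$ with probability at least $1/\poly(|\sigma|)$ for all but finitely many $\sigma$; composing with $g$ then solves $R$ with the same probability on all but finitely many instances. I would then amplify success to $1-o(1)$ by independent repetition (run $\A$ polynomially many times, using the fact that a valid witness is efficiently checkable, so we can output the first success), which would place $R\in\mathbf{FBPP}$ — a contradiction. Therefore $h$ is an AI-CRH.

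\textbf{Main obstacle.} The conceptual content is thin, but two bookkeeping points deserve care and are where I expect to spend the effort. (i) Matching the quantifier structure: AI-security asks for an infinite set $\Lambda$ of hard inputs \emph{for each} $(\A,\poly)$, whereas $\mathbf{FBPP}$ demands success $1-o(1)$ on \emph{all} inputs — one must argue the failure of one is exactly the presence of the other, including the amplification step in Direction 2 and the observation that witnesses are polynomial-time verifiable (true for $\text{Weak Pigeon}$ and preserved under many-one reductions, so fine). (ii) Instance-length vs. input-length conventions: the reduction $(f,g)$ may change lengths, and the AI-CRH definition parametrises everything by $|\sigma|$; I would absorb this by letting $\sigma$ directly be the $\mathbf{PWPP}$-instance $x$ and noting $|C_x|,m(|x|),n(|x|)$ are all $\poly(|x|)$, so the polynomial bounds transfer cleanly. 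Once these are pinned down the two implications close, giving the biconditional.
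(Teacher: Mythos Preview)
Your proposal is correct and essentially matches the paper's proof. The one simplification the paper makes is in Direction~2: instead of going through an arbitrary $R\in\mathbf{PWPP}\setminus\mathbf{FBPP}$ and its reduction $(f,g)$ to Weak Pigeon, the paper lets $\sigma$ directly encode a compressing circuit and sets $h_\sigma$ to be circuit evaluation (with a trivial fallback when $\sigma$ is not a valid circuit); then breaking this universal $\mathcal{H}$ is literally solving Weak Pigeon, which sidesteps your obstacle~(ii) about instance-length conventions entirely.
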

\begin{proof}
We first show the ``if'' direction. Toward contradiction, we assume that there is no auxiliary-input collision-resistant hash function. 
We consider a polynomial-time computable auxiliary-input function ensemble $\mathcal{H}=\{h_\sigma\}$ defined as follows: if $\sigma$ represents a circuit $C:\bit^n \rightarrow \bit^m$ such that $n>m$, then $h_\sigma$ takes $x\in \bit^n$ and outputs $C(x)\in \bit^m$. Otherwise, $h_{\sigma}$ is defined arbitrarily, (say, $h_{\sigma}$ is a constant function that always outputs $0$). By our assumption, this is not an auxiliary-input collision-resistant hash function, so there is a PPT algorithm $\A$ and a polynomial $\poly$ such that 
\begin{equation}
\Pr[h_\sigma(x')=h_\sigma(x): (x,x')\gets\cA(\sigma)]
\ge1/\poly(|\sigma|)
\end{equation}
for all but finitely many $\sigma$. 
For those $\sigma$, we can amplify the success probability of $\A$ to $1-o(1)$ by polynomially many times repetition. This gives a PPT algorithm that solves Weak Pigeon with probability $1-o(1)$, which contradicts $\mathbf{PWPP}\nsubseteq \mathbf{FBPP}$.
Therefore, the above $\mathcal{H}$ must be an auxiliary-input collision-resistant hash function. 

Next, we prove the ``only if'' direction. Toward contradiction, we assume $\mathbf{PWPP}\subseteq \mathbf{FBPP}$. This means that there is a PPT algorithm that solves Weak Pigeon with probability $1-o(1)$ on all instances. 
Let $\mathcal{H}=\{h_\sigma\}$ be an arbitrary polynomial-time computable auxiliary-input function ensemble. Then we can use the algorithm that solves Weak Pigeon to find a collision of $h_\sigma$ for all auxiliary inputs $\sigma$ with probability $1-o(1)$. This means that $\mathcal{H}=\{h_\sigma\}$ is not an auxiliary-input collision-resistant hash function. This contradicts the existence of auxiliary-input collision-resistant hash functions. Thus, we have $\mathbf{PWPP}\nsubseteq \mathbf{FBPP}$.
\end{proof}

\subsection{Auxiliary-Input Commitments from $\mathbf{SZK}\nsubseteq \mathbf{BPP}$}\label{sec:commitment_from_worst_case_SZK}
We prove \cref{thm:const_AXcommitment_from_SZK}, i.e., we construct constant-round auxiliary-input statistically-hiding and computationally-binding bit commitments assuming $\mathbf{SZK}\nsubseteq \mathbf{BPP}$. 

First, we recall the definition of instance-dependent commitments.
\begin{definition}[Instance-dependent commitment]
Instance-dependent commitments for a promise problem $(L_{\mathsf{yes}},L_{\mathsf{no}})$ is a family of commitment schemes $\{\Pi_\sigma\}_{\sigma\in \bit^*}$ such that 
\begin{itemize}
\item $\Pi_\sigma$ is statistically hiding if $\sigma\in L_{\mathsf{yes}}$,
\item $\Pi_\sigma$ is statistically binding if $\sigma\in L_{\mathsf{no}}$.
\end{itemize} 
\end{definition}
\begin{theorem}[\cite{TCC:OngVad08}]
\label{thm:instance_dependent_szk}
For any promise problem $(L_{\mathsf{yes}},L_{\mathsf{no}})\in \mathbf{SZK}$, there is a constant-round instance-dependent commitments for $(L_{\mathsf{yes}},L_{\mathsf{no}})$. 
\end{theorem}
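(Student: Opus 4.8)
The plan is to reconstruct the Ong--Vadhan argument by reducing to a single $\mathbf{SZK}$-complete problem and building the instance-dependent commitment there. By the Sahai--Vadhan completeness theorem, the Statistical Difference problem $\mathsf{SD}=\mathsf{SD}^{1/3,2/3}$ is complete for $\mathbf{SZK}$ under polynomial-time (Karp) reductions, where an $\mathsf{SD}$ instance is a pair of circuits $(C_0,C_1)$ sampling distributions on $\{0,1\}^m$, YES instances satisfy $\Delta(C_0,C_1)\le 1/3$ and NO instances satisfy $\Delta(C_0,C_1)\ge 2/3$. Thus it suffices to construct a constant-round instance-dependent commitment $\{\Pi_{(C_0,C_1)}\}$ for $\mathsf{SD}$: given a general $(L_{\mathsf{yes}},L_{\mathsf{no}})\in\mathbf{SZK}$ with reduction $\sigma\mapsto R(\sigma)$ into $\mathsf{SD}$, I would set $\Pi_\sigma\seteq\Pi_{R(\sigma)}$. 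Since $R$ maps YES instances to YES instances and NO instances to NO instances, $\Pi_\sigma$ is statistically hiding when $\sigma\in L_{\mathsf{yes}}$ and statistically binding when $\sigma\in L_{\mathsf{no}}$, and prepending a polynomial-time computation of $R(\sigma)$ does not change the round complexity.

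Next I would normalize the $\mathsf{SD}$ instance. First apply the Sahai--Vadhan polarization lemma to push the gap to $[2^{-n},1-2^{-n}]$, replacing $(C_0,C_1)$ by an efficiently computable $(C_0',C_1')$ with
\[
\text{YES instances:}\ \Delta(C_0',C_1')\le 2^{-n},\qquad \text{NO instances:}\ \Delta(C_0',C_1')\ge 1-2^{-n}.
\]
Then apply the flattening lemma (take many independent copies and XOR with pairwise-independent hashes) so that, up to negligible statistical error, each $C_b'$ is $\varepsilon$-close to the uniform distribution on a set $S_b$ of known size $\approx 2^k$, where in a YES instance $S_0$ and $S_1$ essentially coincide while in a NO instance $S_0\cap S_1$ has density $\le 2^{-\Omega(n)}$ inside each $S_b$.

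The heart of the argument is the constant-round commitment itself. To commit to $b\in\{0,1\}$: the receiver sends fresh pairwise-independent hash functions (and, if needed, random targets) that ``thin'' the range $\{0,1\}^m$ to a length calibrated to $k$; the sender draws $r$ uniformly and reveals the thinned image of $C_b'(r)$ together with enough consistency information to open $(b,r)$ later (the opening is checked by recomputing). Statistical hiding on YES instances follows from the Leftover Hash Lemma: since $C_0'$ and $C_1'$ are near-flat on essentially the same set, their thinned images are each statistically close to uniform on the short range, hence to each other, so the receiver learns essentially nothing about $b$. Statistical binding on NO instances follows because, with overwhelming probability over the receiver's hashes, the two thinned supports (for $b=0$ and for $b=1$) are disjoint --- this is where near-disjointness of $S_0$ and $S_1$ is used --- so the sender's first message information-theoretically determines $b$ for all but a negligible fraction of receiver coins.

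The main obstacle I anticipate is exactly this binding step. Statistical binding must hold against an \emph{unbounded} cheating sender, so one cannot argue on average over sender strategies but must rule out \emph{every} first message jointly consistent with both bits; making the ``thinned supports are disjoint'' event hold with overwhelming (not merely constant) probability over a \emph{single} round of receiver hashes, without resorting to sequential repetition that would break the constant round count, forces a careful choice of the hash output lengths relative to $k$ and the $2^{-\Omega(n)}$ density bound, together with a union bound over the ($\le 2^{O(k)}$-many) potentially-equivocating thinned images. The remaining work is bookkeeping: tracking the negligible slack introduced by polarization and flattening, and confirming that the full pipeline --- complete-problem reduction, polarization, flattening, hashing --- stays constant-round.
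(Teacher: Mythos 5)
The paper does not prove this statement; it imports it verbatim from \cite{TCC:OngVad08}, so there is nothing internal to compare against. Judged on its own terms, your reconstruction has a genuine gap in the binding step, and it is precisely the gap that makes the Ong--Vadhan theorem hard.

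The problem is that on a NO instance, polarization gives $\Delta(C_0',C_1')\ge 1-2^{-n}$, and flattening then tells you that the typical sets $S_0,S_1$ overlap in at most a $2^{-\Omega(n)}$ \emph{fraction} of each --- but that fraction of a set of size $\approx 2^{k}$ is still a set of size up to $2^{k-\Omega(n)}$, which is nonempty and in general exponentially large. An unbounded cheating sender simply computes some $y\in S_0\cap S_1$ together with preimages $r_0,r_1$ satisfying $C_0'(r_0)=C_1'(r_1)=y$, commits to (the thinned image of) $y$, and later opens to either bit. No choice of receiver hash functions can prevent this: hashing is applied \emph{after} the circuits, so it can only merge distinct points of $S_0$ and $S_1$ (creating new equivocation opportunities), never separate a point that already lies in both supports. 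Your ``thinned supports are disjoint with overwhelming probability'' claim is therefore false whenever $S_0\cap S_1\neq\emptyset$, which is the generic case --- the polarization lemma of \cite{JACM:SahVad03} cannot push the gap to exactly $1$, and a union bound over receiver coins cannot kill a counterexample that holds for every fixing of those coins. This is exactly why the naive ``hash a sample of $C_b$'' scheme yields statistical binding only for promise problems whose NO instances have \emph{literally disjoint} supports, a class not known to contain all of $\mathbf{SZK}$. (Your hiding argument is fine, and in fact the Leftover Hash Lemma is unnecessary there: $\Delta(C_0',C_1')\le 2^{-n}$ plus data processing already gives statistical hiding.)

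The actual proof in \cite{TCC:OngVad08} takes a substantially different route around this obstacle: rather than aiming directly for standard binding, it builds on the characterization of Nguyen and Vadhan (STOC 2006) via instance-dependent commitments satisfying only the weaker \emph{1-out-of-2 binding} property (the sender runs two interleaved commitment sessions and is bound in at least one of them, without the receiver knowing which), and the technical core of Ong--Vadhan is a transformation from such 1-out-of-2-binding schemes into ordinary statistically-binding-on-NO schemes while preserving statistical hiding on YES instances and constant round complexity. If you want to complete your write-up, you would need to either restrict to a ``disjoint supports'' complete problem (which is not available for all of $\mathbf{SZK}$) or incorporate that 1-out-of-2-binding machinery; the direct hashing construction cannot be repaired by tuning output lengths.
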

Instance-dependent commitments for $\mathbf{SZK}$ directly give auxiliary-input statistically-hiding and computationally-binding commitments if $\mathbf{SZK}\nsubseteq  \mathbf{BPP}$. 
\begin{proof}[Proof of \cref{thm:const_AXcommitment_from_SZK}]
Fix a $\mathbf{SZK}$-complete promise problem $(L_{\mathsf{yes}},L_{\mathsf{no}})$. (For example, we can take the statistical difference problem \cite{JACM:SahVad03}.) 
We regard instance-dependent commitments for $(L_{\mathsf{yes}},L_{\mathsf{no}})$ as auxiliary-input commitments where $\Sigma\seteq L_{\mathsf{yes}}$. Then correctness and statistical hiding as   auxiliary-input commitments directly follows from those of instance-dependent commitments noting that they are statistically hiding if $\sigma \in L_{\mathsf{yes}}$.
Suppose that this scheme does not satisfy computational binding. Then there is a PPT malicious committer $\cS^*$ that finds decommitments to both $0$ and $1$ with an inverse-polynomial probability for all but finitely many $\sigma\in L_{\mathsf{yes}}$. 
We can augment it to construct $\widetilde{\cS}^*$ that works for all  $\sigma\in L_{\mathsf{yes}}$ by letting it find decommitments by brute-force for $\sigma$ for which $\cS^*$ fails to find decommitments. Since there are only finitely many such  $\sigma$, $\widetilde{\cS}^*$ still runs in PPT.  
On the other hand, if $\sigma\in L_{\mathsf{no}}$, $\widetilde{\cS}^*$ cannot find decommitments to $0$ and $1$ except for a negligible probability. Therefore, we can use $\widetilde{\cS}^*$ to distinguish elements of $L_{\mathsf{yes}}$ and $L_{\mathsf{no}}$ with an inverse-polynomial distinguishing gap. Since $(L_{\mathsf{yes}},L_{\mathsf{no}})$ is $\mathbf{SZK}$-complete, this means $\mathbf{SZK}\subseteq \mathbf{BPP}$. This is a contradiction, and thus the above commitment scheme satisfies computational binding. 
\end{proof}
\section{Omitted Proofs for the Completeness}
\label{app:completeness}
In this appendix, we show that if $\cP$ starts  
Step \ref{step:send_v1_and_xi}
of \cref{protocol:PoQ1}
with the state $\ket{0}\ket{x_0}+\ket{1}\ket{x_1}$,
the probability that $\cV_2$ outputs $\top$ is
$\frac{1}{2}+\frac{1}{2}\cos^2\frac{\pi}{8}$.
The proof is the same as that in \cite{NatPhys:KMCVY22}, but for the convenience of readers,
we provide it here.

First, the probability that $\cV_1$ chooses $v_1=0$ is 1/2, and in that case,
the honest $\cP$ sends correct $x_0$ or $x_1$, and therefore
$\cV_2$ outputs $\top$ with probability 1.

Second, the probability that $\cV_1$ chooses $v_1=1$ is 1/2.
In that case, \cref{thestate3} is $\ket{\xi\cdot x_0}+(-1)^{d\cdot(x_0\oplus x_1)}\ket{1\oplus (\xi\cdot x_1)}$,
which is one of the BB84 states $\{\ket{0},\ket{1},\ket{+},\ket{-}\}$.
By the straightforward calculations, we can show that
\begin{align}
&\Pr[\eta=0|\ket{0},v_2=0]\\
&\Pr[\eta=1|\ket{1},v_2=0]\\
&\Pr[\eta=0|\ket{+},v_2=0]\\
&\Pr[\eta=1|\ket{-},v_2=0]\\
&\Pr[\eta=0|\ket{0},v_2=1]\\
&\Pr[\eta=1|\ket{1},v_2=1]\\
&\Pr[\eta=1|\ket{+},v_2=1]\\
&\Pr[\eta=0|\ket{-},v_2=1]
\end{align}
are all equal $\cos^2\frac{\pi}{8}$.
Then, we can confirm that the probability that
 \begin{align}
    (\xi\cdot x_0\neq \xi\cdot x_1) \wedge
    (\eta=\xi\cdot x_0),
    \end{align}
    or
    \begin{align}
    (\xi\cdot x_0 = \xi\cdot x_1) \wedge
    (\eta=v_2\oplus d\cdot(x_0\oplus x_1))
    \end{align}
    occurs is $\cos^2\frac{\pi}{8}$.
    In fact, if $\xi\cdot x_0\neq \xi\cdot x_1$, $\cP$'s state is $\ket{0}$ or $\ket{1}$.
    In that case, the probability that $\eta=\xi\cdot x_0$ is 
    $\cos^2\frac{\pi}{8}$ for any $v_2\in\bit$.
    If $\xi\cdot x_0= \xi\cdot x_1$, $\cP$'s state is $\ket{+}$ or $\ket{-}$.
    In that case, the probability that 
    $\eta=v_2\oplus d\cdot(x_0\oplus x_1)$
    is $\cos^2\frac{\pi}{8}$.
    
\section{Distributionally OWFs}
\label{sec:dOWFs}

\begin{definition}[Distributionally OWFs~\cite{FOCS:ImpLub89}]
\label{def:dOWFs}
A polynomial-time-computable function $f:\bit^*\to\bit^*$ is distributionally one-way
if there exists a polynomial $p$ such that for any PPT
algorithm $\cA$ and all sufficiently large $\secp$, the statistical difference between 
$\{(x,f(x))\}_{x\gets\bit^\secp}$ and $\{(\cA(1^\secp,f(x)),f(x))\}_{x\gets\bit^\secp}$ is greater than $1/p(\secp)$.
\end{definition}

The following result is known.
\begin{lemma}[\cite{FOCS:ImpLub89}]
\label{lem:dOWFs}
If distributionally OWFs exist then OWFs exist.
\end{lemma}

\fi

\ifnum\submission=0
\bibliographystyle{alpha} 
\else
\bibliographystyle{splncs04}
\fi
\bibliography{abbrev3,crypto,reference}

\newcommand{\etalchar}[1]{$^{#1}$}
\begin{thebibliography}{KMCVY22}

\bibitem[AA11]{STOC:AarArk11}
Scott Aaronson and Alex Arkhipov.
\newblock The computational complexity of linear optics.
\newblock In Lance Fortnow and Salil~P. Vadhan, editors, {\em 43rd ACM STOC},
  pages 333--342. {ACM} Press, June 2011.

\bibitem[AA14]{TC:AarAmb14}
Scott Aaronson and Andris Ambainis.
\newblock The need for structure in quantum speedups.
\newblock {\em Theory Comput.}, 10:133--166, 2014.

\bibitem[AA15]{STOC:AarAmb15}
Scott Aaronson and Andris Ambainis.
\newblock Forrelation: {A} problem that optimally separates quantum from
  classical computing.
\newblock In Rocco~A. Servedio and Ronitt Rubinfeld, editors, {\em 47th ACM
  STOC}, pages 307--316. {ACM} Press, June 2015.

\bibitem[Aar08]{Aaronson08}
Scott Aaronson.
\newblock On perfect completeness for qma.
\newblock {\em arXiv:0806.0450}, 2008.

\bibitem[Aar10]{STOC:Aaronson10}
Scott Aaronson.
\newblock {BQP} and the polynomial hierarchy.
\newblock In Leonard~J. Schulman, editor, {\em 42nd ACM STOC}, pages 141--150.
  {ACM} Press, June 2010.

\bibitem[Aar14]{Aar14}
Scott Aaronson.
\newblock The equivalence of sampling and searching.
\newblock {\em Theory of Computing Systems}, 55:281--298, 2014.

\bibitem[AB09]{AroraBarak09}
Sanjeev Arora and Boaz Barak.
\newblock {\em Computational Complexity - {A} Modern Approach}.
\newblock Cambridge University Press, 2009.

\bibitem[AC17]{CCC:AarChe17}
Scott Aaronson and Lijie Chen.
\newblock Complexity-theoretic foundations of quantum supremacy experiments.
\newblock CCC'17: Proceedings of the 32nd Computational Complexity Conference,
  2017.

\bibitem[ACC{\etalchar{+}}22]{ACCGSW22}
Atul~Singh Arora, Andrea Coladangelo, Matthew Coudron, Alexandru Gheorghiu,
  Uttam Singh, and Hendrik Waldner.
\newblock Quantum depth in the random oracle model.
\newblock {\em arXiv:2210.06454}, 2022.

\bibitem[AG19]{AarGun19}
Scott Aaronson and Sam Gunn.
\newblock On the classical hardness of spoofing linear cross-entropy
  benchmarking.
\newblock {\em arXiv:1910.12085}, 2019.

\bibitem[AGQY22]{TCC:Luowen}
Prabhanjan Ananth, Aditya Gulati, Luowen Qian, and Henry Yuen.
\newblock Pseudorandom (function-like) quantum state generators: New
  definitions and applications.
\newblock TCC, 2022.

\bibitem[AQY22]{C:AnaQiaYue22}
Prabhanjan Ananth, Luowen Qian, and Henry Yuen.
\newblock Cryptography from pseudorandom quantum states.
\newblock In Yevgeniy Dodis and Thomas Shrimpton, editors, {\em CRYPTO~2022,
  Part~I}, volume 13507 of {\em {LNCS}}, pages 208--236. Springer, Heidelberg,
  August 2022.

\bibitem[AR16]{C:AppRay16}
Benny Applebaum and Pavel Raykov.
\newblock On the relationship between statistical zero-knowledge and
  statistical randomized encodings.
\newblock In Matthew Robshaw and Jonathan Katz, editors, {\em CRYPTO~2016,
  Part~III}, volume 9816 of {\em {LNCS}}, pages 449--477. Springer, Heidelberg,
  August 2016.

\bibitem[Bar01]{FOCS:Barak01}
Boaz Barak.
\newblock How to go beyond the black-box simulation barrier.
\newblock In {\em 42nd FOCS}, pages 106--115. {IEEE} Computer Society Press,
  October 2001.

\bibitem[BBBV97]{SICOMP:BBBV97}
Charles~H. Bennett, Ethan Bernstein, Gilles Brassard, and Umesh~V. Vazirani.
\newblock Strengths and weaknesses of quantum computing.
\newblock {\em {SIAM} J. Comput.}, 26(5):1510--1523, 1997.

\bibitem[BBF13]{AC:BaeBrzFis13}
Paul Baecher, Christina Brzuska, and Marc Fischlin.
\newblock Notions of black-box reductions, revisited.
\newblock In Kazue Sako and Palash Sarkar, editors, {\em ASIACRYPT~2013,
  Part~I}, volume 8269 of {\em {LNCS}}, pages 296--315. Springer, Heidelberg,
  December 2013.

\bibitem[BCCT12]{ITCS:BCCT12}
Nir Bitansky, Ran Canetti, Alessandro Chiesa, and Eran Tromer.
\newblock From extractable collision resistance to succinct non-interactive
  arguments of knowledge, and back again.
\newblock In Shafi Goldwasser, editor, {\em ITCS 2012}, pages 326--349. {ACM},
  January 2012.

\bibitem[BCM{\etalchar{+}}21]{JACM:BCMVV21}
Zvika Brakerski, Paul Christiano, Urmila Mahadev, Umesh Vazirani, and Thomas
  Vidick.
\newblock A cryptographic test of quantumness and certifiable randomness from a
  single quantum device.
\newblock {\em Journal of the {ACM}}, 68(5):31:1--31:47, 2021.

\bibitem[BCQ23]{ITCS:BraCanQia23}
Zvika Brakerski, Ran Canetti, and Luowen Qian.
\newblock On the computational hardness needed for quantum cryptography.
\newblock ITCS 2023: 14th Innovations in Theoretical Computer Science, 2023.

\bibitem[BDRV18]{EC:BDRV18}
Itay Berman, Akshay Degwekar, Ron~D. Rothblum, and Prashant~Nalini Vasudevan.
\newblock Multi-collision resistant hash functions and their applications.
\newblock In Jesper~Buus Nielsen and Vincent Rijmen, editors, {\em
  EUROCRYPT~2018, Part~II}, volume 10821 of {\em {LNCS}}, pages 133--161.
  Springer, Heidelberg, April~/~May 2018.

\bibitem[BFNV19]{NatPhys:BFNV19}
Adam Bouland, Bill Fefferman, Chinmay Nirkhe, and Umesh Vazirani.
\newblock On the complexity and verification of quantum random circuit
  sampling.
\newblock {\em Nature Physics}, 15:159--163, 2019.

\bibitem[BGI{\etalchar{+}}12]{JACM:BGIRSVY12}
Boaz Barak, Oded Goldreich, Russell Impagliazzo, Steven Rudich, Amit Sahai,
  Salil~P. Vadhan, and Ke~Yang.
\newblock On the (im)possibility of obfuscating programs.
\newblock {\em Journal of the {ACM}}, 59(2):6:1--6:48, 2012.

\bibitem[BGK18]{BraGosKon18}
Sergey Bravyi, David Gosset, and Robert K\"onig.
\newblock Quantum advantage with shallow circuits.
\newblock {\em Science}, 362:308--311, 2018.

\bibitem[BGKT20]{BraGosKonTom20}
Sergey Bravyi, David Gosset, Robert K\"onig, and Marco Tomamichel.
\newblock Quantum advantage with noisy shallow circuits.
\newblock {\em Nature Physics}, 16:1040--1045, 2020.

\bibitem[BHKY19]{EC:BHKY19}
Nir Bitansky, Iftach Haitner, Ilan Komargodski, and Eylon Yogev.
\newblock Distributional collision resistance beyond one-way functions.
\newblock In Yuval Ishai and Vincent Rijmen, editors, {\em EUROCRYPT~2019,
  Part~III}, volume 11478 of {\em {LNCS}}, pages 667--695. Springer,
  Heidelberg, May 2019.

\bibitem[BJS11]{BreJozShe10}
Michael~J. Bremner, Richard Jozsa, and Dan~J. Shepherd.
\newblock Classical simulation of commuting quantum computations implies
  collapse of the polynomial hierarchy.
\newblock {\em Proceedings of the Royal Society A: Mathematical, Physical and
  Engineering Sciences}, 467:459--472, 2011.

\bibitem[BKVV20]{TQC:BKVV20}
Zvika Brakerski, Venkata Koppula, Umesh Vazirani, and Thomas Vidick.
\newblock Simpler proofs of quantumness.
\newblock In Steven~T. Flammia, editor, {\em 15th Conference on the Theory of
  Quantum Computation, Communication and Cryptography, {TQC} 2020, June 9-12,
  2020, Riga, Latvia}, volume 158 of {\em LIPIcs}, pages 8:1--8:14. Schloss
  Dagstuhl - Leibniz-Zentrum f{\"{u}}r Informatik, 2020.

\bibitem[BMS16]{BreMonShe16}
Michael~J. Bremner, Ashley Montanaro, and Dan~J. Shepherd.
\newblock Average-case complexity versus approximate simulation of commuting
  quantum computations.
\newblock {\em Physical Review Letters}, 117:080501, 2016.

\bibitem[CD08]{ICALP:CanDak08}
Ran Canetti and Ronny~Ramzi Dakdouk.
\newblock Extractable perfectly one-way functions.
\newblock In Luca Aceto, Ivan Damg{\aa}rd, Leslie~Ann Goldberg, Magn{\'u}s~M.
  Halld{\'o}rsson, Anna Ing{\'o}lfsd{\'o}ttir, and Igor Walukiewicz, editors,
  {\em ICALP 2008, Part~II}, volume 5126 of {\em {LNCS}}, pages 449--460.
  Springer, Heidelberg, July 2008.

\bibitem[CD09]{TCC:CanDak09}
Ran Canetti and Ronny~Ramzi Dakdouk.
\newblock Towards a theory of extractable functions.
\newblock In Omer Reingold, editor, {\em TCC~2009}, volume 5444 of {\em
  {LNCS}}, pages 595--613. Springer, Heidelberg, March 2009.

\bibitem[CGH04]{CGH04}
Ran Canetti, Oded Goldreich, and Shai Halevi.
\newblock The random oracle methodology, revisited.
\newblock {\em J. {ACM}}, 51(4):557--594, 2004.

\bibitem[CHS05]{TCC:CanHalSte05}
Ran Canetti, Shai Halevi, and Michael Steiner.
\newblock Hardness amplification of weakly verifiable puzzles.
\newblock In Joe Kilian, editor, {\em TCC~2005}, volume 3378 of {\em {LNCS}},
  pages 17--33. Springer, Heidelberg, February 2005.

\bibitem[CX22]{EPRINT:CaoXue22b}
Shujiao Cao and Rui Xue.
\newblock On constructing one-way quantum state generators, and more.
\newblock Cryptology ePrint Archive, Report 2022/1323, 2022.
\newblock \url{https://eprint.iacr.org/2022/1323}.

\bibitem[Dam92]{C:Damgaard91}
Ivan Damg{\aa}rd.
\newblock Towards practical public key systems secure against chosen ciphertext
  attacks.
\newblock In Joan Feigenbaum, editor, {\em CRYPTO'91}, volume 576 of {\em
  {LNCS}}, pages 445--456. Springer, Heidelberg, August 1992.

\bibitem[DI06]{STOC:DubIsh06}
Bella Dubrov and Yuval Ishai.
\newblock On the randomness complexity of efficient sampling.
\newblock In Jon~M. Kleinberg, editor, {\em 38th ACM STOC}, pages 711--720.
  {ACM} Press, May 2006.

\bibitem[FKM{\etalchar{+}}18]{FKMNTT18}
Keisuke Fujii, Hirotada Kobayashi, Tomoyuki Morimae, Harumichi Nishimura,
  Seiichiro Tani, and Shuhei Tamate.
\newblock Impossibility of classically simulating one-clean-qubit model with
  multiplicative error.
\newblock {\em Physical Review Letters}, 120:200502, 2018.

\bibitem[FR99]{ForRog99}
Lance Fortnow and John Rogers.
\newblock Complexity limitations on quantum computation.
\newblock {\em Journal of Computer and System Sciences}, 59:240--252, 1999.

\bibitem[GGH{\etalchar{+}}16]{SICOMP:GGHRSW16}
Sanjam Garg, Craig Gentry, Shai Halevi, Mariana Raykova, Amit Sahai, and Brent
  Waters.
\newblock Candidate indistinguishability obfuscation and functional encryption
  for all circuits.
\newblock {\em {SIAM} J. Comput.}, 45(3):882--929, 2016.

\bibitem[GL89]{GL89}
Oded Goldreich and Leonid~A Levin.
\newblock A hard-core predicate for all one-way functions.
\newblock In {\em STOC}, pages 25--32. ACM, 1989.

\bibitem[Gol01]{DBLP:books/cu/Goldreich2001}
Oded Goldreich.
\newblock {\em The Foundations of Cryptography - Volume 1: Basic Techniques}.
\newblock Cambridge University Press, 2001.

\bibitem[Gol04]{DBLP:books/cu/Goldreich2004}
Oded Goldreich.
\newblock {\em The Foundations of Cryptography - Volume 2: Basic Applications}.
\newblock Cambridge University Press, 2004.

\bibitem[GS20]{STOC:GriSch20}
Daniel Grier and Luke Schaeffer.
\newblock Interactive shallow {Clifford} circuits: quantum advantage against
  {$\text{NC}^1$} and beyond.
\newblock In Konstantin Makarychev, Yury Makarychev, Madhur Tulsiani, Gautam
  Kamath, and Julia Chuzhoy, editors, {\em 52nd ACM STOC}, pages 875--888.
  {ACM} Press, June 2020.

\bibitem[GW11]{STOC:GenWic11}
Craig Gentry and Daniel Wichs.
\newblock Separating succinct non-interactive arguments from all falsifiable
  assumptions.
\newblock In Lance Fortnow and Salil~P. Vadhan, editors, {\em 43rd ACM STOC},
  pages 99--108. {ACM} Press, June 2011.

\bibitem[HKEG19]{HKEG19}
Dominik Hangleiter, Martin Kliesch, Jens Eisert, and Christian Gogolin.
\newblock Sample complexity of device-independently certified “quantum
  supremacy”.
\newblock {\em Physical Review Letters}, 122:21050, 2019.

\bibitem[HM96]{C:HalMic96}
Shai Halevi and Silvio Micali.
\newblock Practical and provably-secure commitment schemes from collision-free
  hashing.
\newblock In Neal Koblitz, editor, {\em CRYPTO'96}, volume 1109 of {\em
  {LNCS}}, pages 201--215. Springer, Heidelberg, August 1996.

\bibitem[HNO{\etalchar{+}}09]{SIAM:HNORV09}
Iftach Haitner, Minh-Huyen Nguyen, Shien~Jin Ong, Omer Reingold, and Salil
  Vadhan.
\newblock Statistically hiding commitments and statistical zero-knowledge
  arguments from any one-way function.
\newblock {\em {SIAM} J. Comput.}, 39(3):1153--1218, 2009.

\bibitem[HT98]{C:HadTan98}
Satoshi Hada and Toshiaki Tanaka.
\newblock On the existence of 3-round zero-knowledge protocols.
\newblock In Hugo Krawczyk, editor, {\em CRYPTO'98}, volume 1462 of {\em
  {LNCS}}, pages 408--423. Springer, Heidelberg, August 1998.

\bibitem[HY20]{AC:HosYam20}
Akinori Hosoyamada and Takashi Yamakawa.
\newblock Finding collisions in a quantum world: Quantum black-box separation
  of collision-resistance and one-wayness.
\newblock In Shiho Moriai and Huaxiong Wang, editors, {\em ASIACRYPT~2020,
  Part~I}, volume 12491 of {\em {LNCS}}, pages 3--32. Springer, Heidelberg,
  December 2020.

\bibitem[IL89]{FOCS:ImpLub89}
Russell Impagliazzo and Michael Luby.
\newblock One-way functions are essential for complexity based cryptography
  (extended abstract).
\newblock In {\em 30th FOCS}, pages 230--235. {IEEE} Computer Society Press,
  October~/~November 1989.

\bibitem[Jer16]{Jerabek16}
Emil Jer{\'{a}}bek.
\newblock Integer factoring and modular square roots.
\newblock {\em J. Comput. Syst. Sci.}, 82(2):380--394, 2016.

\bibitem[JLS18]{C:JiLiuSon18}
Zhengfeng Ji, Yi-Kai Liu, and Fang Song.
\newblock Pseudorandom quantum states.
\newblock In Hovav Shacham and Alexandra Boldyreva, editors, {\em CRYPTO~2018,
  Part~III}, volume 10993 of {\em {LNCS}}, pages 126--152. Springer,
  Heidelberg, August 2018.

\bibitem[KLVY22]{cryptoeprint:2022/400}
Yael~Tauman Kalai, Alex Lombardi, Vinod Vaikuntanathan, and Lisa Yang.
\newblock Quantum advantage from any non-local game.
\newblock Cryptology ePrint Archive, Paper 2022/400, 2022.
\newblock \url{https://eprint.iacr.org/2022/400}.

\bibitem[KMCVY22]{NatPhys:KMCVY22}
Gregory~D. Kahanamoku-Meyer, Soonwon Choi, Umesh~V. Vazirani, and Norman~Y.
  Yao.
\newblock Classically verifiable quantum advantage from a computational bell
  test.
\newblock {\em Nature Physics}, 2022.

\bibitem[KNY18]{EC:KomNaoYog18}
Ilan Komargodski, Moni Naor, and Eylon Yogev.
\newblock Collision resistant hashing for paranoids: Dealing with multiple
  collisions.
\newblock In Jesper~Buus Nielsen and Vincent Rijmen, editors, {\em
  EUROCRYPT~2018, Part~II}, volume 10821 of {\em {LNCS}}, pages 162--194.
  Springer, Heidelberg, April~/~May 2018.

\bibitem[KQST22]{KreQiaSinTal22}
William Kretschmer, Luowen Qian, Makrand Sinha, and Avishay Tal.
\newblock Quantum cryptography in algorithmica.
\newblock {\em arXiv:2212.00879}, 2022.

\bibitem[Kre21]{Kre21}
W.~Kretschmer.
\newblock Quantum pseudorandomness and classical complexity.
\newblock {\em TQC 2021}, 2021.

\bibitem[KY18]{C:KomYog18}
Ilan Komargodski and Eylon Yogev.
\newblock On distributional collision resistant hashing.
\newblock In Hovav Shacham and Alexandra Boldyreva, editors, {\em CRYPTO~2018,
  Part~II}, volume 10992 of {\em {LNCS}}, pages 303--327. Springer, Heidelberg,
  August 2018.

\bibitem[LLQ22]{ITCS:LiuLiuQian22}
Jiahui Liu, Qipeng Liu, and Luowen Qian.
\newblock Beating classical impossibility of position verification.
\newblock ITCS 2022: 13rd Innovations in Theoretical Computer Science, 2022.

\bibitem[Mah18]{FOCS:Mahadev18b}
Urmila Mahadev.
\newblock Classical homomorphic encryption for quantum circuits.
\newblock In Mikkel Thorup, editor, {\em 59th FOCS}, pages 332--338. {IEEE}
  Computer Society Press, October 2018.

\bibitem[Mic00]{SICOMP:Micali00}
Silvio Micali.
\newblock Computationally sound proofs.
\newblock {\em {SIAM} J. Comput.}, 30(4):1253--1298, 2000.

\bibitem[Mor17]{Morimae17}
Tomoyuki Morimae.
\newblock Hardness of classically sampling the one-clean-qubit model with
  constant total variation distance error.
\newblock {\em Physical Review A}, 96:040302(R), 2017.

\bibitem[MY22a]{EPRINT:MorYam22c}
Tomoyuki Morimae and Takashi Yamakawa.
\newblock One-wayness in quantum cryptography.
\newblock Cryptology ePrint Archive, Report 2022/1336, 2022.
\newblock \url{https://eprint.iacr.org/2022/1336}.

\bibitem[MY22b]{C:MorYam22}
Tomoyuki Morimae and Takashi Yamakawa.
\newblock Quantum commitments and signatures without one-way functions.
\newblock In Yevgeniy Dodis and Thomas Shrimpton, editors, {\em CRYPTO~2022,
  Part~I}, volume 13507 of {\em {LNCS}}, pages 269--295. Springer, Heidelberg,
  August 2022.

\bibitem[MY23]{ITCS:MorYam23}
Tomoyuki Morimae and Takashi Yamakawa.
\newblock Proofs of quantumness from trapdoor permutations.
\newblock ITCS 2023: 14th Innovations in Theoretical Computer Science (ITCS).,
  2023.

\bibitem[Nao03]{C:Naor03}
Moni Naor.
\newblock On cryptographic assumptions and challenges (invited talk).
\newblock In Dan Boneh, editor, {\em CRYPTO~2003}, volume 2729 of {\em {LNCS}},
  pages 96--109. Springer, Heidelberg, August 2003.

\bibitem[NOVY93]{C:NOVY92}
Moni Naor, Rafail Ostrovsky, Ramarathnam Venkatesan, and Moti Yung.
\newblock Perfect zero-knowledge arguments for {NP} can be based on general
  complexity assumptions (extended abstract).
\newblock In Ernest~F. Brickell, editor, {\em CRYPTO'92}, volume 740 of {\em
  {LNCS}}, pages 196--214. Springer, Heidelberg, August 1993.

\bibitem[OV08]{TCC:OngVad08}
Shien~Jin Ong and Salil~P. Vadhan.
\newblock An equivalence between zero knowledge and commitments.
\newblock In Ran Canetti, editor, {\em TCC~2008}, volume 4948 of {\em {LNCS}},
  pages 482--500. Springer, Heidelberg, March 2008.

\bibitem[OW93]{OW93}
Rafail Ostrovsky and Avi Wigderson.
\newblock One-way fuctions are essential for non-trivial zero-knowledge.
\newblock In {\em Second Israel Symposium on Theory of Computing Systems,
  {ISTCS} 1993, Natanya, Israel, June 7-9, 1993, Proceedings}, pages 3--17.
  {IEEE} Computer Society, 1993.

\bibitem[RTV04]{TCC:ReiTreVad04}
Omer Reingold, Luca Trevisan, and Salil~P. Vadhan.
\newblock Notions of reducibility between cryptographic primitives.
\newblock In Moni Naor, editor, {\em TCC~2004}, volume 2951 of {\em {LNCS}},
  pages 1--20. Springer, Heidelberg, February 2004.

\bibitem[Sho94]{FOCS:Shor94}
Peter~W. Shor.
\newblock Algorithms for quantum computation: Discrete logarithms and
  factoring.
\newblock In {\em 35th FOCS}, pages 124--134. {IEEE} Computer Society Press,
  November 1994.

\bibitem[Sta22]{stackexchange}
StackExchange.
\newblock
  \url{https://math.stackexchange.com/questions/4393348/pairwise-independent-balls-in-bins-probability-of-an-empty-bin}.
\newblock 2022.

\bibitem[SV03]{JACM:SahVad03}
Amit Sahai and Salil~P. Vadhan.
\newblock A complete problem for statistical zero knowledge.
\newblock {\em J. {ACM}}, 50(2):196--249, 2003.

\bibitem[TD04]{TD04}
B.~M. Terhal and D.~P. DiVincenzo.
\newblock Adaptive quantum computation, constant-depth circuits and
  arthur-merlin games.
\newblock {\em Quant. Inf. Comput.}, 4(2):134--145, 2004.

\bibitem[Vad06]{SIAM:Vadhan06}
Salil Vadhan.
\newblock An unconditional study of computational zero knowledge.
\newblock {\em {SIAM} J. Comput.}, 36(4):1160--1214, 2006.

\bibitem[VV86]{VV86}
Leslie~G. Valiant and Vijay~V. Vazirani.
\newblock {NP} is as easy as detecting unique solutions.
\newblock {\em Theor. Comput. Sci.}, 47(3):85--93, 1986.

\bibitem[WKST19]{STOC:WKST19}
Adam~Bene Watts, Robin Kothari, Luke Schaeffer, and Avishay Tal.
\newblock Exponential separation between shallow quantum circuits and unbounded
  fan-in shallow classical circuits.
\newblock In Moses Charikar and Edith Cohen, editors, {\em 51st ACM STOC},
  pages 515--526. {ACM} Press, June 2019.

\bibitem[YZ22]{FOCS:YamZha22}
Takashi Yamakawa and Mark Zhandry.
\newblock Verifiable quantum advantage without structure.
\newblock FOCS 2022: 63rd IEEE Symposium on Foundations of Computer Science,
  2022.

\bibitem[Zha15]{Zhandry15}
Mark Zhandry.
\newblock A note on the quantum collision and set equality problems.
\newblock {\em Quantum Inf. Comput.}, 15(7{\&}8):557--567, 2015.

\end{thebibliography}

\ifnum\submission=1
\ifnum\cameraready=1
\else
\appendix

\section{Proof of \cref{thm:gap_amplification_sequential}}
\label{sec:gap_amplification}

\begin{proof}[Proof of \cref{thm:gap_amplification_sequential}]
We focus on the case of IV-PoQ since it is almost identical for (AI-/IO-)IV-PoQ. 
First, $1$-completeness of $\Pi^{\Nseq}$ immediately follows from Hoeffding's inequality. (Recall that $1$-completeness means that the honest prover's acceptance probability is at least $1-\negl(\secp)$.)  
In the following, we prove that ${\Pi}^{\Nseq}$ satisfies $0$-soundness.
Suppose that it does not satisfy $0$-soundness. 
Then, there is a PPT malicious prover ${\cP}^{\Nseq^*}$ against $\Pi^{\Nseq}$ and a polynomial $p$ such that 
\begin{align}
\Pr[\top\gets\cV_2(I_1,\ldots,I_N):(I_1,\ldots,I_N)\gets\langle {\cP^{\Nseq}}^*(1^\secp),\cV_1^{\Nseq}(1^\secp)\rangle]\ge \frac{1}{p(\secp)}
\end{align}
for infinitely many $\secp$.  
We define random variables $X_1, \ldots, X_N$ as in \cref{protocol:sequential_PoQ}, i.e., $X_i=1$ if 
$\cV_2$ accepts the $i$-th transcript $I_i$ and otherwise $X_i=0$.  
Using this notation, the above inequality can be rewritten as 
\begin{align}\label{eq:sum_X}
\Pr\left[\frac{\sum_{i\in [N]}X_i}{N}\ge \frac{c(\secp)+s(\secp)}{2}\right]\ge \frac{1}{p(\secp)}
\end{align}
for infinitely many $\secp$.
For $i\in [N]$, let $X'_i$ be an independent random variable over $\bit$ such that $\Pr[X'_i=1]=s(\secp)+\frac{1}{2Np(\secp)}$. 
Noting that $\frac{c(\secp)+s(\secp)}{2}-(s(\secp)+\frac{1}{2Np(\secp)})\ge \frac{c(\secp)-s(\secp)}{4}$ for sufficiently large $\secp$,\footnote{ 
This can be seen as follows:
$\frac{c(\secp)+s(\secp)}{2}-(s(\secp)+\frac{1}{2Np(\secp)})=\frac{c(\secp)-s(\secp)}{2}-\frac{1}{2Np(\secp)}\ge \frac{c(\secp)-s(\secp)}{2}-\frac{1}{2N} 
\ge \frac{c(\secp)-s(\secp)}{2}-\frac{(c(\secp)-s(\secp))^2}{2\secp} 
\ge \frac{c(\secp)-s(\secp)}{2}-\frac{c(\secp)-s(\secp)}{4}
=\frac{c(\secp)-s(\secp)}{4}
$
for sufficiently large $\secp$ where the first inequality follows from the assumption $p(\secp)\ge 1$, 
the second inequality follows from $N\geq \frac{\secp}{(c(\secp)-s(\secp))^{2}}$, 
the third inequality follows from $2\secp\ge 4$ and $0\le c(\secp)-s(\secp)\le 1$ for sufficiently large $\secp$.
} 
by Hoeffding's inequality, we have 
\begin{align}\label{eq:sum_X_prime}
\Pr\left[\frac{\sum_{i\in [N]}X'_i}{N}\ge \frac{c(\secp)+s(\secp)}{2}\right]\le  \negl(\secp).
\end{align}
Moreover, we prove below that for any $k\in [N]$, we have 
\begin{align}
&\Pr\left[\frac{\sum_{i=1}^{k}X_i+\sum_{i=k+1}^{N}X'_i}{N}\ge \frac{c(\secp)+s(\secp)}{2}\right]\nonumber\\
&-
\Pr\left[\frac{\sum_{i=1}^{k-1}X_i+\sum_{i=k}^{N}X'_i}{N}\ge \frac{c(\secp)+s(\secp)}{2}\right]
\le
\frac{1}{2Np(\secp)}
\label{eq:sum_X_hybrid}
\end{align}
for sufficiciently large $\secp$. 
By a standard hybrid argument, \cref{eq:sum_X,eq:sum_X_hybrid} imply 
\begin{align}
\Pr\left[\frac{\sum_{i\in [N]}X'_i }{N}\ge \frac{c(\secp)+s(\secp)}{2}\right]\ge \frac{1}{2p(\secp)}.
\end{align}
for infinitely many $\secp$.
This contradicts \cref{eq:sum_X_prime}. 
Thus, we only have to prove \cref{eq:sum_X_hybrid} holds for all $k\in [N]$ and sufficiently large $\secp$. 

\paragraph{\bf Proof of \cref{eq:sum_X_hybrid}.}
Let $\cP^*_k$ be a malicious prover against $\Pi$ that works as follows:
$\cP^*_k$  first simulates the interaction between ${\cP}^{\Nseq^*}$ and $\cV_1^{\Nseq}$ for the first $k-1$ executions of $\Pi$ where it also simulates $\cV_1^{\Nseq}$ by itself in this phase. 
Then $\cP^*_k$ starts interaction with the external verifier $\cV_1$ of $\Pi$ where it works similarly to ${\cP}^{\Nseq^*}$ in the $k$th execution of $\Pi$.  
Note that the randomness of $\cP^*_k$ consists of the randomness $r_P$ for ${\cP}^{\Nseq^*}$ and randomness $r_V^{k-1}$ for $\cV_1^{\Nseq}$ for the first $k-1$ executions of $\Pi$. 
Therefore, by applying $s$-strong-soundness of $\Pi$ for the above malicious prover $\cP^*_k$, 
there is a set of $\left(1-\frac{1}{2Np(\secp)}\right)$-fraction of $(r_P,r_V^{k-1})$,
 which we denote by $\mathcal{G}_{k-1}$, such that for all $(r_P,r_V^{k-1})\in \mathcal{G}_{k-1}$, 
\begin{align}
\Pr[X_k=1|r_P,r_V^{k-1}]\le s(\secp)+ \frac{1}{2Np(\secp)} \label{eq:prob_Xk_is_1}
\end{align}
for sufficiently large $\secp$,
where $\Pr[X_k=1|r_P,r_V^{k-1}]$ means the conditional probability that $X_k=1$ occurs conditioned on the fixed values of $(r_P,r_V^{k-1})$. 
On the other hand, because $X'_k$ is independent of $(r_P,r_V^{k-1})$, 
\begin{align}
\Pr[X'_k=1|r_P,r_V^{k-1}]=\Pr[X'_k=1]= s(\secp)+ \frac{1}{2Np(\secp)} \label{eq:prob_Xk_prime_is_1}
\end{align}
for any fixed $(r_P,r_V^{k-1})$.

For notational simplicity, we denote the events that 
\begin{equation}
\frac{\sum_{i=1}^{k}X_i+\sum_{i=k+1}^{N}X'_i}{N}\ge \frac{c(\secp)+s(\secp)}{2} 
\end{equation}
and 
\begin{equation}
\frac{\sum_{i=1}^{k-1}X_i+\sum_{i=k}^{N}X'_i}{N}\ge \frac{c(\secp)+s(\secp)}{2} 
\end{equation}
by $E_k$ and $E_{k-1}$, respectively. 
Then for any $(r_P,r_V^{k-1})\in \mathcal{G}_{k-1}$, 
\begin{align}
&\Pr\left[E_{k} \middle| (r_P,r_V^{k-1})\right] \label{eq_Sk}\\
=&\Pr\left[\frac{X_{k}}{N}\ge \frac{c(\secp)+s(\secp)}{2}-\frac{\sum_{i=1}^{k-1}X_i+\sum_{i=k+1}^{N}X'_i}{N} \middle| (r_P,r_V^{k-1})\right]\\
\le&\Pr\left[\frac{X'_{k}}{N}\ge \frac{c(\secp)+s(\secp)}{2}-\frac{\sum_{i=1}^{k-1}X_i+\sum_{i=k+1}^{N}X'_i}{N} \middle| (r_P,r_V^{k-1})\right] \label{eq:larger_prob_is_zero} \\
=&\Pr\left[E_{k-1} \middle| (r_P,r_V^{k-1})\right] \label{eq_Sk-1}
\end{align}
for sufficiently large $\secp$ where 
\cref{eq:larger_prob_is_zero} follows from \cref{eq:prob_Xk_is_1,eq:prob_Xk_prime_is_1} 
and the observations that 
$X_1,\ldots,X_{k-1}$ are determined by $(r_P,r_V^{k-1})$ and $X'_{k+1},\ldots,X'_{N}$ are independent of 
$X_k$ or $X'_k$.

Then, for sufficiently large $\secp$, 
we have 
\begin{align}
&
\Pr\left[E_{k}\right]-\Pr\left[E_{k-1}\right]\\
=&\left(\Pr\left[E_{k}~\land~ (r_P,r_V^{k-1})\in \mathcal{G}_{k-1}\right]-\Pr\left[E_{k-1}~\land~ (r_P,r_V^{k-1})\in \mathcal{G}_{k-1}\right]\right)\\
&+\left(\Pr\left[E_{k}~\land~ (r_P,r_V^{k-1})\notin \mathcal{G}_{k-1}\right]-\Pr\left[E_{k-1}~\land~ (r_P,r_V^{k-1})\notin \mathcal{G}_{k-1}\right]\right)\\
\le&
\Pr\left[(r_P,r_V^{k-1})\in \mathcal{G}_{k-1}\right]\cdot
\left(\Pr\left[E_{k} \middle| (r_P,r_V^{k-1})\in \mathcal{G}_{k-1}\right]-\Pr\left[E_{k-1} \middle| (r_P,r_V^{k-1})\in \mathcal{G}_{k-1}\right]\right)\\
&+\Pr\left[(r_P,r_V^{k-1})\notin \mathcal{G}_{k-1}\right]\\
\le& \frac{1}{2Np(\secp)}, \label{eq:prob_good_k-1}
\end{align}
where 
\cref{eq:prob_good_k-1} follows from Equations (\ref{eq_Sk})-(\ref{eq_Sk-1}) and 
the fact that $\mathcal{G}_{k-1}$ consists of $\left(1-\frac{1}{2Np(\secp)}\right)$-fraction of $(r_P,r_V^{k-1})$. 
This implies \cref{eq:sum_X_hybrid} and completes the proof of \cref{thm:gap_amplification_sequential}.
\end{proof}

\fi
\fi

\end{document}